\begin{document}
\sloppy
\allowdisplaybreaks[1]

\newtheorem{thm}{Theorem} 
\newtheorem{lem}{Lemma}
\newtheorem{prop}{Proposition}
\newtheorem{cor}{Corollary}
\newtheorem{defn}{Definition}
\newcommand{\remarkend}{\IEEEQEDopen}
\newtheorem{remark}{Remark}
\newtheorem{rem}{Remark}
\newtheorem{ex}{Example}
\newtheorem{pro}{Property}

\newenvironment{example}[1][Example]{\begin{trivlist}
\item[\hskip \labelsep {\bfseries #1}]}{\end{trivlist}}

\renewcommand{\qedsymbol}{ \begin{tiny}$\blacksquare$ \end{tiny} }

\renewcommand{\leq}{\leqslant}
\renewcommand{\geq}{\geqslant}

\title {The Gaussian Multiple Access Wiretap Channel with Selfish Transmitters:\\ A Coalitional Game Theory Perspective} 

\author{\IEEEauthorblockN{R\'emi A. Chou and  Aylin Yener}

\thanks{R\'{e}mi A. Chou is with the Department of Computer Science and Engineering, The University of Texas at Arlington, Arlington, TX 76019. Aylin Yener is with the Department of Electrical and Computer Engineering, The Ohio State University, Columbus, OH 43210. E-mails: remi.chou@uta.edu, yener@ece.osu.edu. A preliminary version of this work was presented to the 2017 IEEE International Symposium on Information Theory (ISIT) in \cite{ISIT173}. This work was supported in part by NSF grants CIF-1319338, CNS-1314719, and CCF-2401373. 
}
}

\maketitle

\begin{abstract}
This paper considers the Gaussian multiple access wiretap channel (GMAC-WT) with selfish transmitters, i.e., who are each solely interested in maximizing their individual secrecy rate. The question then arises as to whether selfish transmitters can increase their individual secrecy rate by participating in a collective, i.e., multiple access, protocol instead of operating on their own. If yes, the question arises whether there is a protocol that satisfies all the participating transmitters simultaneously, in the sense that no transmitter has an incentive to deviate from the protocol. 
Utilizing coalitional game theory, these questions are addressed for the degraded GMAC-WT with an arbitrary number of transmitters and for the non-degraded GMAC-WT with two transmitters. In particular, for the degraded GMAC-WT, cooperation is shown to be in the best interest of all transmitters, and the existence of protocols that incentivize all transmitters to participate is established. Furthermore, a unique, fair, stable, and achievable secrecy rate~allocation is determined. For the non-degraded GMAC-WT, depending on the channel parameters, there are cases where cooperation is not in the best interest of all transmitters, and cases where it is. In the latter cases, a unique, fair, stable, and achievable secrecy rate~allocation is determined.
\end{abstract} 
\begin{IEEEkeywords}
\noindent{}Gaussian multiple access wiretap channel, adversarial jamming, coalitional game theory
\end{IEEEkeywords}

\section{Introduction}

We study secure communication over a Gaussian multiple access wiretap channel (GMAC-WT)~\cite{tekin2008gaussian,tekin2008general}. 
The information-theoretic problem formulation for secure communication over the GMAC-WT enables establishing fundamental limits of achievable rate-tuples, under the assumption of altruistic legitimate entities. That is, the underlying assumption is one of full cooperation where transmitters work together to achieve the largest secure rate region. An equally valid scenario could be that the transmitters are interested only in maximizing {\it their individual} secure rates. This can lead to a conflict of interests and fairness issues among transmitters as they try to capture limited resources for their own benefit.  Only certain rate-tuples, if any, would be acceptable by selfish transmitters. 
A large body of the literature has considered similar questions in multiuser communication problems by means of game theory, see, for instance,  \cite{la2004game,gajic2008game,zhu2009constrained,perlaza2013equilibria,ge2015secure,amor2016decentralized} for  Gaussian multiple access channels and~\cite{leshem2008cooperative,mathur2006coalitional,berry2011shannon,liu2011game,jorswieck2008complete,larsson2008competition,rose2011nash,quintero2017nash,quintero2018approximate} for interference channels. We also refer to~\cite{saad2009coalitional,lasaulce2011game,han2012game}  and references therein for the treatment of broader classes of multiuser communication problems.

Our contribution can be summarized as follows. (i) We cast the problem of selfish transmitters over the GMAC-WT as a coalitional game~\cite{peleg2007introduction,osborne1994course,myerson2013game}  in which the  value function is determined under information-theoretic
guarantees, i.e., the value associated with a coalition is computed with no restrictions on the strategies that the transmitters outside the coalition can adopt. (ii) For the degraded GMAC-WT, we show that there exist collective protocols for which no transmitter has an incentive to deviate, i.e., it is in the best interest for selfish transmitters to collaborate. In particular, we show that the core of the game we have defined is non-empty and intersects known achievable regions for the GMAC-WT. Using an axiomatic solution concept, we determine a unique, fair, stable, and achievable secrecy rate allocation. (iii)~For the non-degraded GMAC-WT with two transmitters, we show that, depending on the channel parameters, cooperation may or may not in the best interest of all transmitters. When cooperation is in the best interest of all transmitters, we generalize the axiomatic solution concept used in the degraded case and determine a unique, fair, stable, and achievable secrecy rate allocation.

\subsection{Related work}
\emph{Related work on the GMAC-WT}:
The GMAC-WT has first been introduced in \cite{tekin2008gaussian} for degraded channels to account for the presence of an eavesdropper at the physical layer in multiple access communication, and to provide information-theoretic security guarantees against such an adversary.  This model has further been studied in the case of non-degraded channels in~\cite{tekin2008general}. \cite{tekin2008gaussian,tekin2008general} and most subsequent works focused on characterizing the capacity region for this model under the assumption that all the transmitters are willing to participate in a joint protocol. By contrast, in this paper, we assume that the transmitters are selfish, i.e., will only agree to participate in a joint multiuser protocol if it benefit them through a higher communication rate. We treat this problem by means of cooperative game theory. Note that the {GMAC-WT} with selfish transmitters is  also considered  in \cite{ge2015secure} but via non-cooperative game theory and with the assumption that all transmitters follow a pre-determined transmission strategy. Our paper is also related to \cite{banawan2016achievable}, as we characterize the worst behavior that a group of transmitters can adopt to prevent confidential communication between the other transmitters and the legitimate receiver. Our study contrasts with~\cite{ge2015secure,banawan2016achievable}, as we allow \emph{any communication strategy} for the transmitters that might be unwilling to participate in a collective protocol.

\emph{Other related work}: In this paper, we treat the problem of selfish transmitters over the \mbox{GMAC-WT} by means of a coalitional game theory framework. We refer to~\cite{peleg2007introduction,osborne1994course,myerson2013game} for an introduction to coalitional game theory, and to \cite{saad2009coalitional} for a review of some of its applications to telecommunications. 
The coalitional game we define is inspired by the game formulation of~\cite{la2004game} for the Gaussian multiple access channel in the absence of an eavesdropper and thus security constraints. Hence, the setting in \cite{la2004game} is recovered as a special case of our game. Our results show that, similar to the Gaussian multiple access channel, for the degraded Gaussian multiple access wiretap channel, the participation of all transmitters in a joint protocol is in their best interest to maximize their secrecy rate. However, our study contrasts this result by showing that, for non-degraded Gaussian multiple access wiretap channels, this is no longer always the case. Specifically, for a two-transmitter non-degraded multiple access wiretap channel, we determine sufficient conditions that ensure 
 that the participation of all the transmitters in a joint protocol is in their best interest. Additionally, we provide an example of a non-degraded channel for which the participation of both transmitters in a joint protocol is not in their best interest.  Finally, note that a coalitional game theory approach to information-theoretic security has also been used in the context of many-to-one secret key generation in~\cite{ISIT172}. However, as a main difference, the determination of the value function of the game in \cite{ISIT172} relies on information disclosure threats, whereas it relies on jamming threats in this paper. Additionally, unlike in \cite{ISIT172}, the resulting game in the present study is not convex~\cite{shapley1971cores}, which significantly complexifies its study, and a natural axiomatic definition of fairness can be formulated and studied.

\subsection{Organization of the paper}
The remainder of the paper is organized as follows. Before we present our model for the GMAC-WT with selfish transmitters, we review a pivotal auxiliary problem: the GMAC-WT with adversarial jammers in Section~\ref{sec:modelmac}. This auxiliary problem allows us to understand what is the worst case case scenario when transmitters refuse to cooperate. Our main problem, the GMAC-WT with selfish transmitters is treated in Section~\ref{sec:game}, which deals with the degraded case, and in Section~\ref{sec:game2}, which deals with the non-degraded case. We end the paper with concluding remarks in Section~\ref{sec:concl}.

\subsection{Notation}\label{sec:notation}
Throughout the paper, define $\llbracket a, b\rrbracket \triangleq [\lfloor a \rfloor , \lceil b \rceil] \cap \mathbb{N}$. The components of a vector, $X^{n}$, of size $n\in\mathbb{N}$, are denoted by subscripts, i.e., $X^{n} \triangleq (X_1 , X_2, \ldots, X_{n})$. 
 For $x \in \mathbb{R}$,  define $[x]^+ \triangleq \max(0,x)$.
The power set of $\mathcal{S}$ is denoted by $2^{\mathcal{S}}$. Unless specified otherwise, capital letters designate random variables, whereas lowercase letters designate realizations of associated random variables, e.g., $x$ is a realization of the random variable $X$. For $R \geq 0$, $n \in\mathbb{N}^*$, $\mathbb{B}^n_0(R)$ denotes the ball of radius $R $ centered in $0$ in $\mathbb{R}^n$ under the Euclidian norm. For any set $\mathcal{S} \subset \mathbb{N}$, and any sequence $(R_s)_{s \in \mathcal{S}}$ of real numbers, the notation $R_{\mathcal{S}}$ denotes the sum $\sum_{s \in \mathcal{S}}R_s$. 
\section{Auxiliary problem: Gaussian multiple access wiretap channel with adversarial jammers} \label{sec:modelmac}
We review in this section the GMAC-WT with adversarial jammers.  We present the model and review known results in Sections \ref{sec:model} and \ref{sec:res}, respectively. This auxiliary problem will be used in our model for the GMAC-WT with selfish transmitters in Sections \ref{sec:game}, \ref{sec:game2}.

\subsection{Model} \label{sec:model}
Reference \cite{tekin2008gaussian} introduces and studies a degraded Gaussian multiple access wiretap channel with several transmitters in the presence of an eavesdropper. We consider a similar Gaussian multiple access wiretap channel with additional adversarial jammers that help the eavesdropper to minimize the secrecy rates between the legitimate transmitters and the receiver. Specifically, we assume that the collective signal emitted by the jammers is known by the eavesdropper, who is able to cancel it out from its observations. We also assume that the collective signal emitted by the jammers is prescribed by a power constraint. However, the jamming strategy is arbitrary   and unknown to the legitimate transmitters and the receiver.  
 Such modeling is introduced in \cite{chou21} and follows the works in~\cite{csiszar1991capacity} for the Gaussian arbitrarily varying  channel and in \cite{la2004game} for the Gaussian arbitrarily varying multiple access channel. 
 
For completeness, we review the model and specific results of interests from \cite{chou21} for our study.
 In the remainder of the paper, we let $\mathcal{L} \triangleq \llbracket 1, L \rrbracket$ denote the set of transmitters. We consider the following channel model,  
 \begin{subequations}
\begin{align}
Y^n & \triangleq \sum_{l \in \mathcal{L}} X_l^n + S^n + N_Y^n, \label{eqmod1}\\
Z^n& \triangleq \sum_{l \in \mathcal{L}} \sqrt{h_l} X_l^n + N_{Z}^n, \label{eqmod2}
\end{align}
\end{subequations}
where $Y^n$ is the channel output observed by the legitimate receiver, $Z^n$ is the modified channel output observed by the eavesdropper after cancellation of the jamming signal $S^n$, $S^n$ is an arbitrary jamming sequence emitted by the eavesdropper satisfying the power constraint $ \Vert S^n \Vert^2 \triangleq \sum_{i=1}^n S_i^2 \leq n \Lambda$ , for $l\in\mathcal{L}$, $X^n_l$ is the signal emitted by transmitter $l$ satisfying the power constraint $ \Vert X_l^n \Vert^2 \triangleq \sum_{i=1}^n X_i^2 \leq n \Gamma_l$, and $N_{Y}^n$ and $N_{Z}^n$ are sequences of  independent and identically distributed Gaussian noises with unit variances $\sigma^2_Y$, $\sigma^2_Z$, respectively.   
We refer to this model as the Gaussian multiple access wiretap  channel with adversarial jammers (GMAC-WT-AJ) with parameters $((\Gamma_l)_{l\in \mathcal{L}},(h_l)_{l \in \mathcal{L}}, \Lambda, \sigma^2_Y, \sigma^2_Z)$. When the channel gains $(h_l)_{l \in \mathcal{L}}$ are all equal to $h \in [0,1[$, we refer to this model as the degraded GMAC-WT-AJ with parameters $((\Gamma_l)_{l\in \mathcal{L}},h, \Lambda, \sigma^2_Y, \sigma^2_Z)$.

We define a coding scheme and achievable rates for our channel model following the scheme over multiple encoding blocks of~\cite{la2004game} to allow time-sharing. 
\begin{defn} \label{def:WT1}
Let $n,k \in \mathbb{N}$. A $ \left( (2^{nR_l})_{l\in \mathcal{L}}, n,k\right)$ code $\mathfrak{C}_n$ for the GMAC-WT-AJ
 consists for each $i \in \llbracket 1 , k \rrbracket$ of
\begin{itemize}
\item $L$ messages sets $\mathcal{M}^{(i)}_l \triangleq \llbracket 1 , 2^{nR^{(i)}_l} \rrbracket$, $l \in \mathcal{L}$;
\item $L$ stochastic encoders, $f^{(i)}_{l} : \mathcal{M}_l \to \mathbb{B}_0^n(\sqrt{n\Gamma_l}) $, $l \in \mathcal{L}$, which maps a uniformly distributed message $M^{(i)}_l \in  \mathcal{M}^{(i)}_l$ to a codeword of length $n$;
\item One  decoder,  $g^{(i)} : \mathbb{R}^n \to \bigtimes_{l \in \mathcal{L}} \mathcal{M}^{(i)}_l$, which maps a sequence of $n$ channel output observations
to an estimate $\left( \widehat{M}^{(i)}_l \right)_{l \in \mathcal{L}}$ of the messages $ \left(M^{(i)}_l\right)_{l \in \mathcal{L}}$;
\end{itemize}
where for any $l \in \mathcal {L}$, $R_l \triangleq \frac{1}{k} \sum_{i=1}^k R^{(i)}_l$, and operates as follows.
For each $i \in \llbracket 1, k \rrbracket$, Transmitter $l \in \mathcal{L}$, encodes the message  $M^{(i)}_l$ with $f_l^{(i)}$, and sends the encoded message to the legitimate receiver over the channel defined by \eqref{eqmod1}, \eqref{eqmod2} with power constraint $n \Lambda $ for the jamming signal $S^n_i$.  The legitimate receiver forms from his observations the estimate of $\left( \widehat{M}^{(i)}_l \right)_{l \in \mathcal{L}}$ of the messages $ \left(M^{(i)}_l\right)_{l \in \mathcal{L}}$. We define $\widehat{M}_{\mathcal{L}} \triangleq \left( \widehat{M}^{(i)}_l \right)_{l \in \mathcal{L}, i \in \llbracket 1 , k \rrbracket}$, ${M}_{\mathcal{L}} \triangleq \left(M^{(i)}_l\right)_{l \in \mathcal{L}, i \in \llbracket 1 , k \rrbracket}$, $S \triangleq (S^n_i)_{i \in \llbracket 1 , k \rrbracket}$, $\mathcal{S} \triangleq \{ (S^n_i)_{i \in \llbracket 1 , k \rrbracket} : \lVert S^n_i \lVert^2 \leq n \Lambda, \forall i \in \llbracket 1 , k \rrbracket \}$. 
\end{defn}

\begin{defn} \label{def:WT2} 
A rate tuple $(2^{nR_l})_{l\in \mathcal{L}}$ is achievable, if there exists a sequence of $( (2^{nR_l})_{l\in \mathcal{L}}, n,k)$ codes $\mathfrak{C}_n$ for the GMAC-WT-AJ 
such~that
\begin{subequations}
\begin{align}
\lim_{n \to \infty} \sup_{S \in \mathcal{S}}  \mathbb{P}[ \widehat{M}_{\mathcal{L}} \neq {M}_{\mathcal{L}}] =0& \text{ (reliability)},\\
\lim_{n \to \infty} \frac{1}{nk} H( {M}_{\mathcal{L}}|{Z}^{kn} ) \geq \sum_{l \in\mathcal{L}} R_l & \text{ (equivocation)}. \label{eq:sec}
\end{align}
\end{subequations}
\end{defn}
 
We assume the transmitters selfish. Hence, a  transmitter that cannot transmit at a positive secrecy rate will preserve power, i.e., cooperative jamming \cite{tekin2008general} is ruled out. 
Note that the model described in Definitions \ref{def:WT1} and \ref{def:WT2} recovers the model introduced in \cite{la2004game} in the absence of the security constraint~\eqref{eq:sec}.

\subsection{Review of known results} \label{sec:res}
Given $\Lambda \in \mathbb{R}_+$ and $(\Gamma_l)_{l\in \mathcal{L}}$, we define $h_{\Lambda} = (1+\Lambda)^{-1}$, $\mathcal{L}(\Lambda) \triangleq \{ l \in \mathcal{L}: \Gamma_l > \Lambda\}$, and $
\mathcal{L}^c(\Lambda) \triangleq \mathcal{L} \backslash {\mathcal{L}(\Lambda)}.$

The following theorems provides an achievability region and a a sum-rate capacity result for the problem defined in Section~\ref{sec:model}.

\begin{thm}[\cite{chou21}]\label{thregion}
The following region is achievable for the degraded GMAC-WT-AJ with parameters $((\Gamma_l)_{l\in \mathcal{L}},h, \Lambda, 1, 1)$  
 \begin{align}
&\smash{\mathcal{R} = \!\!\!\!\bigcup_{\substack{(P_l)_{l\in \mathcal{L}} \\: \forall l\in \mathcal{L}(\Lambda), \Lambda < P_l \leq \Gamma_l
}} \!\!\!\! \left\{  \!\!\!\!\!\!\!\!\!\!\! \phantom{\frac{\sum_{\mathcal{T}}}{\Lambda}} (R_l)_{l \in \mathcal{L}} :  \forall l \in\mathcal{L}^c(\Lambda),  R_l = 0 \text{ and }  \right. } \nonumber \\
& \phantom{----------}    \nonumber\\ & \left. \phantom{}\forall \mathcal{T} \subseteq \mathcal{L}(\Lambda),R_{\mathcal{T}} \leq \left[ \frac{1}{2} \log \left( \frac{1+ h_{\Lambda}P_{\mathcal{T}}}{1+ h P_{\mathcal{T}}(1+hP_{\mathcal{T}^c})^{-1}} \right)     \right]^+  \right\}. \label{eq:reg}
\end{align}
\end{thm}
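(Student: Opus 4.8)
The plan is to fix an admissible power allocation $(P_l)_{l\in\mathcal{L}}$---one with $R_l=0$ assigned to every $l\in\mathcal{L}^c(\Lambda)$ and with $\Lambda<P_l\leq\Gamma_l$ for every $l\in\mathcal{L}(\Lambda)$---and to show that the polymatroid it defines in \eqref{eq:reg} is achievable; since achievability is closed under unions, this gives $\mathcal{R}$. The transmitters in $\mathcal{L}^c(\Lambda)$ are simply kept silent: as $\Gamma_l\leq\Lambda$, a jammer of power $\Lambda$ can symmetrize transmitter $l$'s codebook, i.e., impersonate it with a randomly drawn codeword, so no positive rate is reliably decodable for such a transmitter, and, being selfish, it saves its power rather than jamming cooperatively. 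The scheme for the remaining transmitters combines three ingredients: arbitrarily-varying-channel coding against the jammer $S^n$ at the legitimate receiver, a wiretap randomization layer against the eavesdropper, and the multi-block time-sharing of Definition~\ref{def:WT1} to realize every vertex of the polymatroid.

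For each $l\in\mathcal{L}(\Lambda)$ I would draw a codebook of rate $R_l+\widetilde R_l$ with codewords i.i.d.\ uniform on the sphere $\mathbb{B}^n_0(\sqrt{nP_l})$, split the rate into a confidential part $R_l$ and a randomization part $\widetilde R_l$, and have each transmitter pick its randomization index uniformly and independently. For reliability at the legitimate receiver I would invoke the Gaussian arbitrarily-varying multiple access machinery underlying \cite{la2004game} (in the spirit of \cite{csiszar1991capacity}): because every active transmitter uses power $P_l>\Lambda$ its codebook is non-symmetrizable, and a random-coding-plus-derandomization (``elimination'') argument shows that joint typicality decoding succeeds uniformly over all $S\in\mathcal{S}$ provided $R_{\mathcal{T}}+\widetilde R_{\mathcal{T}}\leq\frac{1}{2}\log(1+h_{\Lambda}P_{\mathcal{T}})$ for every $\mathcal{T}\subseteq\mathcal{L}(\Lambda)$, the worst jammer being equivalent to inflating the receiver noise variance to $1+\Lambda=h_{\Lambda}^{-1}$.

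For the equivocation in \eqref{eq:sec}, the key simplification is that $S^n$ is cancelled at the eavesdropper, so $Z^{kn}$ is the output of an ordinary Gaussian multiple access channel with gains $\sqrt h$ and unit noise, independent of the jamming strategy; hence $\frac{1}{nk}H(M_{\mathcal{L}}\mid Z^{kn})$ can be bounded with standard wiretap tools and the bound automatically holds uniformly over $\mathcal{S}$. Following the collective-secrecy analysis of \cite{tekin2008gaussian,tekin2008general}, the randomization rates conceal $M_{\mathcal{L}}$ from the eavesdropper as long as $\widetilde R_{\mathcal{T}}\geq\frac{1}{2}\log\big(1+hP_{\mathcal{T}}(1+hP_{\mathcal{T}^c})^{-1}\big)$ for every $\mathcal{T}\subseteq\mathcal{L}(\Lambda)$, where $\mathcal{T}^c=\mathcal{L}(\Lambda)\setminus\mathcal{T}$ and the right-hand side is the information a coalition $\mathcal{T}$ leaks to the eavesdropper when the remaining active transmitters are treated as noise there. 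Eliminating the $\widetilde R_l$'s between the reliability and secrecy constraints (a Fourier--Motzkin step) leaves precisely the constraints $R_{\mathcal{T}}\leq\big[\frac{1}{2}\log\frac{1+h_{\Lambda}P_{\mathcal{T}}}{1+hP_{\mathcal{T}}(1+hP_{\mathcal{T}^c})^{-1}}\big]^+$ of \eqref{eq:reg}, the $[\cdot]^+$ absorbing the regime in which the eavesdropper's channel is the stronger one and the whole coalition is forced to zero rate.

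It remains to note that, for a fixed power allocation, these constraints define a (truncated) polymatroid whose vertices are attained by fixing a successive-decoding order at the legitimate receiver, and that time-sharing among these vertices is realized by the $k$-block code of Definition~\ref{def:WT1} with $R_l=\frac1k\sum_i R_l^{(i)}$; taking the union over admissible $(P_l)_{l\in\mathcal{L}(\Lambda)}$ then yields $\mathcal{R}$. I expect the main obstacle to be the arbitrarily-varying part and its interaction with secrecy: one must verify that the wiretap randomization does not break the non-symmetrizability and derandomization arguments and that reliability holds against \emph{every} jamming sequence---including ones chosen adversarially as a function of the (public) codebooks---while the equivocation bound decouples from the jammer only thanks to the pre-cancellation assumption; ensuring both simultaneously and uniformly over $\mathcal{S}$ is the delicate point, and is where the analysis of \cite{chou21} concentrates.
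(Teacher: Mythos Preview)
The paper does not prove Theorem~\ref{thregion}: it is stated in Section~\ref{sec:res} as a result quoted from \cite{chou21}, so there is no in-paper proof to compare your proposal against. Your sketch---silence the symmetrizable transmitters in $\mathcal{L}^c(\Lambda)$, combine Gaussian AVC-MAC coding (\`a la \cite{csiszar1991capacity,la2004game}) for reliability against the jammer with a wiretap randomization layer analyzed as in \cite{tekin2008gaussian,tekin2008general} at the jammer-free eavesdropper, then Fourier--Motzkin eliminate the randomization rates and time-share over successive-decoding orders---is the natural architecture and is consistent with how the paper uses the result (in particular, Theorem~\ref{propcore}(ii) simply invokes Theorem~\ref{thregion}). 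For a genuine comparison and for the delicate point you correctly flag---making the AVC derandomization/non-symmetrizability argument coexist with the wiretap randomization uniformly over $\mathcal{S}$---you would need to consult \cite{chou21} directly.
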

We also have the following optimality result. 
\begin{thm}[\cite{chou21}]\label{thsumrate}
The maximal secrecy sum-rate $R_{\mathcal{L}}\triangleq \sum_{l\in \mathcal{L}}R_l$ achievable for the degraded GMAC-WT-AJ with parameters $((\Gamma_l)_{l\in \mathcal{L}},h, \Lambda, 1, 1)$ is
\begin{align} \label{eqsumrate}
 \left[ \frac{1}{2} \log \left( \frac{ 1+ h_{\Lambda} \Gamma_{\mathcal{L}(\Lambda)} }{1+  h\Gamma_{\mathcal{L}(\Lambda)}} \right)    \right]^+.
\end{align}
\end{thm}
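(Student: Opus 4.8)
The plan is to prove matching achievability and converse bounds on the secrecy sum-rate \(R_{\mathcal L}\). \emph{Achievability} follows directly from Theorem~\ref{thregion}: in the region \(\mathcal R\) of \eqref{eq:reg} set \(P_l=\Gamma_l\) for every \(l\in\mathcal L(\Lambda)\) (and \(R_l=0\) for \(l\in\mathcal L^c(\Lambda)\)), and specialize the subset constraint to \(\mathcal T=\mathcal L(\Lambda)\), so that \(\mathcal T^c=\emptyset\) and \(P_{\mathcal T^c}=0\). The constraint then collapses to \(R_{\mathcal L(\Lambda)}\leq\big[\tfrac12\log\big((1+h_\Lambda\Gamma_{\mathcal L(\Lambda)})/(1+h\Gamma_{\mathcal L(\Lambda)})\big)\big]^+\), which is exactly \eqref{eqsumrate}, and a routine check shows it is attained within \(\mathcal R\); hence \eqref{eqsumrate} is achievable.

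For the \emph{converse} I would proceed in two steps. First, I claim \(R_l=0\) for every \(l\in\mathcal L^c(\Lambda)\), i.e.\ whenever \(\Gamma_l\leq\Lambda\). For reliable decoding such a transmitter faces a symmetrizable Gaussian arbitrarily-varying channel: in each encoding block the jammer can devote its power to sending a codeword drawn independently from Transmitter~\(l\)'s own codebook---feasible precisely because \(\Gamma_l\leq\Lambda\)---which makes two independent messages of Transmitter~\(l\) statistically indistinguishable at the legitimate receiver. Since the reliability criterion of Definition~\ref{def:WT2} concerns the \emph{joint} message \(M_{\mathcal L}\), the resulting failure to resolve Transmitter~\(l\)'s message keeps \(\sup_{S\in\mathcal S}\mathbb{P}[\widehat M_{\mathcal L}\neq M_{\mathcal L}]\) bounded away from \(0\) unless Transmitter~\(l\) uses rate \(0\) in every block, hence \(R_l=0\). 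This is the reduction to the symmetrizability characterization of the deterministic-code capacity of the Gaussian AVC, in the spirit of~\cite{csiszar1991capacity,la2004game}.

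Second, I would bound \(R_{\mathcal L}=R_{\mathcal L(\Lambda)}\). By the first step together with the selfishness assumption, the transmitters outside \(\mathcal L(\Lambda)\) are silent, so the aggregate signal power at the receiver is at most \(\Gamma_{\mathcal L(\Lambda)}\). Now let the jammer send i.i.d.\ \(\mathcal N(0,\Lambda-\delta)\) symbols (rescaled, or conditioned on the negligible-probability power-violation event, so as to lie in \(\mathcal S\)), and let \(\delta\to0\) at the end. Against this strategy the legitimate link is a Gaussian MAC with noise variance \(1+\Lambda\), while by \eqref{eqmod2} the eavesdropper observes an unjammed link with gain \(\sqrt h\) and unit-variance noise. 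Combining Fano's inequality with the equivocation constraint \eqref{eq:sec} gives \(nk\,R_{\mathcal L}\leq I(M_{\mathcal L};Y^{nk})-I(M_{\mathcal L};Z^{nk})+o(nk)\), whose right-hand side is the secrecy sum-rate of a degraded Gaussian MAC-WT with main-channel SNR governed by \(h_\Lambda=(1+\Lambda)^{-1}\) and eavesdropper SNR governed by \(h\). A standard single-letterization together with the Gaussian extremal-entropy inequality (as in the degraded GMAC-WT converse of~\cite{tekin2008gaussian}), and the facts that \(x\mapsto\tfrac12\log(1+h_\Lambda x)-\tfrac12\log(1+hx)\) is nondecreasing when \(h<h_\Lambda\) and nonpositive when \(h\geq h_\Lambda\), then yield \(R_{\mathcal L}\leq\big[\tfrac12\log\big((1+h_\Lambda\Gamma_{\mathcal L(\Lambda)})/(1+h\Gamma_{\mathcal L(\Lambda)})\big)\big]^+\), which matches \eqref{eqsumrate}.

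The step I expect to be the main obstacle is the first one of the converse: turning the informal statement ``the jammer impersonates Transmitter~\(l\)'' into a rigorous zero-rate claim requires invoking the symmetrizability theory for the Gaussian AVC with simultaneous input and state power constraints, carefully handling the stochastic encoders and the truncation needed so that \(\Vert S^n_i\Vert^2\leq n\Lambda\) holds exactly, and---since the model uses deterministic codes without shared randomness---derandomizing the symmetrizing state. A secondary, more routine point is justifying that the worst-case additive jamming in the second step may be taken i.i.d.\ Gaussian, which is the usual worst-additive-noise / maximum-entropy argument specialized to this converse.
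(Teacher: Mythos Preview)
The paper does not prove Theorem~\ref{thsumrate}; it is quoted from~\cite{chou21} in the ``Review of known results'' subsection (Section~\ref{sec:res}) and no argument is given here, so there is no in-paper proof to compare your proposal against.

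That said, your outline is the natural one and matches what one expects the argument in~\cite{chou21} to be: achievability from the region of Theorem~\ref{thregion} by taking $P_l=\Gamma_l$ and $\mathcal T=\mathcal L(\Lambda)$; the zero-rate conclusion for $l\in\mathcal L^c(\Lambda)$ via the Gaussian-AVC symmetrizability mechanism of~\cite{csiszar1991capacity,la2004game}; and the sum-rate upper bound by fixing an i.i.d.\ Gaussian jamming strategy of power close to $\Lambda$ and invoking the degraded GMAC-WT converse of~\cite{tekin2008gaussian}. The difficulties you single out---making the impersonation attack rigorous under the hard per-block constraint $\lVert S^n_i\rVert^2\le n\Lambda$ and with stochastic encoders, and justifying that Gaussian jamming is the worst additive noise in the second step---are exactly the technical points that require care, but both are standard in this literature and are handled in~\cite{chou21}. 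One small remark: in your second converse step you invoke the selfishness assumption to conclude that transmitters in $\mathcal L^c(\Lambda)$ are silent; this is consistent with the paper, which explicitly states (just after Definition~\ref{def:WT2}) that cooperative jamming is ruled out, so the dependence on that modeling assumption is legitimate here.
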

Note that the optimal secrecy sum-rate is positive if and only if $h_{\Lambda} > h$ and $\mathcal{L}(\Lambda) \neq \emptyset$.

\section{Degraded GMAC-WT with selfish transmitters} \label{sec:game}

We define in Section \ref{secmod2} a coalitional game for the GMAC-WT when the transmitters are assumed selfish. We study the properties of the game and its core in Section \ref{sec:supadd}. In Section~\ref{sec:alloc}, we propose a solution concept for a fair allocation, and determine a unique solution that corresponds to an achievable secrecy rate allocation  and belongs to the core.  
\subsection{Problem statement and game definition}  \label{secmod2}

\begin{figure} 
\centering   
 \includegraphics[width=8.5cm]{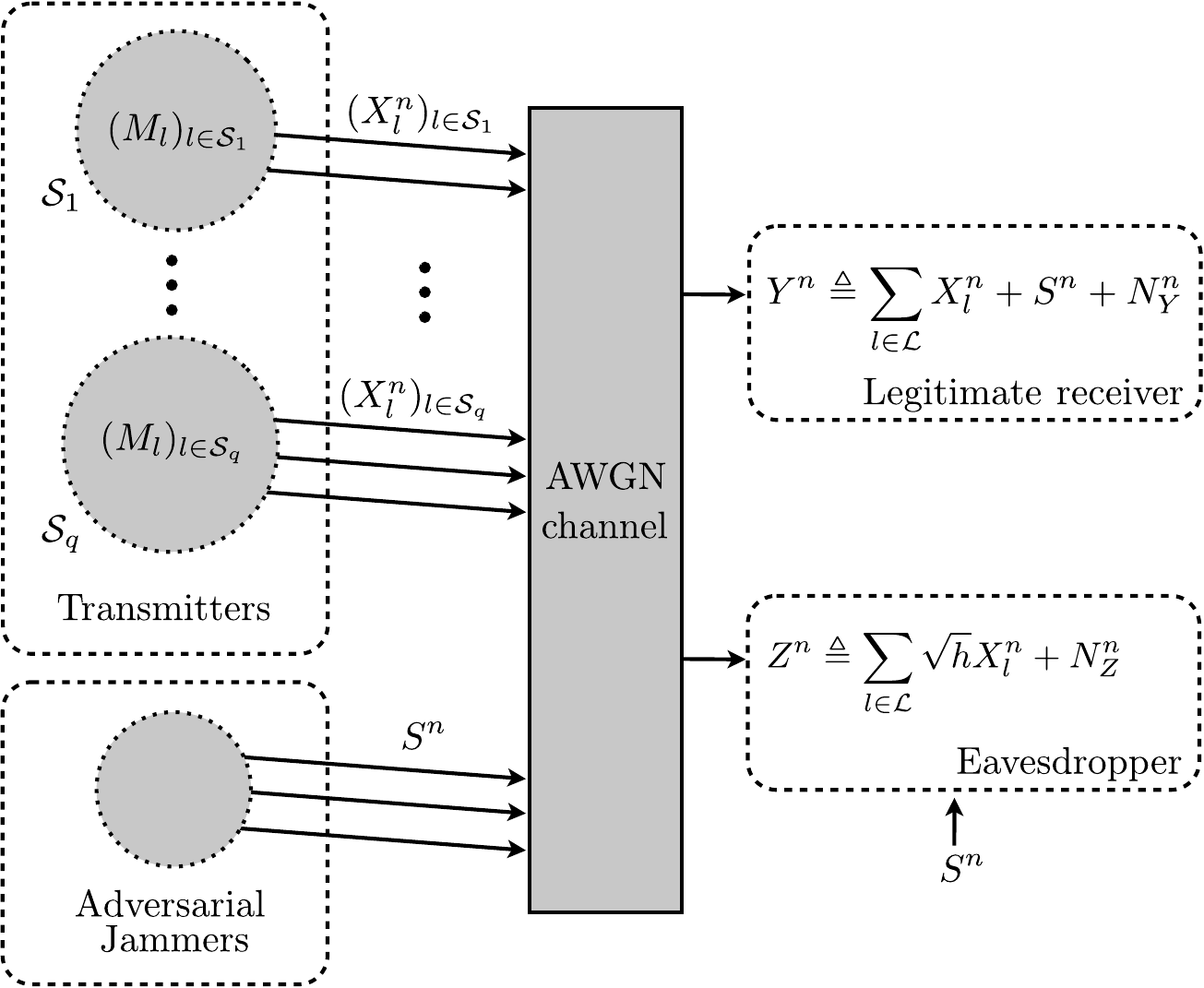}
  \caption{Degraded GMAC-WT with selfish transmitters in the presence of adversarial jammers, where the transmitters form $q$ coalitions. In the absence of adversarial jammers, set $S^n \leftarrow \emptyset$.} \label{fig:modelj}
\end{figure}
We consider the GMAC-WT-AJ with parameters $((\Gamma_l)_{l\in \mathcal{L}},h, \Lambda = 0, 1, 1)$, $h \in [0,h_{\Lambda}[$, i.e., a degraded GMAC-WT \cite{tekin2008gaussian}, i.e., the channel gains $(h_l)_{l \in \mathcal{L}}$ are all equal to $h \in [0,1[$.  
We assume that the  transmitters are selfish, i.e., they are solely interested in maximizing their own secrecy rate.
 The transmitters could potentially form \emph{coalitions} to achieve this goal as depicted in Figure \ref{fig:modelj} when $S^n \leftarrow \emptyset$, in the sense that subsets of agents can agree on a multiple access protocol before the actual information transmission to the receiver occurs. As would be the case for GMAC-WT, the members of a given coalition do not alter the multiple access protocol they agreed on once transmission
commences. 

\begin{rem} \label{rem2}
Choosing $\Lambda \neq 0$ offers a generalization of our result to the GMAC-WT with selfish transmitters in the presence of adversarial jammers, whose setting is depicted in Figure \ref{fig:modelj}. In the following, we derive all our results for $\Lambda \neq 0$ to enable this generalization.   
\end{rem}

The questions we would like to address are as follows. 
\begin{enumerate}[(i)]
\item Can the transmitters benefit from forming coalitions?  
\item If yes, can the transmitters find a consensus about which coalitions to form despite their selfishness? 
\item If such consensus exists, how should the secrecy sum-rate of each coalition be allocated among its transmitters? 
\end{enumerate} 
 
To answer these questions, we adopt a coalitional game theory framework,  e.g.,   \cite[Section~2.1]{peleg2007introduction}, by associating with each potential coalition of transmitters $\mathcal{S} \subseteq \mathcal{L}$ a worth $v(\mathcal{S})$. As detailed in Section \ref{sec:supadd}, such function $v$ will allow us to study stability of coalitions formed by the transmitters, where stability of a coalition means that there is no incentive to merge with another coalition or to split in smaller coalitions. This approach is similar to the approach taken in  \cite{la2004game} in the absence of security constraints. For completeness, we review the definition of the value function. 
To this end, we first define a game corresponding to our problem as follows. 
For $l\in \mathcal{L}$, let $\mathcal{A}_l$ corresponds to the set of all the possible strategies that transmitter $l$ can adopt, and let $\pi_l(a_{\mathcal{L}})$ be the payoff of transmitter $l$, i.e., its secrecy rate, when the strategies $a_{\mathcal{L}} \in \bigtimes_{l\in\mathcal{L}} \mathcal{A}_l$ are  played by all the transmitters. Two potential choices for the worth $v(\mathcal{S})$ of coalition $\mathcal{S} \subseteq \mathcal{L}$  are the following,\mbox{\cite{aumann1960neumann,jentzsch1964some}} 
\begin{align}
\max_{a_{\mathcal{S}}} \min_{a_{\mathcal{S}^c}} \sum_{i \in \mathcal{S}}\pi_i(a_{\mathcal{S}},a_{\mathcal{S}^c}), \label{eq6}\\
 \min_{a_{\mathcal{S}^c}} \max_{a_{\mathcal{S}}} \sum_{i \in \mathcal{S}}\pi_i(a_{\mathcal{S}},a_{\mathcal{S}^c}),\label{eq7}
\end{align}
where the quantity in \eqref{eq6} corresponds to the payoff that coalition $\mathcal{S}$ can ensure to its members regardless of the strategies adopted by the member of $\mathcal{S}^c$, and the one in \eqref{eq7} to the payoff that coalition $\mathcal{S}^c$ cannot prevent coalition $\mathcal{S}$ to receive. See, for instance, \cite{shapley1973gameb} for a detailed explanation of the subtle difference between these two notions in general.  Observe that, for our problem, both quantities are equal since for any $\mathcal{S} \subseteq \mathcal{L}$, there exists $a^*_{\mathcal{S}^c} \in \bigtimes_{i\in\mathcal{S}^c} A_i$  such that for any strategies $a_{\mathcal{S}}\in \bigtimes_{i\in\mathcal{S}} A_i$, we have $$ \sum_{i \in \mathcal{S}}\pi_i(a_{\mathcal{S}},a_{\mathcal{S}^c}) \geq \sum_{i \in \mathcal{S}}\pi_i(a_{\mathcal{S}},a^*_{\mathcal{S}^c}).$$
Indeed, note that the signals of the transmitters in $\mathcal{S}^c$ can be considered as a single signal of power up to $(\sum_{l \in\mathcal{S}^c} \sqrt{\Gamma_l})^2$ from the receiver perspective.  
We can thus consider the following strategy $a^*_{\mathcal{S}^c}$:  the transmitters in $\mathcal{S}^c$ collude against coalition~$\mathcal{S}$ by  acting as a mega jammer with power upper bounded by $(\sum_{l \in\mathcal{S}^c} \sqrt{\Gamma_l})^2$ and by revealing their transmitted sequences to the eavesdropper. Using the terminology of \cite{jentzsch1964some}, the game is \textit{clear}, i.e., equality holds between \eqref{eq6} and~\eqref{eq7}.

To summarize, we cast the problem as a coalitional game $(\mathcal{L},v)$ where the value function is defined as
\begin{align} \label{eqvf}
v  : 2^{\mathcal{L}}& \to \mathbb{R}^+, 
 \mathcal{S} \mapsto \max_{a_{\mathcal{S}}} \min_{a_{\mathcal{S}^c}} \sum_{i \in \mathcal{S}}\pi_i(a_{\mathcal{S}},a_{\mathcal{S}^c}),
\end{align}
 such that  $v(\mathcal{S})$ corresponds to the maximal secrecy sum-rate achievable by coalition $\mathcal{S}$ when \emph{no specific strategy is assumed} for the transmitters in $\mathcal{S}^c$, who act as jammers. Note that the definition of the game is identical to the one in \cite{la2004game}, except that the value function now needs to account for our security constraints. The characterization of this value function is provided in Theorem \ref{th1}.
\subsection{Properties of the game and characterization of its core} \label{sec:supadd}

We first show the following characterization of the value function defined in \eqref{eqvf}, which is a consequence of Theorem~\ref{thsumrate}. 
\begin{thm} \label{th1}
Let $\mathcal{S} \subseteq \mathcal{L}$. We have
	\begin{equation}
		v(\mathcal{S})= \left[ \frac{1}{2} \log \left( \frac {{1+ h_{\Lambda_{\mathcal{S}^c}}\Gamma_{\mathcal{S}(\Lambda_{\mathcal{S}^c})}   }} { {1+  h\Gamma_{\mathcal{S}(\Lambda_{\mathcal{S}^c})} }}\right)   \right]^+,
	\end{equation}
	where for any $\mathcal{S} \subseteq \mathcal{L}$, $\Lambda_{\mathcal{S}^c} \triangleq \left( \sqrt{\Lambda} + \sum_{l\in\mathcal{S}^c} \sqrt{\Gamma_l} \right)^2$, $h_{\Lambda_{\mathcal{S}^c}} \triangleq (1+\Lambda_{\mathcal{S}^c})^{-1}$, $\mathcal{S}(\Lambda_{\mathcal{S}^c}) \triangleq \{ l\in \mathcal{S} : \Gamma_l > \Lambda_{\mathcal{S}^c} \}$, $\Gamma_{\mathcal{S}} \triangleq \sum_{l \in \mathcal{S}} \Gamma_l$.
\end{thm}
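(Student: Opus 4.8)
The plan is to reduce Theorem~\ref{th1} to Theorem~\ref{thsumrate} by identifying, for each coalition $\mathcal{S}\subseteq\mathcal{L}$, the worst-case behavior of the outside transmitters $\mathcal{S}^c$ with the worst-case jamming in an appropriate degraded GMAC-WT-AJ. First I would fix $\mathcal{S}$ and unpack the value function $v(\mathcal{S})=\max_{a_{\mathcal{S}}}\min_{a_{\mathcal{S}^c}}\sum_{i\in\mathcal{S}}\pi_i(a_{\mathcal{S}},a_{\mathcal{S}^c})$ from \eqref{eqvf}. Using the argument already sketched in the text before the statement, the transmitters in $\mathcal{S}^c$ can be collapsed, from the receiver's perspective, into a single jamming signal whose power is at most $\bigl(\sum_{l\in\mathcal{S}^c}\sqrt{\Gamma_l}\bigr)^2$; moreover they may reveal their codewords to the eavesdropper so that these contribute nothing to the equivocation at $\mathcal{Z}$. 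Combined with the pre-existing jammer of power $\Lambda$ (kept for the generalization of Remark~\ref{rem2}), the signals of $\mathcal{S}^c$ and the original jammer together form an aggregate adversarial signal of power at most $\Lambda_{\mathcal{S}^c}=\bigl(\sqrt{\Lambda}+\sum_{l\in\mathcal{S}^c}\sqrt{\Gamma_l}\bigr)^2$, where the square-root sum (rather than sum of powers) arises because the adversary may align all these signals coherently.

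Next I would argue the matching converse/achievability sandwich. For the upper bound on $v(\mathcal{S})$: against \emph{any} strategy $a_{\mathcal{S}}$ of the coalition, the specific adversarial strategy $a^*_{\mathcal{S}^c}$ described above (mega-jammer at power $\Lambda_{\mathcal{S}^c}$ with codewords leaked) turns the channel seen by $\mathcal{S}$ into exactly a degraded GMAC-WT-AJ with transmitter set $\mathcal{S}$, power constraints $(\Gamma_l)_{l\in\mathcal{S}}$, jammer power $\Lambda_{\mathcal{S}^c}$, channel gain $h$, and unit noise variances; hence by Theorem~\ref{thsumrate} the achievable secrecy sum-rate of $\mathcal{S}$ is at most $\bigl[\tfrac12\log\bigl((1+h_{\Lambda_{\mathcal{S}^c}}\Gamma_{\mathcal{S}(\Lambda_{\mathcal{S}^c})})/(1+h\Gamma_{\mathcal{S}(\Lambda_{\mathcal{S}^c})})\bigr)\bigr]^+$, where $\mathcal{S}(\Lambda_{\mathcal{S}^c})=\{l\in\mathcal{S}:\Gamma_l>\Lambda_{\mathcal{S}^c}\}$ is the relevant active set and $h_{\Lambda_{\mathcal{S}^c}}=(1+\Lambda_{\mathcal{S}^c})^{-1}$ the effective eavesdropper gain. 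For the lower bound: this same value is achievable by the coalition $\mathcal{S}$ regardless of what $\mathcal{S}^c$ does, because no strategy of $\mathcal{S}^c$ can do better than the mega-jammer bound — any signal $\mathcal{S}^c$ sends has squared norm per symbol bounded as above, and revealing codewords is the worst case for equivocation — so the $\min$ over $a_{\mathcal{S}^c}$ is attained (up to $\epsilon$ in the asymptotics) at $a^*_{\mathcal{S}^c}$, and Theorem~\ref{thregion} (with sum-rate optimized as in Theorem~\ref{thsumrate}) supplies a code for $\mathcal{S}$ achieving it. This matches the claimed formula and, as noted after Theorem~\ref{thsumrate}, is positive iff $h_{\Lambda_{\mathcal{S}^c}}>h$ and $\mathcal{S}(\Lambda_{\mathcal{S}^c})\neq\emptyset$, consistent with the $[\cdot]^+$.

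I expect the main obstacle to be the rigorous justification that the aggregate power budget is $\Lambda_{\mathcal{S}^c}=\bigl(\sqrt{\Lambda}+\sum_{l\in\mathcal{S}^c}\sqrt{\Gamma_l}\bigr)^2$ and not merely $\Lambda+\sum_{l\in\mathcal{S}^c}\Gamma_l$ — i.e., that the outside transmitters and jammer can (and in the worst case will) align coherently so that the triangle inequality for the Euclidean norm is tight symbol-by-symbol across blocks — and, conversely, that this is an \emph{upper} bound on what they can inflict, so that the per-block constraint $\lVert\sum_{l\in\mathcal{S}^c}X_l^n+S^n\rVert\le\sqrt{\Lambda}+\sum_{l\in\mathcal{S}^c}\sqrt{\Gamma_l}$ (by the triangle inequality on norms, squaring to $n\Lambda_{\mathcal{S}^c}$) is exactly the constraint faced by the auxiliary problem's jammer. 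A secondary point requiring care is that the reduction must respect the time-sharing structure of Definition~\ref{def:WT1} over the $k$ encoding blocks, and that leaking the outside codewords to the eavesdropper is compatible with the equivocation definition \eqref{eq:sec} (it only removes a term from $H(M_{\mathcal{L}}|Z^{kn})$ that $\mathcal{S}$ could not have been crediting to itself anyway, since only messages in $\mathcal{S}$ enter the coalition's payoff). Once these bookkeeping facts are in place, the theorem follows by direct substitution into Theorem~\ref{thsumrate}.
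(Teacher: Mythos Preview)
Your proposal is correct and follows essentially the same approach as the paper: the paper's proof is the single remark that the result ``is a consequence of Theorem~\ref{thsumrate},'' relying on the mega-jammer reduction already described in the paragraph preceding \eqref{eqvf}, and you have simply spelled out that reduction (coherent alignment giving the $(\sqrt{\Lambda}+\sum_{l\in\mathcal{S}^c}\sqrt{\Gamma_l})^2$ budget, codeword disclosure to the eavesdropper, then invoking Theorem~\ref{thsumrate} for both directions) in more detail. The obstacles you flag are exactly the bookkeeping the paper leaves implicit, and your handling of them is sound.
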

Observe that, as expected, the characterization of the value function in Theorem \ref{th1} recovers the one in \cite{la2004game} by setting $h=0$. We now review the notion of superadditivity.
\begin{defn}
A game $(\mathcal{L},v)$ is superadditive if $v: 2^{\mathcal{L}} \to \mathbb{R}^+$ is such that 
\begin{align}
\forall \mathcal{S},\mathcal{T}\subseteq \mathcal{L}, \mathcal{S} \cap\mathcal{T} = \emptyset \implies  v(\mathcal{S}) + v(\mathcal{T}) \leq v(\mathcal{S} \cup \mathcal{T}) .
\end{align}
\end{defn}

\begin{pro}
The game $(\mathcal{L},v)$ defined in \eqref{eqvf} is superadditive.
\end{pro}
\begin{proof}
 The secrecy constraints for coalitions  $\mathcal{S}$ and $\mathcal{T}$, with $\mathcal{S} \cap\mathcal{T} = \emptyset$, implies a secrecy constraint for the coalition $\mathcal{S} \cup \mathcal{T}$:
\begin{subequations}
\begin{align}
&I\left(M_{\mathcal{S}} M_{\mathcal{T}} ;  Z^{kn} X^{kn}_{(\mathcal{S} \cup \mathcal{T})^c}\right) \\
& = I\left(M_{\mathcal{S}} ;  Z^{kn} X^{kn}_{(\mathcal{S} \cup \mathcal{T})^c}\right) + I\left(M_{\mathcal{T}}  ;  Z^{kn} X^{kn}_{(\mathcal{S} \cup \mathcal{T})^c} |  M_{\mathcal{S}}\right)\\
& \leq I\left(M_{\mathcal{S}} ;  Z^{kn} X^{kn}_{(\mathcal{S} \cup \mathcal{T})^c}\right) + I\left(M_{\mathcal{T}}  ;  Z^{kn} X^{kn}_{(\mathcal{S} \cup \mathcal{T})^c}   X^{kn}_{\mathcal{S}}M_{\mathcal{S}}\right) \label{eq20a}\\
& \leq I\left(M_{\mathcal{S}} ;  Z^{kn} X^{kn}_{(\mathcal{S} \cup \mathcal{T})^c}\right) + I\left(M_{\mathcal{T}}  ;  Z^{kn} X^{kn}_{(\mathcal{S} \cup \mathcal{T})^c} X^{kn}_{\mathcal{S}}\right) + \epsilon'_n \label{eq20aa} \\
& \leq I\left(M_{\mathcal{S}} ;  Z^{kn} X^{kn}_{\mathcal{S}^c}\right) + I\left(M_{\mathcal{T}}  ;  Z^{kn} X^{kn}_{\mathcal{T}^c} \right) + \epsilon'_n, \label{eq20b}
\end{align}
\end{subequations}
where  \eqref{eq20aa} holds by Fano's inequality assuming that for any $n \in \mathbb{N}$, $\epsilon_n' \triangleq nk\epsilon_n  + 1$ and $\mathbb{P}[\widehat{M}_{\mathcal{S}} \neq M_{\mathcal{S}}] \leq  \epsilon_n$ with $ \epsilon_n \xrightarrow{n\to \infty} 0$, \eqref{eq20b} holds because $\mathcal{S} \cap\mathcal{T} = \emptyset$.
\end{proof}

Superadditivity implies that there is an interest in forming a large coalition to obtain a larger secrecy sum-rate, however, large coalition might not be in the individual interest of the transmitters, and can thus be unstable. A useful concept to overcome this complication is the core of the game.

\begin{defn}[e.g. \cite{maschler1979geometric}] \label{defcore}
The core of a superadditive game $(\mathcal{L},v)$ is defined as follows.
\begin{align} 
&\mathcal{C}(v) \triangleq \nonumber \\
& \left\{ (R_l)_{l \in \mathcal{L}} : \sum_{l \in \mathcal{L}} R_l = v(\mathcal{L}) \text{ and }\sum_{i \in \mathcal{S}} R_i \geq v(\mathcal{S}), \forall \mathcal{S} \subset \mathcal{L} \right\}. \label{eqcore}
\end{align}
\end{defn}

Observe that for any point in the core, the grand coalition, i.e., the coalition $\mathcal{L}$, is in the best interest to all transmitters, since the set of inequality in \eqref{eqcore} ensures that no coalition of agents can increase its secrecy sum-rate by leaving the grand coalition. Observe also that for any point in the core the maximal secrecy sum rate $v(\mathcal{L})$ for the grand coalition  is achieved. 
In general, the core of a game can be empty \cite{myerson2013game}. However, we will show that the game $(\mathcal{L},v)$ defined in~\eqref{eqvf} has a non-empty core. 

Definition \ref{defcore} further clarifies the choice of the value function $v$. A coalition $\mathcal{S}$ wishes to be associated with a value $v(S)$ as large as possible, while the transmitters outside $\mathcal{S}$ wish $v(S)$ to be as small as possible to demand a higher share of $v(\mathcal{L})$. The latter transmitters achieve their goal by waiving a threat argument, which consists in arguing that they could adopt the strategy that minimizes $v(S)$, whereas coalition $\mathcal{S}$ achieves its goal by arguing that it can always achieve the secrecy sum-rate  of Theorem~\ref{thsumrate}, irrespective of the strategy of transmitters in $\mathcal{S}^c$. This formulation is generically termed as alpha effectiveness or alpha theory \cite{aumann1960neumann,jentzsch1964some,shapley1973gameb}.  It has also been used in~\cite{la2004game} for the Gaussian multiple access channel and in \cite{ISIT172} for secret-key generation in many-to-one networks. 

\begin{rem} \label{remcore}
The core can also be understood as a converse for our problem since it provides upper bounds for $R_{\mathcal{S}}$, $\mathcal{S}\subseteq \mathcal{L}$. More specifically, we have the following alternative characterization of the core of the game $(\mathcal{L},v)$. 
\begin{align} 
&\mathcal{C}(v) = \bigg\{ (R_l)_{l \in \mathcal{L}} : \forall \mathcal{S} \subseteq \mathcal{L}, \nonumber \\
&  \phantom{-} \left[ \frac{1}{2} \log \left( \frac{1 + h_{\Lambda_{\mathcal{S}^c}} \Gamma_{\mathcal{S}(\Lambda_{\mathcal{S}^c})} } {1 + h \Gamma_{\mathcal{S}(\Lambda_{\mathcal{S}^c})}  } \right)  \right]^+ \leq  R_{\mathcal{S}}  \nonumber \\
& \left. \phantom{-}\leq \frac{1}{2} \!\log\! \left(\! \frac{1+ h_{\Lambda} \Gamma_{\mathcal{L}(\Lambda)}}{1+ h \Gamma_{\mathcal{L}(\Lambda)}} \!\! \right)   
 \!-\! \left[ \frac{1}{2}\! \log \!\left(\! \frac{1 + h_{\Lambda_{\mathcal{S}}} \Gamma_{\mathcal{S}^c(\Lambda_{\mathcal{S}})} }{1 + h \Gamma_{\mathcal{S}^c(\Lambda_{\mathcal{S}})}  } \right)  \!  \right]^+ \!\! \right\}\!. \label{eqcorealt}
\end{align}
\end{rem} 
\begin{proof}
We obtain \eqref{eqcorealt} from the following equivalences
\begin{subequations}
\begin{align}
& \left(\sum_{l \in \mathcal{L}} R_l = v(\mathcal{L}) \text{ and }\sum_{i \in \mathcal{S}} R_i \geq v(\mathcal{S}), \forall \mathcal{S} \subset \mathcal{L} \right)\\
&\iff  \left( \sum_{i \in \mathcal{S}} R_i = v(\mathcal{L}) -  \sum_{i \in \mathcal{S}^c} R_i \right. \nonumber\\
& \phantom{-----} \left. \text{ and }\sum_{i \in \mathcal{S}} R_i \geq v(\mathcal{S}), \forall \mathcal{S} \subset \mathcal{L} \right)\\
& \iff \left( v(\mathcal{L}) -  v(\mathcal{S}^c) \geq \sum_{i \in \mathcal{S}} R_i \geq v(\mathcal{S}), \forall \mathcal{S} \subseteq \mathcal{L} \right).
\end{align}
\end{subequations}
\end{proof}
Note that the core may contain rates that are not achievable. We next characterize a subset of the core that is achievable by the transmitters, i.e., that is included in $\mathcal{R}$ defined in Theorem~\ref{thregion} for the GMAC-WT-AJ with parameters  $((\Gamma_l)_{l\in \mathcal{L}},h, \Lambda , 1, 1)$. 
 This characterization will later be useful to prove that the stable and fair allocation found in the next section is achievable and belongs to the core.

\begin{thm} \label{propcore}
$(i)$ The core of the game $(\mathcal{L},v)$ contains the following  rate-tuples
\begin{align} 
\mathcal{C}^*(v) &\triangleq \bigg\{ (R_l)_{l \in \mathcal{L}} :  \forall l \in \mathcal{L}^c(\Lambda), R_l = 0,  \nonumber \\
& \phantom{---}R_{\mathcal{L}(\Lambda)}  =  \frac{1}{2} \log \left( \frac{1+ h_{\Lambda} \Gamma_{\mathcal{L}(\Lambda)}}{1+ h \Gamma_{\mathcal{L}(\Lambda)}}  \right), \text{ and }   \nonumber \\
& \phantom{---} \forall \mathcal{S} \subset \mathcal{L}(\Lambda), \left. R_{\mathcal{S}}   \leq  \frac{1}{2} \log \left( \frac{1+ h_{\Lambda} \Gamma_{\mathcal{S}}}{1+ h \Gamma_{\mathcal{S}}}  \right)   
    \right\}. \label{eqrates}
\end{align}
$(ii)$ The rate-tuples in $\mathcal{C}^*(v)$ are achievable.
\end{thm}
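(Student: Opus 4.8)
The plan is to prove $(i)$ by checking directly that every $(R_l)_{l\in\mathcal{L}}\in\mathcal{C}^*(v)$ satisfies the two defining conditions of $\mathcal{C}(v)$ in Definition~\ref{defcore}, and to prove $(ii)$ by producing an explicit power allocation that witnesses membership in $\mathcal{R}$. Throughout I would use two elementary monotonicity facts: for fixed $h\geq 0$, the map $t\mapsto\tfrac12\log\frac{1+ct}{1+ht}$ is nondecreasing in $t\geq 0$, and for fixed $t\geq 0$ it is nondecreasing in $c\geq h$.

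For $(i)$, fix $(R_l)_{l\in\mathcal{L}}\in\mathcal{C}^*(v)$. Since $h<h_\Lambda$, specializing Theorem~\ref{th1} to $\mathcal{S}=\mathcal{L}$ (for which $\Lambda_{\emptyset}=\Lambda$, $h_{\Lambda_\emptyset}=h_\Lambda$, $\mathcal{L}(\Lambda_\emptyset)=\mathcal{L}(\Lambda)$) gives $v(\mathcal{L})=\tfrac12\log\frac{1+h_\Lambda\Gamma_{\mathcal{L}(\Lambda)}}{1+h\Gamma_{\mathcal{L}(\Lambda)}}$; combined with $R_l=0$ for $l\in\mathcal{L}^c(\Lambda)$, the defining equality of $\mathcal{C}^*(v)$ yields $\sum_{l\in\mathcal{L}}R_l=R_{\mathcal{L}(\Lambda)}=v(\mathcal{L})$, the first core condition. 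The rates are also nonnegative: for $l\in\mathcal{L}(\Lambda)$, applying the defining constraint of $\mathcal{C}^*(v)$ to $\mathcal{L}(\Lambda)\setminus\{l\}$ and using monotonicity yields $R_{\mathcal{L}(\Lambda)\setminus\{l\}}\leq v(\mathcal{L})$, so $R_l=R_{\mathcal{L}(\Lambda)}-R_{\mathcal{L}(\Lambda)\setminus\{l\}}\geq 0$.

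It remains to check $R_{\mathcal{S}}\geq v(\mathcal{S})$ for every $\mathcal{S}\subsetneq\mathcal{L}$. Put $\mathcal{S}_1\triangleq\mathcal{S}\cap\mathcal{L}(\Lambda)$, so $R_{\mathcal{S}}=R_{\mathcal{S}_1}$, and note $\mathcal{S}(\Lambda_{\mathcal{S}^c})\subseteq\mathcal{S}_1$ since $\Lambda_{\mathcal{S}^c}\geq\Lambda$. If $v(\mathcal{S})=0$ the claim follows from nonnegativity. If $v(\mathcal{S})>0$, then Theorem~\ref{th1} forces $h_{\Lambda_{\mathcal{S}^c}}>h$ and $\mathcal{S}(\Lambda_{\mathcal{S}^c})\neq\emptyset$, hence $\mathcal{S}_1\neq\emptyset$; and if moreover $\mathcal{S}_1=\mathcal{L}(\Lambda)$, then $R_{\mathcal{S}}=v(\mathcal{L})\geq v(\mathcal{S})$ by monotonicity (using $\Gamma_{\mathcal{S}(\Lambda_{\mathcal{S}^c})}\leq\Gamma_{\mathcal{L}(\Lambda)}$ and $h_{\Lambda_{\mathcal{S}^c}}\leq h_\Lambda$). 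The substantive case is $\emptyset\neq\mathcal{S}_1\subsetneq\mathcal{L}(\Lambda)$ with $v(\mathcal{S})>0$. Here, applying the constraint of $\mathcal{C}^*(v)$ to the proper subset $\mathcal{L}(\Lambda)\setminus\mathcal{S}_1$ gives
\[
R_{\mathcal{S}}=v(\mathcal{L})-R_{\mathcal{L}(\Lambda)\setminus\mathcal{S}_1}\geq v(\mathcal{L})-\frac{1}{2}\log\frac{1+h_\Lambda\Gamma_{\mathcal{L}(\Lambda)\setminus\mathcal{S}_1}}{1+h\Gamma_{\mathcal{L}(\Lambda)\setminus\mathcal{S}_1}},
\]
while monotonicity (using $\mathcal{S}(\Lambda_{\mathcal{S}^c})\subseteq\mathcal{S}_1$ and $h_{\Lambda_{\mathcal{S}^c}}>h$) gives $v(\mathcal{S})\leq\tfrac12\log\frac{1+h_{\Lambda_{\mathcal{S}^c}}\Gamma_{\mathcal{S}_1}}{1+h\Gamma_{\mathcal{S}_1}}$. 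Writing $a_0\triangleq h_\Lambda$, $a\triangleq h_{\Lambda_{\mathcal{S}^c}}$, $b\triangleq h$, $x\triangleq\Gamma_{\mathcal{S}_1}$, $y\triangleq\Gamma_{\mathcal{L}(\Lambda)\setminus\mathcal{S}_1}$, and noting $x+y=\Gamma_{\mathcal{L}(\Lambda)}$ so that $v(\mathcal{L})=\tfrac12\log\frac{1+a_0(x+y)}{1+b(x+y)}$, the inequality $R_{\mathcal{S}}\geq v(\mathcal{S})$ reduces to the polynomial inequality
\[
(1+bx)(1+by)\bigl(1+a_0(x+y)\bigr)\;\geq\;(1+ax)(1+a_0y)\bigl(1+b(x+y)\bigr).
\]
The key input is the estimate $y\leq\Lambda_{\mathcal{S}^c}-\Lambda=\tfrac1a-\tfrac1{a_0}$, equivalently $a_0-a\geq aa_0y$: indeed $y=\sum_{l\in\mathcal{S}^c\cap\mathcal{L}(\Lambda)}\Gamma_l\leq\sum_{l\in\mathcal{S}^c}\Gamma_l\leq(\sum_{l\in\mathcal{S}^c}\sqrt{\Gamma_l})^2\leq\Lambda_{\mathcal{S}^c}-\Lambda$, the last step using $(\sqrt\Lambda+t)^2\geq\Lambda+t^2$; this quantifies how the jamming power available to $\mathcal{S}^c$ caps $v(\mathcal{S})$. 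Expanding the difference of the two sides of the polynomial inequality and substituting $a_0-a\geq aa_0y$ collapses it to $xy\,b^2\bigl(1+a_0(x+y)\bigr)\geq0$, which is clear. I expect this algebraic reduction — together with the right choice of which bounds to invoke for $R_{\mathcal{S}}$ and for $v(\mathcal{S})$ — to be the only delicate point; the rest is bookkeeping with set intersections and the two monotonicity facts.

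For $(ii)$, I would take $P_l=\Gamma_l$ for every $l\in\mathcal{L}$ (admissible since $l\in\mathcal{L}(\Lambda)$ implies $\Lambda<\Gamma_l\leq\Gamma_l$). For any $(R_l)\in\mathcal{C}^*(v)$ and any $\mathcal{T}\subseteq\mathcal{L}(\Lambda)$, since $h\Gamma_{\mathcal{T}}(1+h\Gamma_{\mathcal{T}^c})^{-1}\leq h\Gamma_{\mathcal{T}}$ the right-hand side of the constraint defining $\mathcal{R}$ is at least $\bigl[\tfrac12\log\frac{1+h_\Lambda\Gamma_{\mathcal{T}}}{1+h\Gamma_{\mathcal{T}}}\bigr]^{+}=\tfrac12\log\frac{1+h_\Lambda\Gamma_{\mathcal{T}}}{1+h\Gamma_{\mathcal{T}}}$; for $\mathcal{T}\subsetneq\mathcal{L}(\Lambda)$ this last quantity is $\geq R_{\mathcal{T}}$ by the definition of $\mathcal{C}^*(v)$, for $\mathcal{T}=\mathcal{L}(\Lambda)$ it equals $v(\mathcal{L})=R_{\mathcal{L}(\Lambda)}$, and for $\mathcal{T}=\emptyset$ it is trivially $\geq R_\emptyset=0$. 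Together with $R_l=0$ for $l\in\mathcal{L}^c(\Lambda)$, this exhibits $(R_l)$ as a point of $\mathcal{R}$, proving $(ii)$. The main obstacle overall is the polynomial inequality appearing in $(i)$.
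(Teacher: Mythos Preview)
Your proposal is correct and follows essentially the same route as the paper's proof: for $(i)$ you pass to the complement via $R_{\mathcal{S}}=v(\mathcal{L})-R_{\mathcal{L}(\Lambda)\setminus\mathcal{S}_1}$, bound the complement using the defining inequality of $\mathcal{C}^*(v)$, and then compare with $v(\mathcal{S})$ using the key power estimate $\Gamma_{\mathcal{S}^c\cap\mathcal{L}(\Lambda)}\leq\Lambda_{\mathcal{S}^c}-\Lambda$ together with $\mathcal{S}(\Lambda_{\mathcal{S}^c})\subseteq\mathcal{S}_1$; for $(ii)$ you instantiate $P_l=\Gamma_l$ in the region $\mathcal{R}$ of Theorem~\ref{thregion}, exactly as the paper does (the paper simply cites Theorem~\ref{thregion}). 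The only cosmetic difference is that the paper carries out $(i)$ as a short chain of logarithmic inequalities (splitting your power estimate into $\Lambda+\Gamma_{\mathcal{S}^c(\Lambda)}\leq\Lambda_{\mathcal{S}^c(\Lambda)}\leq\Lambda_{\mathcal{S}^c}$), whereas you clear denominators and verify the resulting polynomial inequality directly; the two computations are algebraically equivalent.
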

\begin{proof}
See Appendix \ref{App_th4}.
\end{proof}

\subsection{Stable and fair allocation} \label{sec:alloc}
Although we have found in Theorem \ref{propcore} achievable rate-tuples that belong to the core of the game, the question of choosing a specific point in the core remains. We use the solution concept introduced in \cite{la2004game} to define a fair secrecy rate allocation. We will then show  (i) existence and uniqueness, (ii) achievability, and (iii) belonging to the core of this allocation. 
\begin{defn}[\cite{la2004game}] \label{defax}
A fair secrecy rate allocation $\{R^*_l(v)\}_{l\in\mathcal{L}}$ should satisfy the following axioms.
 \begin{enumerate}[(i)]
 \item \textbf{Efficiency}: The maximal secrecy sum rate is achieved $\sum_{l\in \mathcal{L}} R^*_l(v) =  v(\mathcal{L})$.
 \item \textbf{Symmetry}: The labeling of the players should not influence the secrecy rate allocation. More specifically, let $\pi \in \textup{Sym}(L)$, where $\textup{Sym}(L)$ is the symmetric group on $\mathcal{L}$, and let $\pi v$ be the game with value function that maps $\mathcal{S} \subseteq \mathcal{L}$ to $v(\{ \pi(s) : s\in \mathcal{S} \})$. Then, for any $\pi \in \textup{Sym}(L)$, for any $l\in \mathcal{L}$, $R^*_{l}(v) = R^*_{\pi(l)}(\pi v)$.
 \item \textbf{Envy-freeness}: For $i,j\in\mathcal{L}$, if $\Gamma_i> \Gamma_j$ and player $i$ decides to conserve energy and only use the power $\Gamma_j$, then player $i$ should receive the same secrecy rate allocation than player~$j$. More specifically, let $v^{i,j}$ be the same game as $v$ when the power constraint of player $i$ is $\Gamma_j$, then one should have $R^*_i(v^{i,j}) = R^*_j(v)$. This axiom is based on the notion of envy~\cite{feldman1974fairness}.
\end{enumerate}  
\end{defn}
In Proposition \ref{prop1}, we  show that for our problem there is a unique secrecy rate allocation as axiomatized in Definition \ref{defax}.

\begin{prop} \label{prop1}
There exists a unique secrecy rate allocation $\{R^*_l(v)\}_{l\in\mathcal{L}}$ that satisfies the three axioms Efficiency, Symmetry, and Envy-freeness of Definition \ref{defax}. Moreover,
\begin{subequations} \label{eqoptalloc}
\begin{align} 
&\forall l \in \mathcal{L}^c(\Lambda),  R^*_{l}(v) = 0,\\
&\forall l \in \mathcal{L}(\Lambda), \nonumber\\&  R^*_{l}(v) = \frac{1}{l} \left[\frac{1}{2} \log \left[\frac{ 1+ h_{\Lambda} (l\overline{\Gamma}_l + \overline{\Gamma}_{l+1:L})}{1+ h(l\overline{\Gamma}_l + \overline{\Gamma}_{l+1:L}) } \right] - R^*_{l+1:L}(v)\right],
\end{align}
\end{subequations}
where we have defined for $l \in \mathcal{L}$, 
\begin{equation}
\overline{\Gamma}_l \triangleq \begin{cases}
\Gamma_l, \text{ if } l\in \mathcal{L}(\Lambda) \\
0, \text{ if } l\in \mathcal{L}^c(\Lambda) \\
\end{cases},
\end{equation}
and we have used the notation $\overline{\Gamma}_{l+1:L} \triangleq \sum_{i=l+1}^L \overline{\Gamma}_l$, and $R^*_{l+1:L}(v)\triangleq \sum_{j=l+1}^L R^*_{j}(v)$.
\end{prop}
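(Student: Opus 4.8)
The plan is to follow the axiomatic-solution approach of \cite{la2004game}, adapted to account for the security constraint, and to proceed by strong induction on the cardinality of $\mathcal{L}(\Lambda)$, or equivalently on the index $l$ counting down from $L$ to $1$. First I would dispose of the trivial part: for $l\in\mathcal{L}^c(\Lambda)$, a transmitter with $\Gamma_l\leq\Lambda$ cannot contribute to any positive secrecy sum-rate (by Theorem~\ref{thsumrate} and Theorem~\ref{th1}, such an index never enters the effective set $\mathcal{S}(\Lambda_{\mathcal{S}^c})$), so Efficiency and Symmetry together force $R^*_l(v)=0$; moreover removing these players from $\mathcal{L}$ changes neither $v(\mathcal{S})$ for $\mathcal{S}\subseteq\mathcal{L}(\Lambda)$ nor any of the axioms, so without loss of generality I may assume $\mathcal{L}=\mathcal{L}(\Lambda)$ and reindex $1,\dots,|\mathcal{L}(\Lambda)|$.

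**Main argument.** For the nontrivial players I would first prove \emph{uniqueness}: assuming an allocation satisfies the three axioms, I derive the closed form in \eqref{eqoptalloc} recursively. The key structural observation is that by Symmetry the allocation depends only on the multiset of power constraints, and by Envy-freeness, if $\Gamma_i>\Gamma_j$ then the value $R^*_j(v)$ a player with the smaller power receives equals what the larger player would receive after scaling its power down to $\Gamma_j$. Iterating this, and using the fact that in $v$ the value function from Theorem~\ref{th1} depends on a coalition only through the sum of powers of its effective members, one shows that all players with power $\geq\overline{\Gamma}_l$ receive the same rate $R^*_l(v)$ when restricted to using power $\overline{\Gamma}_l$, and that the ``bottom $l$'' players together with the reduced grand coalition must, by Efficiency applied to the auxiliary game where players $1,\dots,l$ all use power $\overline{\Gamma}_l$, exhaust the sum-rate $\tfrac12\log\frac{1+h_\Lambda(l\overline{\Gamma}_l+\overline{\Gamma}_{l+1:L})}{1+h(l\overline{\Gamma}_l+\overline{\Gamma}_{l+1:L})}$. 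Solving this telescoping system for $R^*_l(v)$ in terms of $R^*_{l+1:L}(v)$ gives exactly the stated formula, which establishes both uniqueness and — by reading the recursion forwards as a construction — \emph{existence}, once I verify the constructed tuple genuinely satisfies all three axioms (Efficiency is immediate from the top-level recursion step $l=1$; Symmetry holds because the recursion only sees sorted powers; Envy-freeness must be checked by comparing the recursion run on $v$ against the one run on $v^{i,j}$, which is where sorting the powers is used).

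**Anticipated obstacle.** The delicate step is the Envy-freeness verification in both directions: I need that the game $v$ from Theorem~\ref{th1}, under the modification $\Gamma_i\mapsto\Gamma_j$, behaves compatibly with the recursion — in particular that the threshold structure (which players lie in $\mathcal{S}(\Lambda_{\mathcal{S}^c})$) is not disturbed in a way that breaks the argument, and that the $[\,\cdot\,]^+$ truncations in $v$ are inactive along the relevant chain of coalitions. Here the hypothesis $h<h_\Lambda$ is essential: it guarantees $h_{\Lambda_{\mathcal{S}^c}}\leq h_\Lambda$ need not exceed $h$, but more importantly that $v(\mathcal{L})>0$ and the logarithms in \eqref{eqoptalloc} are well-defined and nonnegative, so the positive-part operators can be dropped throughout. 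I would also need a short monotonicity lemma — that $x\mapsto\tfrac12\log\frac{1+h_\Lambda x}{1+hx}$ is increasing on $\mathbb{R}_+$ when $h<h_\Lambda$ — to confirm each $R^*_l(v)$ produced by the recursion is nonnegative, which is needed for the output to be a legitimate rate allocation. Finally, since the paper defers achievability and membership in the core to later results, the present proposition only needs the existence/uniqueness/formula claims, so I would stop once the recursion is shown to be the unique fixed point of the three axioms.
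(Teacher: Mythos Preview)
Your approach is correct and follows the same overall blueprint as the paper --- both adapt the axiomatic argument of \cite{la2004game} --- but the paper executes it through a decomposition you do not mention. Specifically, the paper defines, for each $x\in[0,1]$, the auxiliary quantity
\[
\phi_{x,l}(v)\;=\;\frac{1}{l}\Bigl[\tfrac12\log\bigl(1+x(l\Gamma_l+\Gamma_{l+1:L})\bigr)-\textstyle\sum_{i=l+1}^L\phi_{x,i}(v)\Bigr],
\]
which is exactly the \cite{la2004game} allocation for a Gaussian MAC with signal-to-noise parameter $x$, and then sets $R^*_l(v)=\phi_{h_\Lambda,l}(v)-\phi_{h,l}(v)$. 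The point of this trick is that Symmetry and Envy-freeness for each $\phi_{x,\cdot}$ are already established in \cite{la2004game} (their Eq.~(8) and Lemma~1), so they transfer to the difference for free; Efficiency is read off at $l=1$; and uniqueness is literally identical to \cite[Theorem~5.1]{la2004game}. Your plan to verify the three axioms directly on the combined formula works too, but it re-derives what the $\phi$-decomposition gets by reference.

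Two small remarks on scope. First, non-negativity of $R^*_l(v)$ is not needed for Proposition~\ref{prop1}: the proposition only asserts existence, uniqueness, and the formula; the bound $R^*_l(v)\geq 0$ is deferred to Theorem~\ref{th3}. Second, your concern about the $[\,\cdot\,]^+$ truncations in $v$ is a red herring here, since neither the axioms nor the formula \eqref{eqoptalloc} invoke $v(\mathcal{S})$ for proper subcoalitions --- only $v(\mathcal{L})$ enters (via Efficiency), and there the positive part is inactive because $h<h_\Lambda$.
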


\begin{proof}
See Appendix \ref{App_prop1}.
\end{proof}

In Theorem \ref{th3}, we prove that the secrecy rate allocation $\{R^*_l(v)\}_{l\in\mathcal{L}}$ from Definition \ref{defax} is achievable and belongs to the core. Note that in the absence of security constraints, i.e., $h=0$, our results recovers \cite{la2004game}.

\begin{thm} \label{th3}
$(i)$ The secrecy rate allocation $\{R^*_l(v)\}_{l\in\mathcal{L}}$ defined in \eqref{eqoptalloc} is such that
\begin{align}
\forall \mathcal{S} \subseteq \mathcal{L}(\Lambda),  0\leq R^*_{\mathcal{S}}(v)   \leq \frac{1}{2} \log \left( \frac{1+  h_{\Lambda} \Gamma_{\mathcal{S}}}{1+ h \Gamma_{\mathcal{S}}}  \right). \label{equation17}
\end{align}
$(ii)$ By $(i)$, $\{R^*_l(v)\}_{l\in\mathcal{L}}$ is in $\mathcal{C}^*(v)$ and is thus achievable by Theorem \ref{propcore}.\\
$(iii)$ By $(ii)$, $\{R^*_l(v)\}_{l\in\mathcal{L}}$ belongs to the core because $\mathcal{C}^*(v) \subseteq \mathcal{C}(v)$ from Theorem \ref{propcore}. 
\end{thm}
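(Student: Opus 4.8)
The plan is to prove part $(i)$ directly, since parts $(ii)$ and $(iii)$ follow immediately from $(i)$ combined with Theorem~\ref{propcore} as indicated in the statement. So the whole task reduces to establishing the two inequalities in \eqref{equation17} for the explicit allocation $\{R^*_l(v)\}_{l\in\mathcal{L}}$ given in \eqref{eqoptalloc}. I would first reduce to the case $\mathcal{S}\subseteq\mathcal{L}(\Lambda)$ only, since the coordinates indexed by $\mathcal{L}^c(\Lambda)$ are zero and contribute nothing to $R^*_{\mathcal{S}}(v)$ or to $\Gamma_{\mathcal{S}(\Lambda_{\mathcal{S}^c})}$ in the relevant bounds; without loss of generality I would also relabel so that $\Gamma_1\leq\Gamma_2\leq\cdots$, consistent with the recursion in \eqref{eqoptalloc}, which is defined in terms of the tail sums $\overline{\Gamma}_{l+1:L}$ and $R^*_{l+1:L}(v)$.

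For the upper bound, the natural approach is a backward induction on the index set, exploiting the telescoping structure of the recursion. Define $g(x)\triangleq\frac{1}{2}\log\frac{1+h_\Lambda x}{1+h x}$, which is concave and increasing on $[0,\infty)$ (here we use $h<h_\Lambda$); the recursion reads $l\, R^*_l(v)+R^*_{l+1:L}(v)=g(l\overline{\Gamma}_l+\overline{\Gamma}_{l+1:L})$. First I would show the ``prefix'' bound $R^*_{\{1,\dots,m\}\cap\mathcal{L}(\Lambda)}(v)\leq g(\Gamma_{\{1,\dots,m\}\cap\mathcal{L}(\Lambda)})$ by induction on $m$, using concavity of $g$ and the ordering of the $\Gamma_l$; the key inequality at each step is that $g(l\overline{\Gamma}_l+\overline{\Gamma}_{l+1:L})\geq g(\overline{\Gamma}_{1:l})$ when $\Gamma_1\leq\cdots\leq\Gamma_l$, which controls $l\,R^*_l(v)$ from below relative to what the smaller indices can have absorbed. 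Then, for a general $\mathcal{S}\subseteq\mathcal{L}(\Lambda)$, I would bound $R^*_{\mathcal{S}}(v)$ by a prefix sum of the same cardinality: since the per-user rates $R^*_l(v)$ turn out to be nonincreasing in $l$ (which I would check separately, again from concavity of $g$), replacing $\mathcal{S}$ by the $|\mathcal{S}|$ smallest-indexed users in $\mathcal{L}(\Lambda)$ only increases $R^*_{\mathcal{S}}(v)$, while $g(\Gamma_{\mathcal S})\geq g(\Gamma_{\text{prefix}})$ goes the other way since the prefix has the smallest powers; combining these two monotonicities with the prefix bound gives $R^*_{\mathcal{S}}(v)\leq g(\Gamma_{\{1,\dots,|\mathcal S|\}\cap\mathcal L(\Lambda)})\leq g(\Gamma_{\mathcal S})$. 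The lower bound $R^*_{\mathcal{S}}(v)\geq 0$ is easier: it follows from showing each $R^*_l(v)\geq 0$ for $l\in\mathcal{L}(\Lambda)$, which is again a backward induction on $l$ using $g(l\overline{\Gamma}_l+\overline{\Gamma}_{l+1:L})\geq g((l{+}1)\overline{\Gamma}_{l+1}+\overline{\Gamma}_{l+2:L})$ (monotonicity of $g$ together with $\overline{\Gamma}_l\leq\overline{\Gamma}_{l+1}$) to absorb the subtracted term $R^*_{l+1:L}(v)$.

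I expect the main obstacle to be the general-subset upper bound: the recursion is adapted to prefixes $\{1,\dots,m\}$, not to arbitrary subsets, so one needs a genuine rearrangement/majorization argument — namely that among all $\mathcal{S}$ of a fixed size, the prefix of smallest powers maximizes $R^*_{\mathcal{S}}(v)$ and simultaneously minimizes $g(\Gamma_{\mathcal S})$ — and establishing the monotonicity $R^*_1(v)\geq R^*_2(v)\geq\cdots$ on $\mathcal{L}(\Lambda)$ cleanly from the coupled recursion is the delicate part, since each $R^*_l(v)$ depends on the entire tail. Once that monotonicity is in hand, the rest is routine manipulation of the concave function $g$, and then $(ii)$ and $(iii)$ are immediate: \eqref{equation17} is exactly the membership condition for $\mathcal{C}^*(v)$ in \eqref{eqrates} once efficiency (Proposition~\ref{prop1}$(i)$, the Efficiency axiom) is used for the $\mathcal{S}=\mathcal{L}(\Lambda)$ equality, so $\{R^*_l(v)\}_{l\in\mathcal L}\in\mathcal{C}^*(v)$, hence achievable and in $\mathcal{C}(v)$ by Theorem~\ref{propcore}.
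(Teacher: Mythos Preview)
Your lower-bound plan is essentially right in spirit (individual nonnegativity via the recursion), but the upper-bound argument has a genuine gap: the monotonicity you need for the rearrangement step points the wrong way. Larger power always yields a larger fair share---this is exactly Proposition~\ref{prop2}, and it is also what the paper establishes en route to the lower bound via $R^*_l(v)-R^*_{l+1}(v)\geq 0$ under the decreasing ordering. Consequently, with your ordering $\Gamma_1\leq\Gamma_2\leq\cdots$ one has $R^*_l(v)$ \emph{nondecreasing} in $l$, not nonincreasing. Your sandwich then collapses: if you replace $\mathcal S$ by the prefix of smallest powers you \emph{decrease} $R^*_{\mathcal S}(v)$ (wrong direction), and if you replace it by the prefix of largest powers you \emph{increase} $g(\Gamma_{\cdot})$ (also wrong direction). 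There is no choice of prefix that simultaneously upper-bounds $R^*_{\mathcal S}(v)$ and lower-bounds $g(\Gamma_{\mathcal S})$. The inequality you invoke for positivity, $g(l\overline\Gamma_l+\overline\Gamma_{l+1:L})\geq g((l{+}1)\overline\Gamma_{l+1}+\overline\Gamma_{l+2:L})$ under $\overline\Gamma_l\leq\overline\Gamma_{l+1}$, is likewise reversed. (A side issue: the recursion \eqref{eqoptalloc} is written for the decreasing ordering; applying it verbatim after relabeling to increasing does not give the same allocation.)

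The paper avoids the rearrangement obstacle entirely by inducting on the \emph{number of transmitters}. The key lemma is a player-removal monotonicity: for every $l\neq j$, $R^*_l(v)<R^*_l(v^{(-j)})$, where $v^{(-j)}$ is the game with transmitter $j$ deleted. This is proved by comparing consecutive differences $R^*_l-R^*_{l+1}$ in $v$ and $v^{(-j)}$ through four auxiliary monotone functions (Lemma~\ref{lemfunc}), with separate arguments for $l>j$, $l<j-1$, the straddling case $l=j{-}1,j{+}1$, and the boundary indices. Once that lemma is in hand, for any proper $\mathcal S$ pick $j\notin\mathcal S$ and conclude $R^*_{\mathcal S}(v)<R^*_{\mathcal S}(v^{(-j)})\leq g(\Gamma_{\mathcal S})$ by the induction hypothesis on $L(\Lambda)-1$ players. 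So the missing idea is not prefix control or majorization but this removal monotonicity, which does the work that your rearrangement cannot.
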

\begin{proof}
See Appendix \ref{App_th3}.
\end{proof}
In Proposition \ref{prop2}, we bound the ratio of the secrecy rates of two transmitters by studying the influence of the noise level at the legitimate receiver and at the eavesdropper.  More specifically, for $\omega >0$, we make the following substitution in our model, i.e., Equation \eqref{eqmod1}, $ \sigma^2_Y \leftarrow  \omega \sigma^2_Y $ and $ \sigma^2_Z \leftarrow  \omega \sigma^2_Z$, $\omega \in \mathbb{R}_+$ such that after normalization for any $l\in \mathcal{L}$, $\Gamma_l \leftarrow \omega^{-1} \Gamma_l$ and $\Lambda \leftarrow \omega^{-1} \Lambda$.  Let $v^{(\omega)}$ denote the game with these new parameters.    
\begin{prop} \label{prop2}
We assume the sequence $(\Gamma_{l})_{l\in \mathcal{L}}$ decreasing by relabeling the players if necessary. Define $L(\Lambda) \triangleq |\mathcal{L}(\Lambda)|$. For any $l \in \llbracket 1, L(\Lambda)-1 \rrbracket$ such that $\Gamma_{l} \neq \Gamma_{l+1}$, $\omega \mapsto \frac{R_{l}^*(v^{(\omega)}) }{R_{l+1}^*(v^{(\omega)}) }$ from $\mathbb{R}_+^*$ to $\mathbb{R}$ is increasing and its image is $\left[1, \frac{\Gamma_{l'}}{\Gamma_l} \right]$. Hence, the secrecy rate allocation $\{R^*_l(v)\}_{l\in\mathcal{L}}$ defined in~\eqref{eqoptalloc} satisfies for $l,l'\in\mathcal{L}(\Lambda)$ such that $\Gamma_{l'} \geq \Gamma_l$
\begin{align}
1 \leq \frac{R^*_{l'}(v)}{R^*_l(v)} \leq \frac{\Gamma_{l'}}{\Gamma_l}.
\end{align}
\end{prop}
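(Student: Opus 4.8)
The plan is to reduce the general inequality to consecutive indices and then analyze the explicit allocation of Proposition~\ref{prop1}. Since $(\Gamma_l)_{l\in\mathcal{L}}$ is decreasing, $\mathcal{L}(\Lambda)=\llbracket 1,L(\Lambda)\rrbracket$, and for $l,l'\in\mathcal{L}(\Lambda)$ with $\Gamma_{l'}\geq\Gamma_l$ we have $l'\leq l$; the telescoping products $\frac{R^*_{l'}(v)}{R^*_l(v)}=\prod_{j=l'}^{l-1}\frac{R^*_j(v)}{R^*_{j+1}(v)}$ and $\frac{\Gamma_{l'}}{\Gamma_l}=\prod_{j=l'}^{l-1}\frac{\Gamma_j}{\Gamma_{j+1}}$ then reduce the claimed double inequality to $1\leq\frac{R^*_j(v^{(\omega)})}{R^*_{j+1}(v^{(\omega)})}\leq\frac{\Gamma_j}{\Gamma_{j+1}}$ for every $j\in\llbracket 1,L(\Lambda)-1\rrbracket$, the case $\Gamma_j=\Gamma_{j+1}$ being immediate because transmitters with equal power constraints receive equal rates.

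Next I would recast Proposition~\ref{prop1} for the scaled game. Writing $L'\triangleq L(\Lambda)$ and $A_m\triangleq m\Gamma_m+\sum_{j=m+1}^{L'}\Gamma_j$ (which is independent of $\omega$), the substitutions $\Gamma_l\leftarrow\omega^{-1}\Gamma_l$, $\Lambda\leftarrow\omega^{-1}\Lambda$ leave $\mathcal{L}(\Lambda)$ and the ordering unchanged, turn $h_\Lambda$ into $\omega/(\omega+\Lambda)$, and put Proposition~\ref{prop1} in the form $m R^*_m(v^{(\omega)})+\sum_{j=m+1}^{L'}R^*_j(v^{(\omega)})=\phi_\omega(A_m)$ with $\phi_\omega(a)\triangleq\frac{1}{2}\log\frac{1+a/(\omega+\Lambda)}{1+ha/\omega}$. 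Inverting this triangular linear system gives the closed form $R^*_m(v^{(\omega)})=\frac{\phi_\omega(A_m)}{m}-\sum_{j=m+1}^{L'}\frac{\phi_\omega(A_j)}{j(j-1)}$, and applying the same inversion to the trivially valid system $m\Gamma_m+\sum_{j=m+1}^{L'}\Gamma_j=A_m$ yields the organizing identity $\Gamma_m=\frac{A_m}{m}-\sum_{j=m+1}^{L'}\frac{A_j}{j(j-1)}$. In particular $R^*_l(v^{(\omega)})-R^*_{l+1}(v^{(\omega)})=\frac{1}{l}(\phi_\omega(A_l)-\phi_\omega(A_{l+1}))$ and $A_l-A_{l+1}=l(\Gamma_l-\Gamma_{l+1})\geq 0$. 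An elementary computation shows that the hypothesis $h<h_\Lambda$, i.e. $h\Lambda<1-h$, forces $\phi_\omega$ to be increasing and strictly concave on $[0,\infty)$ with $\phi_\omega(0)=0$, for every $\omega$ in the relevant range (which contains $\omega=1$, and is all of $\mathbb{R}_+^*$ when $\Lambda=0$).

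The lower bound $R^*_l(v^{(\omega)})\geq R^*_{l+1}(v^{(\omega)})$ then follows at once from monotonicity of $\phi_\omega$ and $A_l\geq A_{l+1}$. For the upper bound I would rewrite $\frac{R^*_l}{R^*_{l+1}}\leq\frac{\Gamma_l}{\Gamma_{l+1}}$ as $\Gamma_{l+1}(R^*_l-R^*_{l+1})\leq(\Gamma_l-\Gamma_{l+1})R^*_{l+1}$, bound the left-hand side using concavity of $\phi_\omega$ by $\Gamma_{l+1}\phi_\omega'(A_{l+1})(A_l-A_{l+1})/l=\Gamma_{l+1}\phi_\omega'(A_{l+1})(\Gamma_l-\Gamma_{l+1})$, and thereby reduce to $\Gamma_{l+1}\phi_\omega'(A_{l+1})\leq R^*_{l+1}(v^{(\omega)})$. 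Using the closed form and the organizing identity, the difference $R^*_{l+1}(v^{(\omega)})-\Gamma_{l+1}\phi_\omega'(A_{l+1})$ equals $\frac{\psi(A_{l+1})}{l+1}-\sum_{j=l+2}^{L'}\frac{\psi(A_j)}{j(j-1)}$ where $\psi(a)\triangleq\phi_\omega(a)-\phi_\omega'(A_{l+1})a$; since $\psi$ is concave with $\psi(0)=0$ and $\psi'(A_{l+1})=0$, it attains its maximum at $A_{l+1}$ and satisfies $0\leq\psi(A_j)\leq\psi(A_{l+1})$ for all $j>l+1$ (because $0\leq A_j\leq A_{l+1}$), so combining with $\sum_{j=l+2}^{L'}\frac{1}{j(j-1)}=\frac{1}{l+1}-\frac{1}{L'}$ gives $R^*_{l+1}(v^{(\omega)})-\Gamma_{l+1}\phi_\omega'(A_{l+1})\geq\psi(A_{l+1})/L'\geq 0$. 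This establishes the sandwich for each admissible $\omega$, hence in particular at $\omega=1$, which with the first paragraph proves the stated inequality.

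Finally, for the monotonicity-and-image assertion (when $\Gamma_l\neq\Gamma_{l+1}$): the expansion $\phi_\omega(a)=\frac{(1-h)a}{2\omega}+o(1/\omega)$ together with the closed form and the organizing identity gives $R^*_m(v^{(\omega)})\sim\frac{(1-h)\Gamma_m}{2\omega}$, so the ratio tends to $\Gamma_l/\Gamma_{l+1}$ as $\omega\to\infty$; a cruder estimate shows the $R^*_m(v^{(\omega)})$, $m\in\mathcal{L}(\Lambda)$, all share the same leading behavior as $\omega\to 0^+$ (namely $R^*_m(v^{(\omega)})\to-\frac{1}{2L'}\log h$ when $h>0$ and $\Lambda=0$), so the ratio tends to $1$; monotonicity in $\omega$ is obtained by differentiating $1+\frac{\phi_\omega(A_l)-\phi_\omega(A_{l+1})}{l\,R^*_{l+1}(v^{(\omega)})}$ and checking the derivative is positive. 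I expect this last derivative computation — tracking the concavity and monotonicity of $\phi_\omega$ jointly in $(a,\omega)$ — to be the main obstacle; once the organizing identity $\Gamma_m=\frac{A_m}{m}-\sum_{j>m}\frac{A_j}{j(j-1)}$ is in hand, the sandwich inequalities themselves come out cleanly.
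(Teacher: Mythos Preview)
Your proof of the double inequality $1\leq R^*_{l'}(v)/R^*_l(v)\leq\Gamma_{l'}/\Gamma_l$ is correct and takes a genuinely different route from the paper. The paper deduces the sandwich only \emph{after} establishing that $\omega\mapsto R^*_l(v^{(\omega)})/R^*_{l+1}(v^{(\omega)})$ is increasing and computing its limits at $0$ and $+\infty$; you instead work at a single fixed $\omega$, exploit the concavity of $a\mapsto\phi_\omega(a)$, and use the organizing identity $\Gamma_m=\frac{A_m}{m}-\sum_{j>m}\frac{A_j}{j(j-1)}$ to reduce the upper bound to the elementary comparison $\psi(A_j)\leq\psi(A_{l+1})$ for a tangent-shifted concave $\psi$. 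This is cleaner and more conceptual for that half of the statement, and it isolates exactly where the hypothesis $h<h_\Lambda$ (equivalently $h(1+\Lambda)<1$, i.e.\ concavity and monotonicity of $\phi_1$) enters.

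However, the proposition also asserts that the ratio is increasing in $\omega$, and there you have a genuine gap: you write that monotonicity ``is obtained by differentiating \ldots\ and checking the derivative is positive,'' but correctly flag this as the main obstacle and do not carry it out. This step is not routine. The paper handles it through a dedicated technical lemma (Lemma~\ref{lemfunc2}) showing that the auxiliary ratios
\[
f^{(5)}_{h_1,h_2,a,b,c,d}(\omega)\quad\text{and}\quad f^{(6)}_{h_1,h_2,a,b}(\omega)
\]
of log-ratios are non-decreasing in $\omega$; the proof of that lemma requires a four-case analysis based on the signs of two auxiliary quantities, each case using one of $1-x^{-1}\leq\log x$ or $\log x\leq x-1$ to kill the logarithms. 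With the lemma in hand, the paper writes $\bigl(R^*_l-R^*_{l+1}\bigr)\big/\bigl(R^*_{l'}-R^*_{l'+1}\bigr)$ as a constant times $f^{(5)}(\omega)$ and the base ratio $R^*_{L(\Lambda)-1}/R^*_{L(\Lambda)}$ in terms of $f^{(6)}(\omega)$, and concludes monotonicity by induction on $l$. Your limit computations agree with the paper's; what is missing is precisely this monotonicity lemma (or an equivalent argument), without which the ``increasing with image $[1,\Gamma_l/\Gamma_{l+1}]$'' clause of the proposition is unproved.
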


\begin{proof}
See Appendix \ref{App_prop2}.
\end{proof}
Proposition \ref{prop2} displays the same qualitative property as in \cite[Section 5.5]{la2004game}, namely, when the signal-to-noise ratio is high for all transmitters, they all obtain similar secrecy rates, whereas when the signal-to-noise ratio is low, they obtain secrecy rates proportional to their power constraints.

We illustrate Theorem \ref{propcore}, Proposition \ref{prop1}, and Theorem \ref{th3} in Figure \ref{fig:ex} with an example when $L=2$ and $\Lambda =0$. From Figure \ref{fig:ex}, as stated in Theorem \ref{propcore}, we observe that all the rates in $\mathcal{C}^*(v)$ are achievable and in the core $\mathcal{C}(v)$, however, this inclusion is strict, in general. From Figure \ref{fig:ex}, we also see that the core, characterized in Remark \ref{remcore}, may contain rates that are not achievable. However, the unique fair allocation (that satisfies Definition \ref{defax}) and is characterized in Proposition \ref{prop1} can be seen to belong to $\mathcal{C}^*(v)$ in Figure \ref{fig:ex}, and is thus achievable and in the core, which is formally proved in Theorem \ref{th3}.

\begin{figure} 
\centering   
  \includegraphics[width=8.5cm]{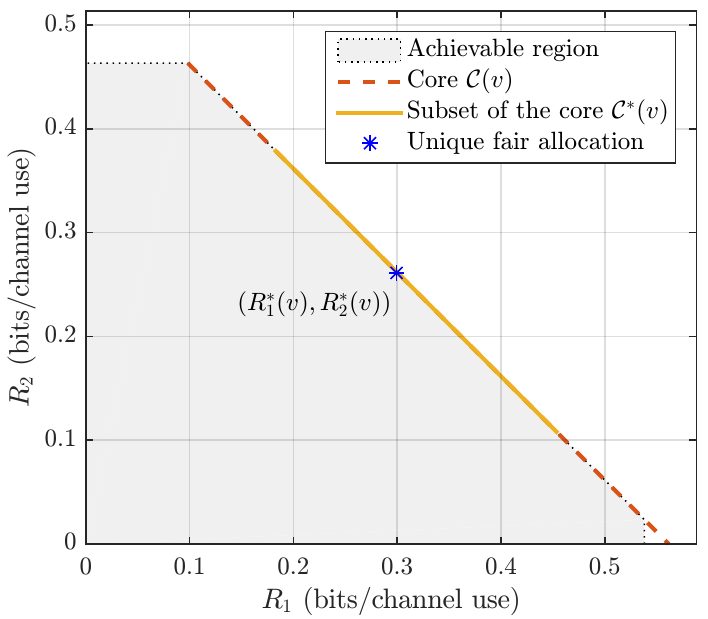}
  \caption{Representation of a known achievable region \cite{tekin2008gaussian} for the degraded GMAC-WT, the core $\mathcal{C}(v)$ defined in Remark \ref{remcore}, a subset of the core $\mathcal{C}^*(v)$ defined in Theorem \ref{propcore}, and the allocation $(R_1^*(v),R_2^*(v))$ defined in Proposition \ref{prop1} for two transmitters with power constraints $(\Gamma_1,\Gamma_2)= (2,1.4)$ with $h=0.3$ and $\Lambda =0$.} \label{fig:ex}
\end{figure}

\section{Non-degraded GMAC-WT with two selfish transmitters} \label{sec:game2}
We define in Section \ref{secmod2n} a coalitional game for the non-degraded  GMAC-WT with two selfish transmitters. The case of more than two transmitters remains an open problem as explained in Section \ref{secmod2n}. We study the properties of the game and its core in Sections \ref{sec:supaddn}, \ref{sec:allocn}. In particular, we demonstrate in Section \ref{sec:supaddn} that unlike the degraded case in Section \ref{sec:game}, cooperation might not always be in the best interest of the transmitters. Then, we identify in Section \ref{sec:allocn} sufficient conditions to have a setting in which cooperation is beneficial to both transmitters. In such settings, we also propose as solution concept for a fair allocation, an achievable secrecy rate allocation that is in the core and that satisfies a series of axioms that generalizes Section~\ref{sec:alloc}.

\subsection{Problem statement and game definition} \label{secmod2n}
In this section, we consider the two-transmitter GMAC-WT-AJ with parameters $( (\Gamma_1, \Gamma_2), (h_1,h_2), \Lambda = 0, \sigma^2_Y, \sigma^2_Z)$,  i.e., a general GMAC-WT \cite{tekin2008general} with two transmitters. Additionally, we assume that the  transmitters are selfish.
\begin{rem}
Similar to Remark \ref{rem2}, choosing $\Lambda \neq 0$ offers a generalization of our result to the non-degraded GMAC-WT with two selfish transmitters in the presence of adversarial jammers. In the following, we derive all our results for $\Lambda \neq 0$ to enable this generalization.  
\end{rem}

While one can adopt the same formalism as in Section \ref{sec:model} to define a coalitional game, the non-degraded GMAC-WT with selfish transmitters is challenging to deal with because a characterization of the right hand side in~\eqref{eqvf} is unknown for arbitrary $L$. Even for the case $L=2$, a characterization of $v(\{1,2\})$ is unknown. 

We propose to define a game for $L=2$ as follows. One can choose $v(\{1,2\})$ as the best known achievable sum-rate by two transmitters, denoted by $R_{1,2}^*$, and still requiring for one transmitter not to make any assumption on the communication strategy of the other. Indeed, (i)~the communication strategy to achieve $R_{1,2}^*$ must be known by both transmitters to be implemented (if only one transmitter is aware of a strategy that achieves a better sum-rate, then the strategy cannot be implemented), and (ii) $\max_{a_{\mathcal{S}}} \min_{a_{\mathcal{S}^c}} \sum_{i \in \mathcal{S}}\pi_i(a_{\mathcal{S}},a_{\mathcal{S}^c})$ can be determined for $\mathcal{S} = \{ 1\}$ and $\mathcal{S} = \{ 2\}$ from Theorem \ref{thsumrate}. Hence, similar to Section \ref{sec:model}, we define a coalitional game to model a non-degraded multiple access wiretap channel with two selfish transmitters (potentially in the presence of adversarial jammers) by
\begin{subequations} \label{eq_game2}
\begin{align}
v(\{1\}) & \triangleq \max_{a_{\{1\}}} \min_{a_{\{2\}}} \sum_{i \in \mathcal{S}}\pi_i(a_{\{1\}},a_{\{2\}}),\\ 
v(\{2\}) & \triangleq  \max_{a_{\{2\}}} \min_{a_{\{1\}}} \sum_{i \in \mathcal{S}}\pi_i(a_{\{2\}},a_{\{1\}}),\\ 
v(\{1,2\}) & \triangleq R_{1,2}^*.
\end{align}
\end{subequations}

\subsection{Cooperation might not be beneficial to all transmitters} \label{sec:supaddn}

We first give the following characterization of the value function for the game defined in \eqref{eq_game2}, which follows from~\cite{ISIT17}, where it has been shown that $R_{1,2}^* = \left[\frac{1}{2} \log \left(\frac{1 + h_{\Lambda}(\Gamma_1 +  \Gamma_2)}{1+ h_1\Gamma_1 + h_2 \Gamma_2} \right)\right]^+ $. Theorem~\ref{theorem6ref} is the counterpart of Theorem~\ref{th1} for the degraded case.
\begin{thm} \label{theorem6ref}
The value function $v$ of the game defined in Section \ref{secmod2n} can be characterized as follows.
\begin{subequations} \label{eqgame2}
\begin{align}
v(\{1\}) & = \mathds{1} \left\{ \Gamma_1 >(\sqrt{\Gamma_2}+\sqrt{\Lambda})^2 \right\} \nonumber \\
& \phantom{---}\times \left[ \frac{1}{2} \log \left(\frac{1 +  \frac{\Gamma_1}{1+ (\sqrt{\Gamma_2}+\sqrt{\Lambda})^2} }{1+ h_1\Gamma_1} \right) \right]^+,\\ 
v(\{2\}) & =  \mathds{1} \left\{ \Gamma_2 >(\sqrt{\Gamma_1}+\sqrt{\Lambda})^2 \right\} \nonumber \\
& \phantom{---}\times\left[ \frac{1}{2} \log \left(\frac{1 +  \frac{\Gamma_2}{1+ (\sqrt{\Gamma_1}+\sqrt{\Lambda})^2} }{1+ h_2\Gamma_2} \right) \right]^+,\\ 
v(\{1,2\}) & = \left[\frac{1}{2} \log \left(\frac{1 + h_{\Lambda}(\Gamma_1 +  \Gamma_2)}{1+ h_1\Gamma_1 + h_2 \Gamma_2} \right) \right]^+.
\end{align}
\end{subequations}
\end{thm}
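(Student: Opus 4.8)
The plan is to handle the three components separately, the joint value being essentially definitional and the two singleton values following from Theorem~\ref{thsumrate} applied to a single‑transmitter channel. For $v(\{1,2\})$: by the game definition \eqref{eq_game2} one has $v(\{1,2\})\triangleq R_{1,2}^{*}$, and invoking \cite{ISIT17} (as recalled immediately before the statement) gives $R_{1,2}^{*}=\left[\tfrac12\log\!\big((1+h_{\Lambda}(\Gamma_1+\Gamma_2))/(1+h_1\Gamma_1+h_2\Gamma_2)\big)\right]^{+}$, which is exactly the claimed expression; nothing more is needed here.

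For $v(\{1\})$ (and $v(\{2\})$ by symmetry, exchanging the roles of the two transmitters), the first step is to argue, exactly as in the $\alpha$‑effectiveness/``mega‑jammer'' discussion around \eqref{eq6}--\eqref{eq7} in Section~\ref{sec:game}, that in $\max_{a_{\{1\}}}\min_{a_{\{2\}}}\pi_1(a_{\{1\}},a_{\{2\}})$ the adversarial action of transmitter~$2$ that is worst for transmitter~$1$ is to forgo transmitting a message of its own, emit an arbitrary sequence $X_2^n$ of power at most $\Gamma_2$, and disclose $X_2^n$ to the eavesdropper, who cancels it from $Z^n$. Two elementary facts then collapse the problem to a single‑user one: (a) from the legitimate receiver's viewpoint the sum $S^n+X_2^n$ of the pre‑existing jamming sequence ($\lVert S^n\rVert^2\le n\Lambda$) and $X_2^n$ ($\lVert X_2^n\rVert^2\le n\Gamma_2$) is, by the triangle inequality, an arbitrary sequence of power at most $n\Lambda_{\{2\}}$ with $\Lambda_{\{2\}}\triangleq(\sqrt{\Lambda}+\sqrt{\Gamma_2})^2$, and conversely any sequence of power at most $n\Lambda_{\{2\}}$ splits into two pieces meeting the individual budgets (scale by $\sqrt{\Lambda}/(\sqrt{\Lambda}+\sqrt{\Gamma_2})$ and $\sqrt{\Gamma_2}/(\sqrt{\Lambda}+\sqrt{\Gamma_2})$); and (b) after the eavesdropper cancels the disclosed $X_2^n$, Eve's observation is $\sqrt{h_1}X_1^n+N_Z^n$. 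Hence $\max_{a_{\{1\}}}\min_{a_{\{2\}}}\pi_1$ equals the maximal secrecy rate of the single‑transmitter GMAC‑WT‑AJ with parameters $(\Gamma_1,h_1,\Lambda_{\{2\}},1,1)$, which is trivially degraded since there is only one channel gain.

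At this point I would invoke Theorem~\ref{thsumrate} with $\mathcal{L}=\{1\}$, $h=h_1$ and jamming power $\Lambda_{\{2\}}$: the maximal secrecy sum‑rate is $\left[\tfrac12\log\!\big((1+h_{\Lambda_{\{2\}}}\Gamma_{\{1\}(\Lambda_{\{2\}})})/(1+h_1\Gamma_{\{1\}(\Lambda_{\{2\}})})\big)\right]^{+}$, where $h_{\Lambda_{\{2\}}}=(1+(\sqrt{\Lambda}+\sqrt{\Gamma_2})^2)^{-1}$ and $\{1\}(\Lambda_{\{2\}})$ equals $\{1\}$ if $\Gamma_1>(\sqrt{\Lambda}+\sqrt{\Gamma_2})^2$ and $\emptyset$ otherwise. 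Substituting the two cases ($\Gamma_{\emptyset}=0$ makes the bracket vanish in the second) yields $v(\{1\})=\mathds{1}\!\left\{\Gamma_1>(\sqrt{\Gamma_2}+\sqrt{\Lambda})^2\right\}\left[\tfrac12\log\!\big((1+\Gamma_1/(1+(\sqrt{\Gamma_2}+\sqrt{\Lambda})^2))/(1+h_1\Gamma_1)\big)\right]^{+}$, as claimed. For the achievability side of the max--min, i.e.\ that transmitter~$1$ can guarantee this rate against \emph{any} behaviour of transmitter~$2$, one notes that the scheme of Theorem~\ref{thregion} for the single‑user GMAC‑WT‑AJ is robust to an arbitrary jamming sequence of the prescribed power; transmitter~$1$ simply runs that scheme, whatever codeword or jamming signal transmitter~$2$ sends is absorbed into the effective jamming budget $\Lambda_{\{2\}}$, and revealing it to Eve is the worst case already accounted for.

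I expect the only delicate point to be the reduction in the second paragraph, namely making rigorous that the worst‑case action of the selfish transmitter outside the coalition is precisely ``jam at full power and reveal to the eavesdropper,'' so that the max--min coincides with the single‑user worst‑case‑jamming value of Theorem~\ref{thsumrate}. However, this is the direct $L=2$, non‑degraded analogue of the argument already spelled out in Section~\ref{sec:game} (with $h_1$ replacing the common gain $h$ and a single remaining transmitter), so no genuinely new difficulty arises; the rest is the routine algebraic simplification of the $\log$ expressions and the symmetric repetition for $v(\{2\})$.
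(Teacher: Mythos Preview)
Your proposal is correct and follows essentially the same approach as the paper: the paper states that Theorem~\ref{theorem6ref} ``follows from~\cite{ISIT17}'' for $v(\{1,2\})$ and is ``the counterpart of Theorem~\ref{th1} for the degraded case'' for the singleton values, and Theorem~\ref{th1} itself is obtained from Theorem~\ref{thsumrate} via exactly the mega-jammer reduction you describe. Your observation that the single-transmitter channel is trivially degraded (so Theorem~\ref{thsumrate} applies with $h=h_1$) is the one small point the paper leaves implicit, and your explicit splitting of the jamming budget to show the equivalence of $\Lambda_{\{2\}}=(\sqrt{\Lambda}+\sqrt{\Gamma_2})^2$ is a nice touch that the paper does not spell out.
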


In Proposition \ref{propcounter}, stated next, we identify a range of parameter values for which the grand coalition might not be in the best interest of both transmitters.
\begin{prop} \label{propcounter}
When $h_1 \notin [0,h_{\Lambda}[$ or $h_2 \notin [0,h_{\Lambda}[$, the grand coalition might not form.
\end{prop}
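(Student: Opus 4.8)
The plan is to exhibit an explicit choice of channel parameters $(\Gamma_1,\Gamma_2,h_1,h_2,\Lambda)$ with, say, $h_1\notin[0,h_\Lambda[$ (i.e. $h_1\geq h_\Lambda$), for which the core $\mathcal{C}(v)$ of the game defined in \eqref{eq_game2} is empty; by Definition \ref{defcore} and the remark following it, an empty core means that some coalition can guarantee itself strictly more than any efficient allocation would grant it, so the grand coalition cannot be stable and hence ``might not form.'' Concretely, I would use the closed-form expressions of Theorem \ref{theorem6ref} and check the single scalar inequality that characterizes non-emptiness of the core for a two-player superadditive game, namely $v(\{1\})+v(\{2\})\leq v(\{1,2\})$: the core is nonempty if and only if this holds (for $L=2$ the only binding constraints are $R_1\geq v(\{1\})$, $R_2\geq v(\{2\})$, $R_1+R_2=v(\{1,2\})$). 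So it suffices to produce parameters for which $v(\{1\})+v(\{2\})> v(\{1,2\})$.

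The key steps, in order, would be: (1) take $\Lambda=0$ so that $h_\Lambda=1$ and the hypothesis $h_1\notin[0,1[$ forces $h_1\geq 1$ (an eavesdropper channel at least as strong as the legitimate one on link 1); pick $h_1=1$ to make $v(\{1,2\})=\left[\tfrac12\log\frac{1+\Gamma_1+\Gamma_2}{1+\Gamma_1+h_2\Gamma_2}\right]^+$, which is \emph{small} when $h_2$ is close to $1$ as well, or even when $\Gamma_1$ is large relative to $\Gamma_2$. (2) Choose $\Gamma_1,\Gamma_2$ with $\Gamma_2>\Gamma_1$ so that the indicator in $v(\{2\})$ is active: $\Gamma_2>(\sqrt{\Gamma_1})^2=\Gamma_1$, giving $v(\{2\})=\left[\tfrac12\log\frac{1+\Gamma_2/(1+\Gamma_1)}{1+h_2\Gamma_2}\right]^+$, and this is bounded below by a positive constant once $h_2$ is small enough; meanwhile $v(\{1\})$ may be taken to be $0$ (since $h_1=1$ makes its numerator/denominator ratio at most $1$, or since $\Gamma_1\leq\Gamma_2$ kills the indicator). (3) With $v(\{1\})=0$, I need $v(\{2\})>v(\{1,2\})$, i.e. the sum-rate the two transmitters can jointly secure is strictly less than what transmitter $2$ can already secure alone against a jamming transmitter $1$. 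Plugging in the formulas, this is the explicit inequality
\begin{equation}
\frac{1+\Gamma_2/(1+\Gamma_1)}{1+h_2\Gamma_2} \;>\; \frac{1+\Gamma_1+\Gamma_2}{1+\Gamma_1+h_2\Gamma_2},
\end{equation}
which I would verify holds for a concrete numerical choice, e.g. $\Gamma_1=3$, $\Gamma_2=4$, $h_1=1$, $h_2=0$, $\Lambda=0$ (here the left side is $1+4/4=2$ over $1$, equal to $2$; the right side is $8/4=2$ — so I would nudge the numbers, say $\Gamma_1=2,\Gamma_2=4,h_2=0$: left $=1+4/3$ over $1=7/3\approx2.33$, right $=7/3\approx 2.33$; still tight, so I would instead take $h_2$ slightly positive and $\Gamma_1$ a bit larger, or reduce $\Gamma_2$, to get a strict gap — this fiddling is routine). (4) Conclude that $\mathcal{C}(v)=\emptyset$, hence no efficient allocation is stable: transmitter $2$ strictly prefers to go it alone (acting against transmitter $1$'s arbitrary strategy), so the grand coalition does not form.

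The main obstacle is purely one of \emph{finding the right regime and a clean witness}: because in the two-player case $v(\{1\})+v(\{2\})$ and $v(\{1,2\})$ often coincide or are very close (the degraded-case argument that the game is superadditive, i.e. $v(\{1\})+v(\{2\})\leq v(\{1,2\})$, breaks exactly when $h_i\geq h_\Lambda$, but only marginally near the boundary), I must push far enough into the $h_1\geq h_\Lambda$ (or $h_2\geq h_\Lambda$) region — and choose the power constraints so that the lone-transmitter value $v(\{2\})$ stays bounded away from the joint value $v(\{1,2\})$ — to make the strict inequality unambiguous. A secondary, purely expository obstacle is making precise the phrase ``might not form'': I would state it as ``the core is empty,'' invoke Definition \ref{defcore} and the discussion after Remark \ref{remcore} (the core as the set of stable, unobjectionable allocations), and note that emptiness of the core is exactly the formal statement that no stable efficient allocation exists, so at least one transmitter has a profitable unilateral deviation to operating alone. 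One can additionally remark that $v(\{1,2\})<v(\{2\})$ itself already says the grand coalition is strictly \emph{worse} for transmitter $2$ than the singleton coalition, which is the strongest possible form of ``cooperation is not beneficial.''
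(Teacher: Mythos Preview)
Your overall strategy---exhibit concrete parameters for which one singleton value strictly exceeds the grand-coalition value, so that the core is empty and the grand coalition cannot be stable---is exactly the paper's. The paper simply writes down $(\Gamma_1,\Gamma_2,h_1,h_2,\Lambda)=(1,0.4,0.1,1.5,0.1)$, checks $\Gamma_1>(\sqrt{\Gamma_2}+\sqrt{\Lambda})^2$, and verifies numerically that $v(\{1\})>0.2362>0.2095>v(\{1,2\})$.

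There is, however, a genuine gap in your attempt at a witness. With $\Lambda=0$ and your choice $h_1=1$ (the boundary value $h_1=h_\Lambda$), the desired strict inequality $v(\{2\})>v(\{1,2\})$ is \emph{impossible}, not merely hard to hit. Indeed, with $h_1=1$ your displayed inequality reduces, after clearing the common numerator $1+\Gamma_1+\Gamma_2$, to
\[
(1+\Gamma_1)(1+h_2\Gamma_2)\;<\;1+\Gamma_1+h_2\Gamma_2,
\]
i.e.\ $\Gamma_1 h_2\Gamma_2<0$, which never holds. This is why every numerical choice you tried collapsed to equality (you were taking $h_2=0$, which gives $\Gamma_1 h_2\Gamma_2=0$); nudging $h_2>0$ makes it strictly \emph{worse}, not better. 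So the ``routine fiddling'' cannot succeed at the boundary.

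The fix is immediate once you see this: take $h_1$ strictly larger than $h_\Lambda$. Repeating the same algebra with general $h_1$ gives the equivalent condition $1+h_2\Gamma_2<h_1$, so e.g.\ $(\Gamma_1,\Gamma_2,h_1,h_2,\Lambda)=(1,2,2,0,0)$ works cleanly: $v(\{1\})=0$, $v(\{2\})=\tfrac12\log 2$, $v(\{1,2\})=\tfrac12\log\tfrac{4}{3}$, and $\log 2>\log\tfrac{4}{3}$. Alternatively, mirror the paper's choice and let the \emph{other} gain be large (small $h_1$, $h_2>h_\Lambda$, $\Gamma_1>\Gamma_2$), showing $v(\{1\})>v(\{1,2\})$; the intuition there is cleaner---the ``good'' transmitter is hurt by joining forces with a partner whose channel to the eavesdropper is stronger than to the receiver.
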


\begin{proof}
It is sufficient to exhibit an example for which $h_1 \notin [0,h_{\Lambda}[$ or $h_2 \notin [0,h_{\Lambda}[$ and $v(\{ 1,2\}) < v(\{1\})$. We set 
$h_1 = 0.1$, $h_2 = 1.5$, $\Gamma_1 = 1$, $\Gamma_2= 0.4$, and $\Lambda =0.1$. Observe that $\Gamma_1 = 1 > 0.9 = (\sqrt{\Gamma_2}+\sqrt{\Lambda})^2$.
We numerically obtain
\begin{align}
v(\{ 1,2\}) 
= \frac{1}{2} \log \left(\frac{1 + h_{\Lambda}(\Gamma_1 +  \Gamma_2)}{1+ h_1\Gamma_1 + h_2 \Gamma_2} \right)
 < 0.2095,
\end{align}
\begin{align}
 0.2362 
 < \frac{1}{2} \log \left( \frac{1 +  \frac{\Gamma_1}{1+ (\sqrt{\Gamma_2}+\sqrt{\Lambda})^2} }{1+ h_1\Gamma_1} \right)
 = v(\{1\}).
\end{align}

Hence, $v(\{ 1,2\}) < v(\{1\})$, and Transmitter $1$ has no interest in engaging in a collective protocol with Transmitter~2.
\end{proof}
Note that for any parameters $(h_1,h_2, \Gamma_1, \Gamma_2, \Lambda)$ such that $v(\{ 1,2\}) < v(\{1\})+v(\{2\})$, cooperation is not in the best interest of the transmitters and the core of the game is empty. It is possible to refine Proposition \ref{propcounter} and provide in Proposition~\ref{proposition4} a necessary and sufficient condition on the model parameters for the core of the game to be empty. 
\begin{prop} \label{proposition4}
For $l \in \{1,2\}$, define $\bar{l} \triangleq 3-l.$ Assume $\Lambda = 0$ and $v(\{1,2\})>0$. Cooperation is not in the best interest of both transmitters and the core of the game is empty if and only if
\begin{align}
\max_{l\in \{ 1,2 \}}  	\min \left[ \frac{\Gamma_l}{\Gamma_{\bar{l}}} ,    \frac{1+ h_1\Gamma_1 + h_2 \Gamma_2 }{(1+ h_l\Gamma_l)(1+ \Gamma_{\bar{l}})} \right]  > 1. 
 \label{eqprop4}
\end{align}
\end{prop}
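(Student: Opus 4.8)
The plan is to work directly from the characterization of the value function in Theorem~\ref{theorem6ref} with $\Lambda=0$. For $\Lambda=0$ we have $h_\Lambda = 1$, and the indicator conditions $\Gamma_1 > \Gamma_2$, $\Gamma_2 > \Gamma_1$ become complementary (for $\Gamma_1 \neq \Gamma_2$), so exactly one of $v(\{1\})$, $v(\{2\})$ can be nonzero; say $\Gamma_l \geq \Gamma_{\bar l}$, so that $v(\{\bar l\}) = 0$ while $v(\{l\}) = \left[\frac12 \log\left(\frac{1 + \Gamma_l/(1+\Gamma_{\bar l})}{1+h_l\Gamma_l}\right)\right]^+$. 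Since the core of a two-player superadditive game is nonempty if and only if $v(\{1,2\}) \geq v(\{1\}) + v(\{2\})$, and since we assume $v(\{1,2\})>0$, emptiness of the core is equivalent to $v(\{1,2\}) < v(\{l\})$, i.e. $v(\{l\})>0$ and
\[
\frac{1 + \Gamma_1 + \Gamma_2}{1 + h_1\Gamma_1 + h_2\Gamma_2} < \frac{1 + \Gamma_l/(1+\Gamma_{\bar l})}{1 + h_l\Gamma_l}.
\]

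First I would simplify the numerator on the right: $1 + \Gamma_l/(1+\Gamma_{\bar l}) = (1 + \Gamma_{\bar l} + \Gamma_l)/(1+\Gamma_{\bar l})$, which shares the numerator $1+\Gamma_1+\Gamma_2$ with the left-hand side. Cancelling it, the displayed inequality becomes
\[
\frac{1}{1 + h_1\Gamma_1 + h_2\Gamma_2} < \frac{1}{(1+\Gamma_{\bar l})(1 + h_l\Gamma_l)},
\]
i.e. $(1+\Gamma_{\bar l})(1+h_l\Gamma_l) < 1 + h_1\Gamma_1 + h_2\Gamma_2$, which is exactly the statement that the second term in the $\min$ of \eqref{eqprop4} exceeds $1$. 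Separately, the condition $v(\{l\})>0$ — equivalently that the argument of the log exceeds $1$, i.e. $1 + \Gamma_l/(1+\Gamma_{\bar l}) > 1 + h_l\Gamma_l$, i.e. $\Gamma_l > h_l\Gamma_l(1+\Gamma_{\bar l})$ — I would show is automatically implied once $(1+\Gamma_{\bar l})(1+h_l\Gamma_l) < 1 + h_1\Gamma_1 + h_2\Gamma_2$ holds together with $\Gamma_l \geq \Gamma_{\bar l}$; this is a short monotonicity check. The remaining ingredient is the indicator $\mathds 1\{\Gamma_l > \Gamma_{\bar l}\}$ hidden in $v(\{l\})$: it corresponds precisely to the first term $\Gamma_l/\Gamma_{\bar l} > 1$ in the $\min$. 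So for the fixed choice $\Gamma_l \geq \Gamma_{\bar l}$, ``core empty'' $\iff$ ``$\min\left[\Gamma_l/\Gamma_{\bar l},\ \frac{1+h_1\Gamma_1+h_2\Gamma_2}{(1+h_l\Gamma_l)(1+\Gamma_{\bar l})}\right] > 1$.'' Taking the $\max$ over $l\in\{1,2\}$ is harmless because the term for the index with the smaller $\Gamma$ has its first $\min$-argument $\leq 1$, hence contributes nothing; this gives \eqref{eqprop4} without having to assume which of $\Gamma_1,\Gamma_2$ is larger, and also handles the degenerate case $\Gamma_1=\Gamma_2$ (where both $v(\{1\})=v(\{2\})=0$ and the left side of \eqref{eqprop4} is $\leq 1$, consistently with a nonempty core).

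I expect the main obstacle to be bookkeeping rather than conceptual: carefully tracking the $[\cdot]^+$ and the indicator so that the equivalence is genuinely an ``if and only if'' (in particular verifying that $v(\{l\})>0$ is not an extra independent constraint but is subsumed), and confirming that the $\max_l$ in \eqref{eqprop4} correctly collapses to the single relevant index regardless of ties. I would also double-check the edge behavior when $v(\{l\})=0$ for both $l$: then the core is automatically nonempty (since $v(\{1,2\})\geq 0 = v(\{1\})+v(\{2\})$, with strict positivity assumed), and one must see that \eqref{eqprop4} indeed fails, which follows because $\Gamma_l/\Gamma_{\bar l}\leq 1$ for whichever index has the smaller (or equal) power constraint forces that term of the $\min$ below $1$.
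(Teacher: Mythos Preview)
Your proposal is correct and follows essentially the same route as the paper: the paper's proof simply observes that for $\Lambda=0$ the indicators $\mathds 1\{\Gamma_1>\Gamma_2\}$ and $\mathds 1\{\Gamma_2>\Gamma_1\}$ are mutually exclusive, so $v(\{1\})+v(\{2\})=\max(v(\{1\}),v(\{2\}))$, and then states that \eqref{eqprop4} is the direct translation of $v(\{1,2\})<\max(v(\{1\}),v(\{2\}))$ via \eqref{eqgame2} and the hypothesis $v(\{1,2\})>0$; you carry out that translation in detail, including the cancellation of the common factor $1+\Gamma_1+\Gamma_2$ and the edge-case bookkeeping. One small remark: you invoke ``two-player superadditive game,'' but superadditivity is precisely what can fail here (Proposition~\ref{propcounter}); the core characterization $\mathcal C(v)\neq\emptyset\Leftrightarrow v(\{1,2\})\geq v(\{1\})+v(\{2\})$ holds for any two-player game, and your argument that $v(\{l\})>0$ is automatic is most cleanly seen not as a monotonicity check but as transitivity: the second $\min$-term exceeding $1$ together with $v(\{1,2\})>0$ forces the log argument of $v(\{l\})$ above $1$.
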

\begin{proof}
Observe that when $\Lambda = 0$, $v(\{1\}) + v(\{2\}) = \max ( v(\{1\}) , v(\{2\}))$.	\eqref{eqprop4} translates the condition $v(\{ 1,2\}) < \max ( v(\{1\}) , v(\{2\}))$ using \eqref{eqgame2} and the fact that $v(\{1,2\})>0 \Leftrightarrow  \Gamma_1 +  \Gamma_2 >  h_1\Gamma_1 + h_2 \Gamma_2$.
\end{proof}

As a numerical example, we set $\Gamma_1 = 1$, $\Gamma_2= 0.4$, and $\Lambda =0.1$. We then vary $h_1$ and $h_2$ between $0$ and $2$ with a step size of $0.1$, and plot in Figure \ref{fig:ex2} a red cross when the pair $(h_1,h_2)$ implies that cooperation between the two transmitters is beneficial. However, it is an open problem to provide a general closed-form formula that depends on the channel parameters $h_1$ and $h_2$ to determine whether cooperation is beneficial to the transmitters.

\begin{figure} 
\centering   
  \includegraphics[width=8.5cm]{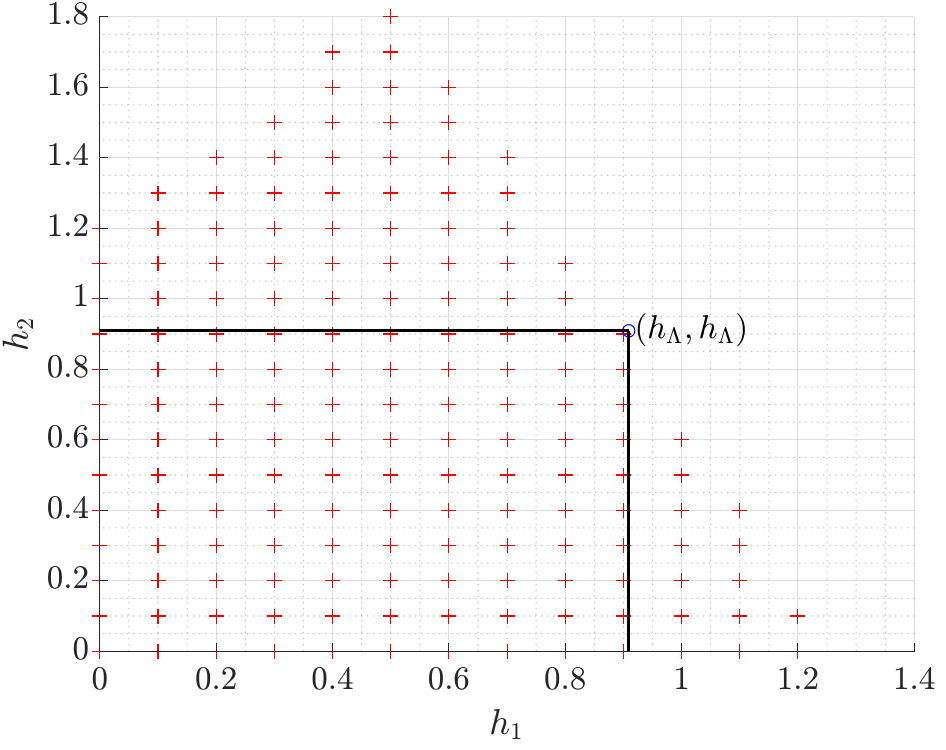}
  \caption{Fix $\Gamma_1 = 1$, $\Gamma_2= 0.4$, and $\Lambda =0.1$. $h_1$ and $h_2$ vary between $0$ and $2$ with a step size of $0.1$. A red cross for the pair $(h_1,h_2)$ indicates that cooperation between the two transmitters is beneficial. } \label{fig:ex2}
\end{figure}

\subsection{Stable and fair allocation when cooperation is in the best interest of all} \label{sec:allocn}

In the following, we focus on the case $h_1 \in [0,h_{\Lambda}[$, $h_2 \in [0,h_{\Lambda}[$ and show that it is a sufficient condition for cooperation to be in the best interest of both transmitters. Note that, in general, additional values for $h_1$ and $h_2$ could lead to cooperation being in the best interest of the transmitters, as illustrated in Figure \ref{fig:ex2}. We also assume that $\min(\Gamma_1 ,\Gamma_2) > \Lambda$, the problem is trivial otherwise. We further propose as fair solution concept, a unique achievable rate in the core that satisfies a series of axioms. Note that Definition \ref{defax} is not well-defined for a non-degraded GMAC-WT-AJ. Indeed, the Envy-freeness axiom cannot be only defined through the power constraints of the transmitters. More specifically, the configuration associated with a transmitter does not only depend on its power constraint but also on its channel gain with respect to the eavesdropper. We thus propose the following definitions.

\begin{defn}
For any $i$, for any $\Gamma^*\in ]0, \Gamma_i]$, for any let  $v_{\Gamma_i \leftarrow \Gamma^*, h_i \leftarrow h^*}$ be the same game as $v$ where the power constraint $\Gamma_i$ of Transmitter $i$ is replaced by $\Gamma^*$, and the channel gain associated with Transmitter $i$ at the eavesdropper is replaced by $h^*$.
\end{defn}

\begin{defn} \label{defax2}
A fair secrecy rate allocation $\{R^*_l(v)\}_{l\in\mathcal{L}}$ should satisfy the following axioms.
 \begin{enumerate}[(i)]
 \item \textbf{Efficiency}: Defined as in Definition \ref{defax}.
 \item \textbf{Symmetry}: Defined as in Definition \ref{defax}.
 \item \textbf{Envy-freeness}: For any $i,j \in \mathcal{L}$, if there exists $\Gamma_{i}^* \in ]0, \Gamma_i]$ such that $v_{\Gamma_i \leftarrow \Gamma_{i}^*, h_i \leftarrow h_i}( \mathcal{L} ) = v_{\Gamma_i \leftarrow \Gamma_j, h_i \leftarrow h_j}( \mathcal{L} )$, then $$R^*_i(v_{\Gamma_i \leftarrow \Gamma_{i}^*, h_i \leftarrow h_i}) = R^*_j(v).$$
 The meaning of this envy-freeness axiom is the following. If Transmitter $i$ decides to conserve power by transmitting with some power $\Gamma_{i}^* \in ]0, \Gamma_i]$, and if this $\Gamma_{i}^*$ makes Transmitter~$i$ and Transmitter~$j$  contribute equally to the grand coalition value, i.e., $v_{\Gamma_i \leftarrow \Gamma_{i}^*, h_i \leftarrow h_i}( \mathcal{L} ) = v_{\Gamma_i \leftarrow \Gamma_j, h_i \leftarrow h_j}( \mathcal{L} )$, then Transmitter $i$ receives the same payoff as Transmitter $j$.
\end{enumerate}  
\end{defn}

Note that our new definition of envy-freeness recovers as special case the one in Definition~\ref{defax} for degraded channels, i.e., when $h_1=h_2$. However, it is a priori unclear whether or not this generalization will still lead to a unique solution that is in the core and achievable. We first show in Lemma \ref{lemstar} that either $\Gamma_1^*$ or $\Gamma_2^*$ exists, where $\Gamma_1^*$, $\Gamma_2^*$ are as in Definition \ref{defax2}.
\begin{lem} \label{lemstar}
Suppose that $h_1 \in [0,h_{\Lambda}[$, $h_2 \in [0,h_{\Lambda}[$. There exists $\Gamma_{1}^* \in ]0, \Gamma_1]$ such that $v_{\Gamma_1 \leftarrow \Gamma_{1}^*, h_1 \leftarrow h_1}( \mathcal{L} ) = v_{\Gamma_1 \leftarrow \Gamma_2, h_1 \leftarrow h_2}( \mathcal{L} )$, or there exists $\Gamma_{2}^* \in ]0, \Gamma_2]$ such that $v_{\Gamma_2 \leftarrow \Gamma_{2}^*, h_2 \leftarrow h_2}( \mathcal{L} ) = v_{\Gamma_2 \leftarrow \Gamma_1, h_2 \leftarrow h_1}( \mathcal{L} )$. Moreover, if $(\Gamma_{1}^*,\Gamma_{2}^*) \neq (\Gamma_{1},\Gamma_{2})$, then $\Gamma_{1}^* \in ]0, \Gamma_1]$ or (mutually exclusive or)  $\Gamma_{2}^* \in ]0, \Gamma_2]$.
\end{lem}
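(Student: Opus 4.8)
The plan is to work directly with the closed-form expression for $v(\{1,2\})$ from Theorem \ref{theorem6ref} and to study, for each $l\in\{1,2\}$, the function $\Gamma^* \mapsto v_{\Gamma_l \leftarrow \Gamma^*, h_l \leftarrow h_l}(\mathcal{L})$ on the interval $]0,\Gamma_l]$. Since $h_1,h_2\in[0,h_\Lambda[$ and $\min(\Gamma_1,\Gamma_2)>\Lambda$, one checks that $v(\{1,2\})>0$, so the positive-part operator is inactive and
\[
g_l(\Gamma^*)\triangleq v_{\Gamma_l \leftarrow \Gamma^*, h_l \leftarrow h_l}(\mathcal{L}) = \frac{1}{2}\log\!\left(\frac{1 + h_\Lambda(\Gamma^* + \Gamma_{\bar l})}{1+ h_l\Gamma^* + h_{\bar l}\Gamma_{\bar l}}\right)
\]
for $\Gamma^*$ in a neighborhood of the relevant range (here $\bar l = 3-l$). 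First I would verify that $g_l$ is continuous and strictly monotone in $\Gamma^*$: its derivative has the sign of $h_\Lambda(1+h_l\Gamma^*+h_{\bar l}\Gamma_{\bar l}) - h_l(1+h_\Lambda(\Gamma^*+\Gamma_{\bar l})) = h_\Lambda - h_l + h_{\bar l}\Gamma_{\bar l}(h_\Lambda - h_l)/\ldots$ — more cleanly, $h_\Lambda(1+h_l\Gamma^*) - h_l(1+h_\Lambda\Gamma^*) = h_\Lambda - h_l > 0$ plus the $\Gamma_{\bar l}$ terms which also carry the factor $h_\Lambda - h_l > 0$, so $g_l$ is strictly increasing. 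Hence $g_l$ is a strictly increasing bijection from $]0,\Gamma_l]$ onto $]g_l(0^+), g_l(\Gamma_l)] = ]g_l(0^+), v(\{1,2\})]$.

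Next I would invoke the envy-freeness target value. By symmetry of the formula for $v(\{1,2\})$ under swapping the pair $(\Gamma_1,h_1)\leftrightarrow(\Gamma_2,h_2)$, we have $v_{\Gamma_1 \leftarrow \Gamma_2, h_1 \leftarrow h_2}(\mathcal{L}) = v_{\Gamma_2 \leftarrow \Gamma_2, h_2 \leftarrow h_2}(\mathcal{L})$ evaluated with both users having the ``user-2'' configuration; call this quantity $c_1$, and symmetrically $c_2 \triangleq v_{\Gamma_2 \leftarrow \Gamma_1, h_2 \leftarrow h_1}(\mathcal{L})$. The claim ``$\Gamma_1^*\in]0,\Gamma_1]$ exists'' is exactly ``$c_1 \in ]g_1(0^+), v(\{1,2\})]$'', i.e., $g_1(0^+) < c_1 \le v(\{1,2\})$, and likewise for $\Gamma_2^*$. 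I would then show that at least one of these two membership conditions holds. Observe that $g_1(\Gamma_1) = g_2(\Gamma_2) = v(\{1,2\})$, so the ``$\le v(\{1,2\})$'' half is automatic whenever $c_l \le v(\{1,2\})$; and the point is that when user $l$ shrinks its power toward the other user's configuration, monotonicity forces $c_l \le v(\{1,2\})$ precisely for the user whose configuration is the ``more favorable'' one to the coalition. Concretely, WLOG order the users so that $v_{\text{both}=(\Gamma_1,h_1)}(\mathcal{L}) \ge v_{\text{both}=(\Gamma_2,h_2)}(\mathcal{L})$, i.e., $c_2 \ge c_1$; since replacing one user's favorable configuration by a less favorable one cannot increase the coalition value, one gets $c_1 \le v(\{1,2\}) \le c_2$, so the target $c_1$ is attainable by $g_1$ (using $g_1(0^+)<c_1$, which follows because $g_1(0^+)$ corresponds to turning user 1 off entirely, giving a strictly smaller value than the genuine two-user value $c_1$ obtained when user 1 mimics user 2). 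This yields existence of $\Gamma_1^*$.

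For the ``mutually exclusive'' clause, I would argue that if $(\Gamma_1^*,\Gamma_2^*)\ne(\Gamma_1,\Gamma_2)$ then at most one of the two equations can be solved inside the half-open interval with a value strictly below the endpoint: indeed $\Gamma_1^*=\Gamma_1$ iff $c_1 = v(\{1,2\})$ iff the two configurations already contribute equally, which is the case $(\Gamma_1^*,\Gamma_2^*)=(\Gamma_1,\Gamma_2)$ that we excluded; so under the exclusion exactly one of $c_1,c_2$ is $<v(\{1,2\})$ and the other is $>v(\{1,2\})$ (strict, since equality of the two configuration-values is again the excluded case), and only the user with $c_l<v(\{1,2\})$ admits a strict interior solution $\Gamma_l^*\in]0,\Gamma_l]$ with $\Gamma_l^*<\Gamma_l$, while for the other user $g_{\bar l}$ never reaches the value $c_{\bar l}>v(\{1,2\})$ on $]0,\Gamma_{\bar l}]$. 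The main obstacle I anticipate is bookkeeping around the boundary/degenerate cases: ensuring $v(\{1,2\})>0$ and $g_l(0^+)<c_l$ strictly so that the solution lies in $]0,\Gamma_l]$ rather than at $0$, and handling the ``equality of configurations'' borderline cleanly so the exclusive-or is genuinely exclusive; the monotonicity computation itself is routine once the sign $h_\Lambda - h_l > 0$ is isolated.
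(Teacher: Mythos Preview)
Your route is genuinely different from the paper's: the paper solves the equation $g_l(\Gamma^*)=c_l$ explicitly (it is linear in $\Gamma^*$) to obtain closed forms
\[
\Gamma_1^* = \Gamma_2\,\frac{h_\Lambda - h_2}{\,h_\Lambda - h_1 + 2h_\Lambda\Gamma_2(h_2 - h_1)\,},\qquad
\Gamma_2^* = \Gamma_1\,\frac{h_\Lambda - h_1}{\,h_\Lambda - h_2 + 2h_\Lambda\Gamma_1(h_1 - h_2)\,},
\]
and then runs a sign/case analysis on these expressions. Those closed forms are reused downstream (in the proofs of Proposition~\ref{Propaxi} and Theorem~\ref{propstar5}), so the paper's computation buys more than the lemma itself. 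Your intermediate-value-theorem strategy is cleaner conceptually, but as written it has two real gaps.

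First, your monotonicity step is wrong. Expanding $h_\Lambda(1+h_l\Gamma^*+h_{\bar l}\Gamma_{\bar l}) - h_l(1+h_\Lambda(\Gamma^*+\Gamma_{\bar l}))$ gives $(h_\Lambda - h_l) + h_\Lambda\Gamma_{\bar l}(h_{\bar l} - h_l)$: the $\Gamma_{\bar l}$ term carries the factor $h_{\bar l}-h_l$, \emph{not} $h_\Lambda-h_l$, and it can be negative and dominate. So $g_l$ need not be increasing. What is true --- and is what you actually need --- is that this expression is independent of $\Gamma^*$, so $g_l$ is monotone of fixed sign; combined with $g_l(0^+)<c_l$ (which you do verify), this salvages both the IVT existence step and the exclusive-or step, since for the user with $c_{\bar l}>v(\{1,2\})$ both endpoints $g_{\bar l}(0^+)$ and $g_{\bar l}(\Gamma_{\bar l})=v(\{1,2\})$ lie strictly below $c_{\bar l}$ and monotonicity forbids overshoot. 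Second, the sandwich $c_1\le v(\{1,2\})\le c_2$ is asserted via the phrase ``replacing one user's favorable configuration by a less favorable one cannot increase the coalition value'', but ``favorable'' is undefined and this is not a proof. What you must check is the algebraic fact that the two inequalities $c_1\le v(\{1,2\})$ and $c_2\ge v(\{1,2\})$ are \emph{equivalent}: cross-multiplying, both reduce to the single sign condition $\Gamma_1(h_\Lambda-h_1)-\Gamma_2(h_\Lambda-h_2)+2h_\Lambda\Gamma_1\Gamma_2(h_2-h_1)\ge 0$. With these two fixes your argument goes through.
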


\begin{proof}
See Appendix \ref{App_lemstar}.
\end{proof}
Then, in Proposition \ref{Propaxi}, we show that there exists a unique secrecy rate allocation as axiomatized in Definition~\ref{defax2}, and that this rate allocation is achievable, and belongs to the core.
\begin{prop} \label{Propaxi}
Suppose that $h_1 \in [0,h_{\Lambda}[$, $h_2 \in [0,h_{\Lambda}[$. Assume that $\Gamma_{1}^* \in ]0, \Gamma_1]$ (Exchange the role of the two transmitters if $\Gamma_{2}^* \in ]0, \Gamma_2]$). The unique solution that satisfies the three axioms of Definition \ref{defax2} is
\begin{subequations} \label{eqalloc2n}
\begin{align}
R_2^* (v) & = \frac{1}{2} \left[ \frac{1}{2} \log \left( \frac{1 + 2  h_{\Lambda}\Gamma_2}{1 + 2 h_2 \Gamma_2} \right) \right], \\
R_1^* (v) & =  \frac{1}{2} \log \left( \frac{1 +   h_{\Lambda}(\Gamma_1+ \Gamma_2)}{1  + h_1 \Gamma_1 + h_2 \Gamma_2} \right) -   R_2^* (v).
\end{align}
\end{subequations}
\end{prop}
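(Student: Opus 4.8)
\emph{Overview.} The plan is to prove existence and uniqueness separately, taking Lemma~\ref{lemstar} as given so that, after possibly exchanging the two transmitters, a valid reduced power $\Gamma_1^*\in\,]0,\Gamma_1]$ is available with $v_{\Gamma_1\leftarrow\Gamma_1^*,\,h_1\leftarrow h_1}(\mathcal{L})=v_{\Gamma_1\leftarrow\Gamma_2,\,h_1\leftarrow h_2}(\mathcal{L})$. The standing hypotheses $h_1,h_2\in[0,h_\Lambda[$ are used in two ways: they make $v(\{1,2\})>0$ (so that cooperation is beneficial and the game is superadditive in its only nontrivial instance), and, more importantly, they make the ``symmetrized'' sum-rate $\tfrac12\log\frac{1+2h_\Lambda\Gamma_2}{1+2h_2\Gamma_2}$ strictly positive, which lets every $[\,\cdot\,]^+$ coming from Theorem~\ref{theorem6ref} be dropped in the computations.

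\emph{Uniqueness.} Let $\{R^*_l\}$ be any rule satisfying the three axioms of Definition~\ref{defax2}, and set $v'\triangleq v_{\Gamma_1\leftarrow\Gamma_1^*,\,h_1\leftarrow h_1}$. By the defining property of $\Gamma_1^*$, $v'(\mathcal{L})=v_{\Gamma_1\leftarrow\Gamma_2,\,h_1\leftarrow h_2}(\mathcal{L})$, and the right-hand side is the grand-coalition value of the game in which Transmitter~1 carries exactly Transmitter~2's parameters, hence equals $\tfrac12\log\frac{1+2h_\Lambda\Gamma_2}{1+2h_2\Gamma_2}$ by Theorem~\ref{theorem6ref}. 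The key point is that, inside $v'$, the reduction required of Transmitter~1 by the Envy-freeness axiom is the trivial one: taking no further reduction already gives $(v')_{\Gamma_1\leftarrow\Gamma_1^*,\,h_1\leftarrow h_1}(\mathcal{L})=v'(\mathcal{L})=(v')_{\Gamma_1\leftarrow\Gamma_2,\,h_1\leftarrow h_2}(\mathcal{L})$, so Envy-freeness applied \emph{within} $v'$ yields $R^*_1(v')=R^*_2(v')$; with Efficiency in $v'$ this gives $R^*_1(v')=R^*_2(v')=\tfrac12 v'(\mathcal{L})=\tfrac14\log\frac{1+2h_\Lambda\Gamma_2}{1+2h_2\Gamma_2}$. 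Finally, Envy-freeness applied to $v$ itself (with $i=1$, $j=2$, reduced power $\Gamma_1^*$) gives $R^*_1(v')=R^*_2(v)$, whence $R^*_2(v)=\tfrac14\log\frac{1+2h_\Lambda\Gamma_2}{1+2h_2\Gamma_2}$, and Efficiency in $v$ forces $R^*_1(v)=v(\{1,2\})-R^*_2(v)=\tfrac12\log\frac{1+h_\Lambda(\Gamma_1+\Gamma_2)}{1+h_1\Gamma_1+h_2\Gamma_2}-R^*_2(v)$. These coincide with \eqref{eqalloc2n}, proving uniqueness.

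\emph{Existence.} It remains to verify that \eqref{eqalloc2n} satisfies all three axioms. Efficiency is immediate from the form of $R^*_1(v)$. Symmetry holds because the rule is defined in a label-equivariant way: relabeling the transmitters exchanges the ``reference'' and ``reduced'' roles of the two mutually exclusive alternatives of Lemma~\ref{lemstar}, so $R^*_l(v)=R^*_{\pi(l)}(\pi v)$ for the transposition $\pi$; the only situation requiring a direct check is the degenerate one $(\Gamma_1^*,\Gamma_2^*)=(\Gamma_1,\Gamma_2)$, where one verifies both choices produce the same pair. For Envy-freeness, when the hypothesis holds for $(i,j)=(1,2)$, the game $v_{\Gamma_1\leftarrow\Gamma_1^*,\,h_1\leftarrow h_1}$ has the symmetrized sum-rate as its grand-coalition value, the rule splits it equally there, and so assigns Transmitter~1 the value $\tfrac14\log\frac{1+2h_\Lambda\Gamma_2}{1+2h_2\Gamma_2}=R^*_2(v)$; the case $(i,j)=(2,1)$ is symmetric or vacuous. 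To complete the claim announced before the proposition, one further checks that $(R^*_1(v),R^*_2(v))$ is nonnegative and in the core --- which, by Efficiency, reduces to $R^*_1(v)\ge v(\{1\})$ and $R^*_2(v)\ge v(\{2\})$, both following from monotonicity of the logarithmic expressions of Theorem~\ref{theorem6ref} under $h_1,h_2<h_\Lambda$ --- and is achievable, since $R^*_{1,2}=v(\{1,2\})$ is attained on a whole segment of rate pairs (cf.\ \cite{ISIT17}) containing this point, exactly as in the proof of Theorem~\ref{th3}.

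\emph{Main obstacle.} The delicate point is obtaining $R^*_1(v')=R^*_2(v')$. In the degraded setting of Proposition~\ref{prop1}, reducing the stronger transmitter's power to that of the weaker one yields a genuinely symmetric game and the equal split comes from the Symmetry axiom alone. Here that fails: the two transmitters of $v'$ still have different eavesdropper gains $h_1\neq h_2$, so Symmetry is silent. The equal split must instead be extracted from the self-referential application of Envy-freeness \emph{inside} $v'$, which is legitimate precisely because $\Gamma_1^*$ was selected so that, at the level of the grand-coalition value, Transmitter~1 in $v'$ already ``behaves like'' Transmitter~2. Getting this reduction right, and disposing of the boundary and degenerate cases flagged by Lemma~\ref{lemstar}, is the crux; everything else is the routine algebra of the logarithmic expressions in Theorem~\ref{theorem6ref}.
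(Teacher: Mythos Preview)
Your argument for Proposition~\ref{Propaxi} proper (existence and uniqueness of the allocation satisfying the three axioms) is correct and follows the same route as the paper: the paper merely says ``unicity is proved similar to the proof of Proposition~\ref{prop1}'' and checks the axioms via~\eqref{eqfndg}, whereas you spell out explicitly the recursive step --- applying Envy-freeness \emph{within} $v'$ to force the equal split $R^*_1(v')=R^*_2(v')$, then lifting back to $v$. Your identification of this as the delicate point (Symmetry alone being silent because $h_1\neq h_2$ in $v'$) is exactly right.

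One caution: the last part of your ``Existence'' paragraph oversteps the proposition. Core membership and achievability are \emph{not} part of Proposition~\ref{Propaxi}; they are the content of Theorem~\ref{propstar5}, which the paper proves in two dedicated appendices with substantial algebraic work (bounding $R_1^*\geq v(\{1\})$ uses the defining inequality~\eqref{eq87ff} for $\Gamma_1^*$ in a nontrivial way, and achievability requires checking the individual constraints of $\mathcal{R}_a$, not just the sum-rate). Your one-line appeal to ``monotonicity of the logarithmic expressions'' would not suffice there. For the proposition as stated, though, your proof is complete.
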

\begin{proof}
 One immediately sees that the allocation described in \eqref{eqalloc2n} satisfies the efficiency and symmetry axioms. One can also see that the Envy-freeness axiom of Definition \ref{defax2} is satisfied using \eqref{eqfndg} derived in the proof of Lemma \ref{lemstar}. Finally, unicity is proved similar to the proof of Proposition~\ref{prop1}.
\end{proof}

Next, assume $\Gamma_1 ,\Gamma_2 > \Lambda$. Recall that an achievable region for the two-transmitter GMAC-WT-AJ with parameters $( (\Gamma_1, \Gamma_2), (h_1,h_2), \Lambda, \sigma^2_Y, \sigma^2_Z)$
 is given by \cite{ISIT17}
\begin{align}
&\mathcal{R}_a \triangleq \nonumber \\
&\left\{ (R_1, R_2): R_1  \leq \left[ \frac{1}{2} \log \left( \frac{1+ h_{\Lambda} \Gamma_1}{1 + \Gamma_1 h_1 (1 + h_2 \Gamma_2)^{-1}}\right) \right]^+,\right. \nonumber \\
&\left. \phantom{----l-}
R_2 \leq \left[ \frac{1}{2} \log \left( \frac{1+ h_{\Lambda} \Gamma_2}{1 + \Gamma_2 h_2 (1 + h_1 \Gamma_1)^{-1}}\right) \right]^+, \right. \nonumber \\
&\left. \phantom{---}R_1 +R_2 \leq \left[ \frac{1}{2} \log \left( \frac{1+ h_{\Lambda} ( \Gamma_1 +\Gamma_2)}{1 + \Gamma_1 h_1 + \Gamma_2 h_2 }\right) \right]^+ \right\}.
\end{align}

We show in Theorem \ref{propstar5} that $(R_1^* (v),R_2^* (v))$ defined in Proposition \ref{Propaxi} is not only the unique allocation that satisfies the axioms of Definition \ref{defax2}, but also an allocation that is achievable and belongs to the~core.

\begin{thm} \label{propstar5}
Consider $(R_1^* (v),R_2^* (v))$ as defined in Proposition \ref{Propaxi}.
\begin{enumerate}[(i)]
\item $(R_1^* (v),R_2^* (v))$ belongs to $\mathcal{R}_a$ and is thus achievable. \label{propstar5i}
\item $(R_1^* (v),R_2^* (v))$  belongs to the core. \label{propstar5ii}
\end{enumerate}
\end{thm}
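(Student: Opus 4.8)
The plan is to verify directly that the explicit allocation $(R_1^*(v),R_2^*(v))$ from \eqref{eqalloc2n} lands inside $\mathcal{R}_a$ and simultaneously satisfies the core inequalities. For part~\eqref{propstar5i}, first note that by efficiency $R_1^*(v)+R_2^*(v) = v(\{1,2\}) = \left[\frac{1}{2}\log\!\left(\frac{1+h_\Lambda(\Gamma_1+\Gamma_2)}{1+h_1\Gamma_1+h_2\Gamma_2}\right)\right]^+$, which is exactly the sum-rate bound of $\mathcal{R}_a$ (using $h_1,h_2\in[0,h_\Lambda[$ to guarantee positivity), so the sum constraint holds with equality. It then remains to check the two single-rate constraints $R_i^*(v)\le \frac{1}{2}\log\!\left(\frac{1+h_\Lambda\Gamma_i}{1+\Gamma_i h_i(1+h_{\bar i}\Gamma_{\bar i})^{-1}}\right)$. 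For $R_2^*(v)$ this is a one-variable inequality: I would show $\frac{1}{2}\cdot\frac{1}{2}\log\!\left(\frac{1+2h_\Lambda\Gamma_2}{1+2h_2\Gamma_2}\right)\le \frac{1}{2}\log\!\left(\frac{1+h_\Lambda\Gamma_2}{1+\Gamma_2 h_2(1+h_1\Gamma_1)^{-1}}\right)$ by monotonicity/concavity of $x\mapsto\frac{1}{2}\log\frac{1+h_\Lambda x}{1+h_2 x}$ together with the fact that $h_1\Gamma_1\ge 0$ only weakens the right-hand denominator; the factor $\frac12$ out front gives ample slack. The bound on $R_1^*(v)$ then follows because $R_1^*(v) = v(\{1,2\}) - R_2^*(v)$ and one can bound $v(\{1,2\})$ minus the (already verified) small $R_2^*(v)$ against the $R_1$-constraint of $\mathcal{R}_a$, again exploiting that the two-user sum-rate expression dominates the single-user expression plus the residual.

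For part~\eqref{propstar5ii}, recall from Remark~\ref{remcore} (adapted to $L=2$) that the core of a two-player superadditive game is the set of efficient allocations with $R_i \ge v(\{i\})$ for $i\in\{1,2\}$. Efficiency already holds, so I only need $R_1^*(v)\ge v(\{1\})$ and $R_2^*(v)\ge v(\{2\})$. Since $h_2\in[0,h_\Lambda[$ and we have assumed $\Gamma_2>\Lambda$, and since by construction $\Gamma_1^*\in]0,\Gamma_1]$ means Transmitter~1 is the ``larger'' one, one typically has $v(\{2\})=0$ or a small quantity; the key computation is to compare $R_2^*(v)=\frac14\log\!\left(\frac{1+2h_\Lambda\Gamma_2}{1+2h_2\Gamma_2}\right)$ with $v(\{2\})=\mathds{1}\{\Gamma_2>(\sqrt{\Gamma_1}+\sqrt{\Lambda})^2\}\left[\frac{1}{2}\log\!\left(\frac{1+\Gamma_2(1+(\sqrt{\Gamma_1}+\sqrt\Lambda)^2)^{-1}}{1+h_2\Gamma_2}\right)\right]^+$, and likewise for Transmitter~1. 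I expect both inequalities to reduce to showing that the grand-coalition contribution split is at least the stand-alone jammed rate, which should follow from the monotonicity of the logarithmic expressions in the ``effective interference'' parameter and from $h_i<h_\Lambda$.

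The main obstacle I anticipate is the $R_1^*(v)\ge v(\{1\})$ inequality in part~\eqref{propstar5ii}: unlike $R_2^*(v)$, which has a clean closed form, $R_1^*(v)$ is a difference of two logarithms, and $v(\{1\})$ is itself a nontrivial logarithmic expression in all the parameters $(\Gamma_1,\Gamma_2,h_1,h_2,\Lambda)$; establishing the inequality will likely require substituting the definitions, clearing logarithms into a single rational inequality, and arguing positivity of the resulting polynomial expression using $h_1,h_2<h_\Lambda=(1+\Lambda)^{-1}$ and $\Gamma_1,\Gamma_2>\Lambda$. A cleaner route, which I would try first, is to use the defining property of $\Gamma_1^*$ from Lemma~\ref{lemstar} (equation \eqref{eqfndg} referenced there) to rewrite $v(\{1,2\})$ in a symmetric form that makes the split $R_1^*(v)=v(\{1,2\})-R_2^*(v)$ manifestly exceed $v(\{1\})$; this mirrors how the degraded case was handled in Theorem~\ref{th3} and Appendix~\ref{App_th3}, and I would structure the argument to parallel that proof. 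Details are deferred to the appendix.

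\begin{proof}
See Appendix~\ref{App_propstar5}.
\end{proof}
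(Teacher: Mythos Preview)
Your outline is correct and follows essentially the same route as the paper: verify the sum-rate constraint via efficiency, handle the $R_2^*$ upper bound by concavity of $x\mapsto\log\frac{1+h_\Lambda x}{1+h_2 x}$, and reduce the remaining inequalities to polynomial positivity, invoking the defining relation of $\Gamma_1^*$ from Lemma~\ref{lemstar} where needed. One calibration worth making: your intuition about where the difficulty sits is inverted relative to the actual computation. In part~\eqref{propstar5i} the delicate bound is the \emph{upper} bound on $R_1^*$ (not $R_2^*$), and it is precisely there that the $\Gamma_1^*$ constraint $\Gamma_2(h_2-h_\Lambda)\ge -\Gamma_1(h_\Lambda-h_1+2h_\Lambda\Gamma_2(h_2-h_1))$ is needed, together with a case split on $\mathrm{sign}(h_2-h_1)$. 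Likewise in part~\eqref{propstar5ii}, the inequality $R_1^*\ge v(\{1\})$ is in fact the easier one (it follows from a direct polynomial expansion using only $h_\Lambda>h_2$, after the simplification $(\sqrt{\Gamma_2}+\sqrt{\Lambda})^2\ge \Gamma_2+\Lambda$), whereas $R_2^*\ge v(\{2\})$ is the one that again requires the $\Gamma_1^*$ constraint and is not generally trivial---your remark that ``one typically has $v(\{2\})=0$'' does not hold in the non-degraded setting, since $\Gamma_1^*\le\Gamma_1$ does not force $\Gamma_1\ge\Gamma_2$ when $h_1\ne h_2$.
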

\begin{proof}
$(i)$ and $(ii)$ are proved in Appendices \ref{App_propstar5} and \ref{App_propstar6}, respectively.
\end{proof}

We illustrate Theorem \ref{propstar5} in Figure \ref{fig:ex23} with an example when $(\Gamma_1,\Gamma_2,h_1,h_2, \Lambda)= (1,0.4,0.6,0.8,0.1)$. Similar to the degraded case, we observe that some rates in the core may not be achievable. However, as proved in Theorem \ref{propstar5}, we observe in  Figure \ref{fig:ex23} that the unique fair allocation (that satisfies Definition \ref{defax2}) characterized in Proposition \ref{Propaxi} is achievable and belongs to the core.

 \begin{figure} 
\centering   
  \includegraphics[width=8.5cm]{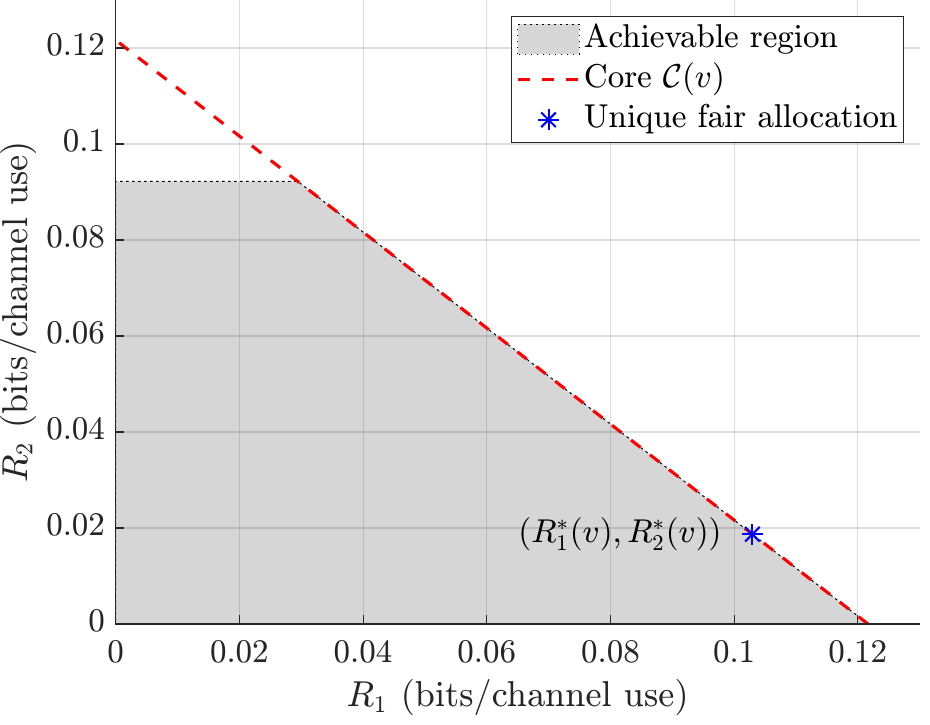}
  \caption{Representation of the known achievable region $\mathcal{R}_a$ from \cite{ISIT17}, the core $\mathcal{C}(v)$ defined in Definition \ref{defcore}, and the unique fair allocation $(R_1^*(v),R_2^*(v))$ defined in Proposition \ref{Propaxi} with the parameters $(\Gamma_1,\Gamma_2,h_1,h_2, \Lambda)= (1,0.4,0.6,0.8,0.1)$.} \label{fig:ex23}
\end{figure}

\section{Concluding Remarks} \label{sec:concl}
We studied the Gaussian multiple access wiretap channel with selfish transmitters.  Although a collective protocol can increase the individual secrecy rate of the transmitters, it can, at
the same time, lead to conflict of interests. We cast the problem as a coalitional game in which the value function is determined under information-theoretic
guarantees, i.e., the value associated with a coalition is computed with no restrictions on the
strategies that the transmitters outside the coalition can adopt. 

We showed for the degraded Gaussian multiple access wiretap channel that the grand coalition is in the best interest of all agents and stable,
in the sense that any coalition of transmitters has a disincentive to leave the grand coalition. We also
determined a fair secrecy rate allocation that is unique, achievable, and belongs to the core.

We also studied the non-degraded Gaussian multiple access wiretap channel with two selfish transmitters and determined that cooperation might not be in the best interest of all transmitters. We determined a necessary and sufficient condition for the core of the game to be non-empty. We also identified a range of channel parameters for which cooperation is always beneficial. In this case, we further proposed a solution concept for a fair allocation that satisfies a series of axioms that generalizes the one used for the degraded case, and determined a unique solution that is achievable and belongs to the core.

An open problem is the treatment of non-degraded Gaussian multiple access wiretap channels with an arbitrary number of transmitters. While the same coalitional game theory framework can still be applied to this case, the main difficulty lies in the characterization of the value function.

\appendices

\section{Proof of Theorem \ref{propcore}} \label{App_th4}

$(ii)$ follows from Theorem \ref{thregion}. 
We now show $(i)$, i.e., $\mathcal{C}^*(v) \subseteq \mathcal{C}(v)$. Let $\mathcal{S} \subseteq \mathcal{L}$ and assume $v(\mathcal{S})> 0$, i.e., $h_{\Lambda_{\mathcal{S}^c}}>h$. Define $\mathcal{S}(\Lambda) \triangleq \mathcal{S} \cap \mathcal{L} (\Lambda)$. Let $(R_l)_{l \in \mathcal{L}} \in \mathcal{C}^*(v)$, we have
\begin{subequations}
	\begin{align}
&R_{\mathcal{S}} \\
& = v(\mathcal{L}) - R_{\mathcal{S}^c}  \label{eq40a}\\
& \geq v(\mathcal{L}) - \frac{1}{2} \log \left[ \frac{1 + h_{\Lambda} \Gamma_{\mathcal{S}^c(\Lambda)}}{1 +h  \Gamma_{\mathcal{S}^c(\Lambda)}} \right] \label{eq40b}\\
& =  \frac{1}{2} \log \left[ \frac{1 + h_{\Lambda} \Gamma_{\mathcal{L}(\Lambda)} }{1 +  h_{\Lambda} \Gamma_{\mathcal{S}^c(\Lambda)}}  \frac{1 +h  \Gamma_{\mathcal{S}^c(\Lambda)}}{1 +h  \Gamma_{\mathcal{L}(\Lambda)}} \right] \\
& =  \frac{1}{2} \log \left[ \left(1 + \frac{ h_{\Lambda}\Gamma_{\mathcal{S}(\Lambda)} }{1 +  h_{\Lambda} \Gamma_{\mathcal{S}^c(\Lambda)} } \right)\left(1 + \frac{h \Gamma_{\mathcal{S}(\Lambda)} }{1 + h \Gamma_{\mathcal{S}^c(\Lambda)} } \right)^{-1} \right] \\
& \geq  \frac{1}{2} \log \left[ \left(1 + \frac{ \Gamma_{\mathcal{S}(\Lambda)} }{1 + \Lambda  +  \Gamma_{\mathcal{S}^c(\Lambda)} } \right)\left(1 + h \Gamma_{\mathcal{S}(\Lambda)}   \right)^{-1} \right] \\
& \geq  \frac{1}{2} \log \left[ \left(1 + \frac{ \Gamma_{\mathcal{S}(\Lambda)} }{1 +  \Lambda_{\mathcal{S}^c(\Lambda)} } \right)\left(1 + h \Gamma_{\mathcal{S}(\Lambda)}  \right)^{-1} \right] \label{eq40c} \\
& \geq  \frac{1}{2} \log \left[ \left(1 + \frac{ \Gamma_{\mathcal{S}(\Lambda_{\mathcal{S}^c})} }{1 +  \Lambda_{\mathcal{S}^c(\Lambda)} } \right)\left(1 + h \Gamma_{\mathcal{S}(\Lambda_{\mathcal{S}^c})}  \right)^{-1} \right] \label{eq40d} \\
& \geq v(\mathcal{S}) , \label{eq40e} 
\end{align}
\end{subequations}
where \eqref{eq40a} and \eqref{eq40b} hold by definition of $\mathcal{C}^*(v)$, \eqref{eq40c} holds by definition of $\Lambda_{\mathcal{S}^c}$, \eqref{eq40d} holds because $\mathcal{S}(\Lambda_{\mathcal{S}^c}) \subseteq \mathcal{S}(\Lambda)$ and when $h_{\Lambda_{\mathcal{S}^c(\Lambda)}}>h$, $x \mapsto \log\left(\frac{1+xh_{\Lambda_{\mathcal{S}^c(\Lambda)}}}{1+xh}\right)$ is increasing, \eqref{eq40e} holds because $\Lambda_{\mathcal{S}^c} \geq \Lambda_{\mathcal{S}^c(\Lambda)}$. Hence, $(R_l)_{l \in \mathcal{L}} \in \mathcal{C}(v)$. 

\section{Proof of Proposition \ref{prop1}} \label{App_prop1}
We consider $\Lambda =0$ to simplify notation in the proof, however, the case $\Lambda \neq 0$ is treated similarly. The proof of existence is similar to the one of \cite[Theorem 5.1]{la2004game}. Define for $x \in [0,1]$, for $l \in \mathcal{L}$
\begin{align}
\phi_{x,l}(v) \triangleq \frac{\frac{1}{2} \log \left[ 1+ x(l\Gamma_l + \Gamma_{l+1:L}) \right] - \sum_{i=l+1}^L \phi_{x,i}(v)}{l} . \label{eq:eff}
\end{align}
Some manipulations, similar to \cite[Eq.(8)]{la2004game}, gives that for any $x \in [0,1]$, for any $l \in \mathcal{L} \backslash \{L\}$
\begin{align}
\phi_{x,l}(v) - \phi_{x,l+1}(v) =  \frac{1}{2l} \log \left[ \frac{ 1 + x(l\Gamma_l  + \Gamma_{l+1:L})}{1 + x( l\Gamma_{l+1}  + \Gamma_{l+1:L} )} \right], \label{eq:symm}
\end{align}
and, as shown in \cite[Lemma 1]{la2004game}, that for any $x \in [0,1]$, for any $l, l' \in \mathcal{L}$ such that $\Gamma_l > \Gamma_l'$
\begin{align}
\phi_{x,l}(v^{l,l'}) = \phi_{x,l'}(v). \label{eq:envy}
\end{align}

Define now the following secrecy rate allocation for $l \in \mathcal{L}$
\begin{align}
R^*_l(v) \triangleq \phi_{h_{\Lambda},l}(v) -  \phi_{h,l}(v). \label{eq:alloc}
\end{align}
From \eqref{eq:alloc}, efficiency is seen by choosing $l=1$ in \eqref{eq:eff}, symmetry follows from \eqref{eq:symm}, and envy-freeness follows from~\eqref{eq:envy}. The proof of uniqueness is identical to the proof of \cite[Theorem~5.1]{la2004game}.

\section{Proof of Theorem \ref{th3}} \label{App_th3}

\subsection{Preliminaries}
 We assume the sequence $(\Gamma_{l})_{l\in \mathcal{L}}$ decreasing by relabeling the players if necessary. In the following we use the notation $\Gamma_{i:j} \triangleq \sum_{l=i}^j \Gamma_l$ for any $i,j \in \mathbb{N}$. We also define $L(\Lambda) \triangleq |\mathcal{L}(\Lambda)|$. We will use of the following lemma, which is proved in Appendix \ref{App_deriv}.
\begin{lem} \label{lemfunc}
Let $k \in \mathbb{N}^*$, $h_1 ,h_2 \in [0,1[$, such that $h_1 < h_2$, $a,b,c \in \mathbb{R}_+$  such that $b>a$. The following functions are non-increasing.
\begin{align}
  f^{(1)}_{k,h_1,h_2,a}: \phantom{l}\mathbb{R}_+ &\to  \mathbb{R}, \nonumber
  \\x &\mapsto \frac{1}{2} \log \left[ \frac{1 + h_1 (kx + a)}{1 + h_2(k x + a)} \right],&\\
  f^{(2)}_{k,h_1,h_2,a,b} : \phantom{l}\mathbb{R}_+  &\to  \mathbb{R}, \nonumber
  \\ x&\mapsto  f^{(1)}_{1,h_1,h_2,ka}(x) - f^{(1)}_{1,h_1,h_2,kb}(x),&\\
    f^{(3)}_{k,h_1,h_2,a}: \phantom{l}\mathbb{R}_+  &\to  \mathbb{R}, \nonumber
  \\ x&\mapsto  (k+1)  f^{(1)}_{k,h_1,h_2,a}(x) \nonumber
  \\ &\phantom{------} - k  f^{(1)}_{k+1,h_1,h_2,a}(x) ,&\\
        f^{(4)}_{k,h_1,h_2,a,c}: [0,c[ &\to  \mathbb{R}, \nonumber
  \\ x&\mapsto    f^{(1)}_{k+1,h_1,h_2,a}(x) \nonumber
  \\ &\phantom{---} - (k+1)  f^{(1)}_{1,h_1,h_2,a+ck}(x).&
\end{align}
Consequently, we also have that $\forall x \in \mathbb{R}^*_+, f^{(3)}_{k,h_1,h_2,a}(x) < 0$, since $f^{(3)}_{k,h_1,h_2,a}(0) \leq 0$. 
\end{lem}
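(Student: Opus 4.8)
The plan is to prove each of the four monotonicity claims in Lemma~\ref{lemfunc} by computing a derivative (or a discrete difference) and showing it has constant sign on the relevant interval. The unifying observation is that every function in the list is built from the single prototype
$f^{(1)}_{k,h_1,h_2,a}(x) = \tfrac12\log\!\bigl(1+h_1(kx+a)\bigr) - \tfrac12\log\!\bigl(1+h_2(kx+a)\bigr)$,
so I would first establish properties of $f^{(1)}$ and then bootstrap. For $f^{(1)}$ itself, differentiate in $x$: the derivative equals $\tfrac{k}{2}\bigl(\tfrac{h_1}{1+h_1(kx+a)} - \tfrac{h_2}{1+h_2(kx+a)}\bigr)$, and after clearing denominators the numerator is proportional to $h_1-h_2<0$, so $f^{(1)}$ is non-increasing (strictly decreasing when $k>0$). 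A cleaner bookkeeping device: write $g(t)\triangleq\tfrac12\log\tfrac{1+h_1 t}{1+h_2 t}$, so $f^{(1)}_{k,h_1,h_2,a}(x)=g(kx+a)$; then $g'(t)=\tfrac12\cdot\tfrac{h_1-h_2}{(1+h_1t)(1+h_2t)}<0$ and $g''(t) = \tfrac12(h_2-h_1)\cdot\tfrac{h_1(1+h_2t)+h_2(1+h_1t)}{[(1+h_1t)(1+h_2t)]^2} > 0$, i.e. $g$ is decreasing and convex on $\mathbb{R}_+$. These two facts about $g$ drive everything else.

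Given $g$ decreasing and convex, the remaining three claims become statements about $g$. For $f^{(2)}_{k,h_1,h_2,a,b}(x) = g(x+ka) - g(x+kb)$ with $b>a$: since the two arguments differ by the constant $k(b-a)>0$ and are both shifted by $x$, convexity of $g$ gives that $g(x+ka)-g(x+kb)$ is non-increasing in $x$ — indeed its derivative is $g'(x+ka)-g'(x+kb)$, which is $\le 0$ because $g'$ is non-decreasing (convexity) and $x+ka < x+kb$. For $f^{(3)}_{k,h_1,h_2,a}(x) = (k+1)g(kx+a) - k\,g((k+1)x+a)$: differentiate to get $k(k+1)\bigl(g'(kx+a) - g'((k+1)x+a)\bigr)\le 0$ for $x\ge 0$ by the same convexity-of-$g$ argument (arguments ordered $kx+a\le(k+1)x+a$), so $f^{(3)}$ is non-increasing; then the stated corollary follows once I check $f^{(3)}_{k,h_1,h_2,a}(0) = (k+1)g(a)-k\,g(a) = g(a)\le 0$, which holds because $g(a)\le g(0)=0$. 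For $f^{(4)}_{k,h_1,h_2,a,c}(x) = g((k+1)x+a) - (k+1)g(x+a+ck)$ on $[0,c[$: its derivative is $(k+1)\bigl(g'((k+1)x+a) - g'(x+a+ck)\bigr)$, and here I must compare the arguments $(k+1)x+a$ and $x+a+ck$; their difference is $kx - ck = k(x-c) < 0$ on $[0,c[$, so $(k+1)x+a < x+a+ck$, and convexity of $g$ again gives $g'((k+1)x+a) \le g'(x+a+ck)$, hence the derivative is $\le 0$ and $f^{(4)}$ is non-increasing.

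The main obstacle, and the only place requiring genuine care, is the bookkeeping of which argument is larger in $f^{(3)}$ and especially $f^{(4)}$ — in $f^{(4)}$ the sign of the ordering flips exactly at $x=c$, which is why the domain is restricted to $[0,c[$, and getting the inequality direction right there is the crux. Everything else is a mechanical consequence of the two elementary facts ``$g$ decreasing'' and ``$g$ convex'' on $\mathbb{R}_+$, both of which come from a one-line computation of $g'$ and $g''$ using $h_1<h_2$ and $h_1,h_2\in[0,1[$. I would therefore structure the proof as: (1) reduce all four functions to expressions in $g$; (2) prove $g'<0$ and $g''>0$ on $\mathbb{R}_+$; (3) dispatch $f^{(1)},f^{(2)},f^{(3)},f^{(4)}$ in turn by differentiating and invoking monotonicity of $g'$; (4) note $f^{(3)}(0)=g(a)\le 0$ to get the final strict-negativity corollary.
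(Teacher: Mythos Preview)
Your proof is correct and arguably cleaner than the paper's. The paper (Appendix~\ref{App_deriv}) proceeds by brute force: it writes out each derivative $f^{(i)\prime}$ explicitly as a single rational expression in $x$, $k$, $a$, $b$, $c$, $h_1$, $h_2$, and then observes that each numerator has the sign $-(h_2-h_1)\cdot(\text{nonnegative factor})$, e.g.\ the factor $k(b-a)$ for $f^{(2)}$, $k(k+1)x$ for $f^{(3)}$, and $k(k+1)(c-x)$ for $f^{(4)}$. Your route is structurally different: you isolate the single building block $g(t)=\tfrac12\log\tfrac{1+h_1t}{1+h_2t}$, establish once that $g'<0$ and $g''>0$ on $\mathbb{R}_+$, and then reduce each $f^{(i)}$ to a statement about the monotonicity of $g'$. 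What this buys you is that the four cases become essentially one-liners---the sign analysis reduces to comparing two arguments of $g'$---and the reason the domain of $f^{(4)}$ is restricted to $[0,c[$ becomes transparent (it is exactly where $(k+1)x+a < x+a+ck$). The paper's approach, by contrast, requires no auxiliary lemma and makes the strictness for $x>0$ in $f^{(3)}$ immediately visible from the factor $x$ in the numerator, whereas in your argument you need the additional observation that $g''>0$ strictly (which holds since $h_2>h_1\ge 0$ forces $h_1+h_2>0$) to get strict monotonicity of $g'$ and hence $f^{(3)}(x)<f^{(3)}(0)\le 0$ for $x>0$.
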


\subsection{Left-hand side of \eqref{equation17}}
We first prove that for any $l\in\mathcal{L}(\Lambda)$, $R^*_l(v) >0$.
For $l \in \llbracket 1, L(\Lambda)-1 \rrbracket$,
\begin{subequations}
\begin{align}
&R^*_l(v) - R^*_{l+1}(v) \\
& = \frac{f^{(1)}_{l,h,h_{\Lambda}, \Gamma_{l+1:L(\Lambda)}}(\Gamma_{l+1})-f^{(1)}_{l,h,h_{\Lambda}, \Gamma_{l+1:L(\Lambda)}}(\Gamma_{l})}{l} \label{eq33a} \\
& \geq 0, \label{eq33b}
\end{align}
\end{subequations}
where \eqref{eq33a} holds by \eqref{eq:symm} and \eqref{eq:alloc}, and  \eqref{eq33b} holds because $\Gamma_{l+1} \leq \Gamma_l$ and $f^{(1)}_{l,h,h_{\Lambda}, \Gamma_{l+1:L(\Lambda)}}$ is non-increasing by Lemma~\ref{lemfunc}. 
Then, 
\begin{subequations}
\begin{align}
R^*_{L(\Lambda)}(v) 
&= -f^{(1)}_{L(\Lambda),h,h_{\Lambda},0}(\Gamma_{(\Lambda)}) \label{eq35a}  \\
& > 0,\label{eq35b} 
\end{align}
\end{subequations}
where \eqref{eq35a} holds by \eqref{eq:symm} and \eqref{eq:alloc}, \eqref{eq35b} holds because $h<h_{\Lambda}$.
Hence, by \eqref{eq33b} and \eqref{eq35b}, we have by induction that for any $l\in\mathcal{L}(\Lambda)$, $R^*_l(v) >0$. 

\subsection{Right-hand side of \eqref{equation17}}

Next, we want to prove that 
\begin{align} \label{eqstate}
\forall \mathcal{S} \subseteq \mathcal{L}(\Lambda), R^*_{\mathcal{S}}(v)   \leq \frac{1}{2} \log \left( \frac{1+ h_{\Lambda} \Gamma_{\mathcal{S}}}{1+ h \Gamma_{\mathcal{S}}}  \right).
\end{align}
 We prove \eqref{eqstate} by induction. Clearly, \eqref{eqstate} is true when $L(\Lambda)=1$. We assume that \eqref{eqstate} is true for $L(\Lambda)= K \in \mathbb{N}^*$, we will show that \eqref{eqstate} is true for $L(\Lambda) = K+1$. Let $v$ be the game with $L(\Lambda)=K+1$ transmitters. We let $v^{(-j)}$ denote the game $v$ by removing Transmitter $j \in \llbracket 1 , K+1 \rrbracket$. We first show the following lemma.

\begin{lem}
We have for any $l \in \llbracket 1 ,K+1\rrbracket$, for any $j \in \llbracket 1 ,K+1 \rrbracket \backslash \{ l\}$,
\begin{align}
R_l^*(v) < R_l^*(v^{(-j)}). \label{eq43}
\end{align}
	
\end{lem}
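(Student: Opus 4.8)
The plan is to prove \eqref{eq43} by exploiting the explicit recursive formula \eqref{eqoptalloc}--\eqref{eq:alloc} for the optimal allocation together with the monotonicity properties collected in Lemma~\ref{lemfunc}. The intuition is simple: removing a transmitter $j\neq l$ from the game frees up ``interference budget'' that the remaining transmitters, and in particular Transmitter $l$, can use, so $R^*_l$ can only go up. Making this precise requires a careful bookkeeping of how the index $l$ and the partial sums $\Gamma_{l+1:L(\Lambda)}$ shift when a player is deleted, since the formula for $R^*_l(v)$ depends on the rank of $l$ in the decreasing ordering of the power constraints.

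First I would reduce to the two extreme cases $j = K+1$ (removing the weakest transmitter) and then the general case by relabeling; because the allocation is symmetric (it depends only on the sorted vector of power constraints), it suffices to understand what happens to $R^*_l$ when we delete a player of a given rank. I would write $R^*_l(v) = \phi_{h_\Lambda,l}(v) - \phi_{h,l}(v)$ and similarly for $v^{(-j)}$, and express the difference $R^*_l(v^{(-j)}) - R^*_l(v)$ using the recursion \eqref{eq:eff}. The key will be to telescope: from \eqref{eq:symm} one has a clean closed form for consecutive differences $\phi_{x,l} - \phi_{x,l+1}$, so I can write $\phi_{x,l}$ as a sum of such increments plus $\phi_{x,L(\Lambda)}$, and then compare term by term between the two games. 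When $j$ is deleted, each affected increment changes by replacing an argument of the form $l'\Gamma_{l'} + \Gamma_{l'+1:L(\Lambda)}$ by one with $\Gamma_j$ removed, i.e.\ by a strictly smaller value, and by Lemma~\ref{lemfunc} (monotonicity of $f^{(1)}$, $f^{(3)}$, $f^{(4)}$) each such change pushes the secrecy-rate contribution in the right direction. Collecting signs, one obtains $R^*_l(v^{(-j)}) > R^*_l(v)$, with the strict inequality coming from $h < h_\Lambda$ exactly as in \eqref{eq35b}.

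The main obstacle I anticipate is the index shifting: if $j < l$, then in the game $v^{(-j)}$ Transmitter $l$ now occupies rank $l-1$, so I must compare $R^*_l(v)$ with $R^*_{l-1}(v^{(-j)})$, and the denominators $l$ versus $l-1$ in \eqref{eq:eff} differ. This is precisely where the auxiliary functions $f^{(3)}$ and $f^{(4)}$ of Lemma~\ref{lemfunc} are designed to help: $f^{(3)}_{k,h,h_\Lambda,a}(x) < 0$ controls the effect of changing the ``number of users sharing a sum constraint'' from $k+1$ to $k$, and $f^{(4)}$ controls the effect of splitting off $k$ users each with power $c$ into the aggregate. I would handle the case $j > l$ first (no index shift for $l$; only the tail sum $\Gamma_{l+1:L(\Lambda)}$ decreases and the tail allocations $\sum_{i>l}R^*_i$ change), which should follow directly from monotonicity of $f^{(1)}$ and $f^{(2)}$, and then treat $j \leq l$ by combining this with the $f^{(3)}$/$f^{(4)}$ estimates. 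Throughout, I would keep the argument parallel to the left-hand-side computation \eqref{eq33a}--\eqref{eq35b} so that the bookkeeping stays manageable.
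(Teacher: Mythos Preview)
Your plan is essentially the paper's proof: compare consecutive differences $R^*_l(v)-R^*_{l+1}(v)$ with $R^*_l(v^{(-j)})-R^*_{l+1}(v^{(-j)})$ using Lemma~\ref{lemfunc}, establish a strict base case at the weakest remaining player, and telescope. Your assignment of tools to cases matches the paper's: $f^{(2)}$ for $l<j-1$ (tail sum shrinks, no rank shift), $f^{(3)}$ for $l>j$ (rank shifts from $l$ to $l-1$), $f^{(4)}$ for the bridge across the deleted index, and $f^{(3)}$ again for the base case.

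One remark: your opening ``reduce to the extreme case $j=K+1$ by relabeling'' does not actually buy anything---the allocation depends on the full sorted power vector, so deleting the rank-$j$ player is a genuinely different perturbation for each $j$, and symmetry alone cannot collapse the cases. Fortunately you do not rely on this reduction afterward; your case split $j>l$ versus $j<l$ (plus the bridge) is the real argument and is exactly what the paper does. Just drop the reduction sentence and keep the rest.
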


\begin{proof} 
For any $l \in \llbracket j+1 , K \rrbracket$
\begin{subequations}
\begin{align}
&R^*_l(v) - R^*_{l+1}(v) \\
& = \frac{f^{(1)}_{l,h,h_{\Lambda}, \Gamma_{l+1:K+1}}(\Gamma_{l+1})-f^{(1)}_{l,h,h_{\Lambda}, \Gamma_{l+1:K+1}}(\Gamma_{l})}{l} \label{eq36a} \\
& \leq \frac{f^{(1)}_{l-1,h,h_{\Lambda}, \Gamma_{l+1:K+1}}(\Gamma_{l+1})-f^{(1)}_{l-1,h,h_{\Lambda}, \Gamma_{l+1:K+1}}(\Gamma_{l})}{l-1} \label{eq36b} \\
& = R^*_l(v^{(-j)}) - R^*_{l+1}(v^{(-j)}), \label{eq36c}
\end{align}
\end{subequations}
where \eqref{eq36a} and \eqref{eq36c} hold by \eqref{eq:symm} and \eqref{eq:alloc}, \eqref{eq36b} holds because $\Gamma_{l+1} \leq \Gamma_l$ and  $f^{(3)}_{l,h,h_{\Lambda},\Gamma_{l+1:K+1}}$ is decreasing by Lemma \ref{lemfunc}. 

For any $l \in \llbracket 1 , j-2 \rrbracket$
\begin{subequations}
\begin{align}
&R^*_l(v) - R^*_{l+1}(v) \\
& = \frac{f^{(1)}_{l,h,h_{\Lambda}, \Gamma_{l+1:K+1}}(\Gamma_{l+1})-f^{(1)}_{l,h,h_{\Lambda}, \Gamma_{l+1:K+1}}(\Gamma_{l})}{l} \label{eq37a} \\
& \leq \frac{f^{(1)}_{l,h,h_{\Lambda}, \Gamma_{l+1:K+1} - \Gamma_j}(\Gamma_{l+1})-f^{(1)}_{l,h,h_{\Lambda}, \Gamma_{l+1:K+1}-\Gamma_j}(\Gamma_{l})}{l} \label{eq37b} \\
& = R^*_l(v^{(-j)}) - R^*_{l+1}(v^{(-j)}), \label{eq37c}
\end{align}
\end{subequations}
where \eqref{eq37a} and \eqref{eq37c} hold as \eqref{eq36a}, \eqref{eq37b} holds because $ [f^{(1)}_{l,h,h_{\Lambda}, \Gamma_{l+1:K+1} }(\Gamma_{l+1})- f^{(1)}_{l,h,h_{\Lambda}, \Gamma_{l+1:K+1}}(\Gamma_{l}))] - [f^{(1)}_{l,h,h_{\Lambda}, \Gamma_{l+1:K+1} - \Gamma_j}(\Gamma_{l+1})-f^{(1)}_{l,h,h_{\Lambda}, \Gamma_{l+1:K+1}-\Gamma_j}(\Gamma_{l})] = f^{(2)}_{l,h,h_{\Lambda}, \Gamma_{l+1},\Gamma_{l}}(\Gamma_{l+1:K+1}) - f^{(2)}_{l,h,h_{\Lambda}, \Gamma_{l+1},\Gamma_{l}}(\Gamma_{l+1:K+1}-\Gamma_j)$ and $f^{(2)}_{l,h,h_{\Lambda}, \Gamma_{l+1},\Gamma_{l}}$ is decreasing by Lemma~\ref{lemfunc}.

When $1<j<K+1$, we have for $l = j$
\begin{subequations}
\begin{align}
&R^*_{l-1}(v) - R^*_{l+1}(v) \\
& = R^*_{l-1}(v) - R^*_{l}(v) + R^*_{l}(v) - R^*_{l+1}(v) \\
& = \frac{f^{(1)}_{l-1,h,h_{\Lambda}, \Gamma_{l:K+1}}(\Gamma_{l})-f^{(1)}_{l-1,h,h_{\Lambda}, \Gamma_{l:K+1}}(\Gamma_{l-1})}{l-1} \nonumber \\
& \phantom{--}+\frac{f^{(1)}_{l,h,h_{\Lambda}, \Gamma_{l+1:K+1}}(\Gamma_{l+1})-f^{(1)}_{l,h,h_{\Lambda}, \Gamma_{l+1:K+1}}(\Gamma_{l})}{l} \label{eq31a} \\
& \leq \frac{1}{l-1}f^{(1)}_{l-1,h,h_{\Lambda}, \Gamma_{l+1} + \Gamma_{l+1:K+1}}(\Gamma_{l+1})\nonumber \\
&\phantom{--}-\frac{1}{l-1}f^{(1)}_{l-1,h,h_{\Lambda}, \Gamma_{l+1} + \Gamma_{l+1:K+1}}(\Gamma_{l-1})  \label{eq31b} \\
& \leq \frac{f^{(1)}_{l-1,h,h_{\Lambda}, \Gamma_{l+1:K+1}}(\Gamma_{l+1})-f^{(1)}_{l-1,h,h_{\Lambda},  \Gamma_{l+1:K+1}}(\Gamma_{l-1})}{l-1}  \label{eq31c} \\
& = R^*_{l-1}(v^{(-j)}) - R^*_{l+1}(v^{(-j)}), \label{eq31d}
\end{align}
\end{subequations}
where \eqref{eq31a} and \eqref{eq31d} hold as \eqref{eq36a}, \eqref{eq31b} holds because 
\begin{align}
&\frac{f^{(1)}_{l-1,h,h_{\Lambda}, \Gamma_{l:K+1}}(\Gamma_{l})-f^{(1)}_{l-1,h,h_{\Lambda}, \Gamma_{l:K+1}}(\Gamma_{l-1})}{l-1} \nonumber \\
&+\frac{f^{(1)}_{l,h,h_{\Lambda}, \Gamma_{l+1:K+1}}(\Gamma_{l+1})-f^{(1)}_{l,h,h_{\Lambda}, \Gamma_{l+1:K+1}}(\Gamma_{l})}{l}\nonumber \\
& - \frac{f^{(1)}_{l-1,h,h_{\Lambda}, \Gamma_{l+1} + \Gamma_{l+1:K+1}}(\Gamma_{l+1})}{l-1} \nonumber \\
&
+\frac{f^{(1)}_{l-1,h,h_{\Lambda}, \Gamma_{l+1}+ \Gamma_{l+1:K+1}}(\Gamma_{l-1})}{l-1} \nonumber \\
& = \frac{1}{l(l-1)} (f^{(4)}_{l-1,h,h_{\Lambda},\Gamma_{l+1:K+1},\Gamma_{l-1}}(\Gamma_l)\nonumber \\
&\phantom{--}-f^{(4)}_{l-1,h,h_{\Lambda},\Gamma_{l+1:K+1},\Gamma_{l-1}}(\Gamma_{l+1})),\end{align} and because $f^{(4)}_{l-1,h,h_{\Lambda},\Gamma_{l+1:K+1},\Gamma_{l-1}}$ is decreasing by Lemma~\ref{lemfunc}, \eqref{eq31c} holds as \eqref{eq37b}.

 When $j \neq K+1$, we have
\begin{subequations}
\begin{align}
R^*_{K+1}(v)  
& = - \frac{f^{(1)}_{K+1,h,h_{\Lambda},0}(\Gamma_{K+1})}{K+1} \\
& < - \frac{f^{(1)}_{K,h,h_{\Lambda},0}(\Gamma_{K+1})}{K} \label{eq41a}\\
& = R^*_{K+1}(v^{(-j)}),\label{eq41b}
\end{align}
\end{subequations}
where \eqref{eq41a} holds because $\forall x \in \mathbb{R}^*_+, f^{(3)}_{K,h,h_{\Lambda},0}(x) < 0$ by Lemma \ref{lemfunc}.

When $j=K+1$, we have
\begin{subequations}
\begin{align}
&R^*_{K}(v)  \nonumber \\
& = \frac{1}{K} \left[ - f^{(1)}_{K,h,h_{\Lambda},\Gamma_{K+1}} (\Gamma_K) + \frac{f^{(1)}_{K+1,h,h_{\Lambda},0} (\Gamma_{K+1})}{K+1} \right] \label{eq38aa}\\
& \leq \frac{1}{K} \left[ -f^{(1)}_{K,h,h_{\Lambda},0}(\Gamma_{K})   - f^{(1)}_{K+1,h,h_{\Lambda},0}(\Gamma_{K+1}) \right. \nonumber \\
& \phantom{----}\left.+ f^{(1)}_{K,h,h_{\Lambda},0} (\Gamma_{K+1}) + \frac{f^{(1)}_{K+1,h,h_{\Lambda},0} (\Gamma_{K+1})}{K+1} \right] \label{eq38a} \\
& = \frac{1}{K} \left[ -f^{(1)}_{K,h,h_{\Lambda},0}(\Gamma_{K})   + f^{(1)}_{K,h,h_{\Lambda},0} (\Gamma_{K+1})\right. \nonumber \\
& \phantom{----}\left. -K \frac{f^{(1)}_{K+1,h,h_{\Lambda},0} (\Gamma_{K+1})}{K+1} \right]  \\
& < - \frac{f^{(1)}_{K,h,h_{\Lambda},0}(\Gamma_{K})}{K} \label{eq38b}\\
& = R^*_{K}(v^{(-j)}),\label{eq38c}
\end{align}
\end{subequations}
\eqref{eq38aa} and \eqref{eq38c} hold by \eqref{eq:alloc} and \eqref{eq:eff}, \eqref{eq38a} holds because $f^{(1)}_{K,h,h_{\Lambda},0}(\Gamma_{K}) - f^{(1)}_{K,h,h_{\Lambda},\Gamma_{K+1}} (\Gamma_{K}) + f^{(1)}_{K+1,h,h_{\Lambda},0}(\Gamma_{K+1}) - f^{(1)}_{K,h,h_{\Lambda},0} (\Gamma_{K+1}) = f^{(2)}_{1,h,h_{\Lambda},0,\Gamma_{K+1}} (K \Gamma_K) -f^{(2)}_{1,h,h_{\Lambda},0,\Gamma_{K+1}} (K \Gamma_{K+1}) \leq 0 $ and because  $f^{(2)}_{1,h,h_{\Lambda},0,\Gamma_{K+1}}$ is decreasing by Lemma \ref{lemfunc}, \eqref{eq38b} holds because $f^{(3)}_{K,h,h_{\Lambda},0}$ is strictly negative by Lemma \ref{lemfunc}.

Hence, by \eqref{eq36c}, \eqref{eq37c}, \eqref{eq31d}, \eqref{eq41b} when $j\neq K+1$, and by \eqref{eq36c}, \eqref{eq37c},  \eqref{eq38c} when $j =K+1$, we have \eqref{eq43} for any $l \in \llbracket 1 ,K+1\rrbracket$, for any $j \in \llbracket 1 ,K+1 \rrbracket \backslash \{ l\}$.
\end{proof}
Finally, for any $\mathcal{S} \subsetneq \llbracket 1 , K+1 \rrbracket$, there exists $j_{\mathcal{S}} \in \llbracket 1,K+1 \rrbracket \backslash \mathcal{S}$ such that
\begin{subequations}
\begin{align}
R^*_{\mathcal{S}}(v) 
&< R^*_{\mathcal{S}}(v^{(-j_{\mathcal{S}})}) \label{eq44a} \\ 
& < \frac{1}{2} \log \left( \frac{1+ h_{\Lambda} \Gamma_{\mathcal{S}}}{1+ h \Gamma_{\mathcal{S}}}  \right), \label{eq44b}
\end{align}
\end{subequations}
where \eqref{eq44a} holds by \eqref{eq43}, and \eqref{eq44b} holds by induction hypothesis.

\section{Proof of Proposition \ref{prop2}} \label{App_prop2}

\subsection{Monotonicity}

We will use of the following lemma, which is proved in Appendix \ref{App_deriv2}, to prove by induction, that for any $l \in \llbracket 1, L(\Lambda)-1 \rrbracket$, $\frac{R_{l}^*(v^{(\omega)}) }{R_{l+1}^*(v^{(\omega)}) }$ is an increasing function of $\omega$. 
\begin{lem} \label{lemfunc2}
Let $a,b,c,d \in \mathbb{R}^*_+$, $h_1 ,h_2 \in [0,1[$, such that $h_1 < h_2$. The following functions are non-decreasing.
\begin{align} 
f_{h_1,h_2,a,b,c,d}^{(5)} : \mathbb{R}_+ &\to \mathbb{R},\nonumber\\ \label{eqdef5}
\omega &\mapsto \frac{\log \left[ \frac{\omega + h_1(a+b+c)}{\omega + h_2(a+b+c)}\frac{\omega + h_2(a+b+c+d)}{\omega + h_1(a+b+c+d)} \right]}{\log \left[ \frac{\omega + h_1a}{\omega + h_2 a}\frac{\omega + h_2(a+b)}{\omega + h_1(a+b)} \right]},\\
f_{h_1,h_2,a,b}^{(6)} : \mathbb{R}_+ &\to \mathbb{R}, \nonumber \\
\omega &\mapsto \frac{\log \left[ \frac{\omega + h_2(a+b)}{\omega + h_1(a+b)} \right]}{\log \left[ \frac{\omega + h_2 a}{\omega + h_1 a} \right]}.
\end{align}
\end{lem}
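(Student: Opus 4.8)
The statement to prove is Lemma \ref{lemfunc2}, which asserts that the two functions $f^{(5)}$ and $f^{(6)}$ are non-decreasing in $\omega$. The plan is to reduce both claims to a single monotonicity fact about a quotient of logarithms and then to verify that fact by differentiation. For $f^{(6)}$, write $f^{(6)}_{h_1,h_2,a,b}(\omega) = g(a+b)/g(a)$ where $g(t) \triangleq \log\bigl(\frac{\omega+h_2 t}{\omega+h_1 t}\bigr)$, viewed as a function of $\omega$ for fixed $t$; since $a+b > a$ and $h_2 > h_1$, the claim is that $\omega \mapsto g(t_2)/g(t_1)$ is non-decreasing whenever $t_2 > t_1 > 0$. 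The cleanest route is to show that $\omega \mapsto \frac{d}{d\omega}\log g(t)$ — equivalently the logarithmic derivative $g'(t)/g(t)$ in $\omega$ — or, more directly, that $\omega \mapsto g(\omega,t)$ has the property that its $\omega$-derivative divided by $g$ is monotone in $t$; a quotient $g(t_2)/g(t_1)$ is non-decreasing in $\omega$ precisely when $\partial_\omega \log g(t_2) \geq \partial_\omega \log g(t_1)$.

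The concrete computation is as follows. One has $g(\omega,t) = \log(\omega+h_2 t) - \log(\omega+h_1 t)$, so $\partial_\omega g = \frac{1}{\omega+h_2 t} - \frac{1}{\omega+h_1 t} = \frac{(h_1-h_2)t}{(\omega+h_1 t)(\omega+h_2 t)} < 0$, and $g > 0$. Then $f^{(6)}$ being non-decreasing amounts to $\partial_\omega\bigl(g(a+b)/g(a)\bigr) \geq 0$, i.e. $g(a)\,\partial_\omega g(a+b) - g(a+b)\,\partial_\omega g(a) \geq 0$ (note both $\partial_\omega g$ terms are negative and $g(a+b) > g(a) > 0$ because $g$ is increasing in $t$, so the sign is not obvious and genuinely needs the argument). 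I would substitute the closed forms and reduce to an inequality of the shape $\frac{\partial_\omega g(a+b)}{g(a+b)} \geq \frac{\partial_\omega g(a)}{g(a)}$, i.e. monotonicity in $t$ of $t \mapsto \frac{\partial_\omega g(\omega,t)}{g(\omega,t)}$; this in turn follows by another differentiation (in $t$) and a sign check, using $h_1 < h_2$ and $\omega,t > 0$ throughout. For $f^{(5)}$, observe that both numerator and denominator are themselves differences of two $g$-type terms: the numerator is $g(a+b+c+d) - g(a+b+c)$ and the denominator is $g(a+b) - g(a)$ (with the same fixed $\omega$ and the same $h_1,h_2$). So $f^{(5)} = \bigl(g(t_4)-g(t_3)\bigr)/\bigl(g(t_2)-g(t_1)\bigr)$ with $t_1 = a < t_2 = a+b$ and $t_3 = a+b+c < t_4 = a+b+c+d$, and moreover $t_3 \geq t_2$. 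The plan is to show this ratio is non-decreasing in $\omega$ by the same logarithmic-derivative criterion, which now requires comparing $\partial_\omega$ of $\log(g(t_4)-g(t_3))$ with $\partial_\omega$ of $\log(g(t_2)-g(t_1))$; using the mean value theorem in $t$, each difference $g(t')-g(t)$ equals $(t'-t)\,\partial_t g(\omega,\xi)$ for some intermediate $\xi$, reducing everything to monotonicity properties of $\partial_t g$ and its logarithmic $\omega$-derivative, which are elementary rational-function sign checks.

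The main obstacle I anticipate is the bookkeeping in the $f^{(5)}$ case: after clearing denominators one is left with a polynomial inequality in $\omega$ (with coefficients depending on $a,b,c,d,h_1,h_2$) that must be shown to hold for all $\omega \geq 0$, and the naive expansion is unwieldy. The way I would tame it is to introduce the substitution $u_i \triangleq \omega + h_1 t_i$, $w_i \triangleq \omega + h_2 t_i$, so that $g(t_i) = \log(w_i/u_i)$ and all partial derivatives become simple rational expressions in the $u_i,w_i$; the ordering constraints ($h_1 < h_2$ forces $w_i - u_i = (h_2-h_1)t_i$ increasing in $i$, and $\omega \geq 0$ keeps everything positive) then make the required sign manifest after grouping terms. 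I would also handle the boundary/degenerate cases ($\omega \to 0$, and the limiting behavior as $\omega \to \infty$ where both functions tend to $1$ by a Taylor expansion of the logarithms) separately as sanity checks, and note that by continuity it suffices to prove strict monotonicity of the derivative's sign on $\omega > 0$. Finally, I would remark that this lemma is used only to feed the induction in Appendix \ref{App_prop2}, so the precise form of the intermediate inequalities matters less than the two stated monotonicity conclusions.
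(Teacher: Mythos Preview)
Your reduction for $f^{(6)}$ is correct and in fact lands on exactly the same object the paper studies: writing $\phi(t)\triangleq \partial_\omega g(\omega,t)/g(\omega,t)$, one has $-1/\phi(t)=\frac{(\omega+h_1 t)(\omega+h_2 t)}{(h_2-h_1)t}\log\frac{\omega+h_2 t}{\omega+h_1 t}$, which up to the constant $(h_2-h_1)$ is precisely the auxiliary function the paper introduces and shows to be non-decreasing in $t$. The gap in your plan is the phrase ``elementary rational-function sign checks'': after differentiating in $t$ a logarithm survives, so the sign is \emph{not} settled by rational manipulations alone. The paper resolves this by the two-sided bound $1-x^{-1}\le \log x\le x-1$, applied in two cases according to the sign of $h_1 h_2 t^2-\omega^2$; each case then reduces to a genuinely rational inequality that is visibly nonnegative. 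Your outline would become a complete proof for $f^{(6)}$ once you insert exactly that step.

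For $f^{(5)}$ there is a real obstruction in your mean-value argument. Writing $g(t_4)-g(t_3)=(t_4-t_3)\,\partial_t g(\omega,\xi_1)$ and $g(t_2)-g(t_1)=(t_2-t_1)\,\partial_t g(\omega,\xi_2)$ does give $f^{(5)}=\tfrac{d}{b}\cdot\partial_t g(\omega,\xi_1)/\partial_t g(\omega,\xi_2)$, but the intermediate points $\xi_1,\xi_2$ depend on $\omega$ in an implicit, non-differentiable way, so you cannot differentiate this expression in $\omega$ and compare logarithmic derivatives as in the $f^{(6)}$ case. The fallback ``substitute $u_i,w_i$ and group terms'' does not, by itself, remove the logarithms either. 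What the paper does instead is compute $f^{(5)'}(\omega)$ directly, isolate a factor $A(\omega)$ that carries the sign, and then bound the two logarithms appearing in $A(\omega)$ by $1-x^{-1}\le\log x\le x-1$; since each logarithm is multiplied by a coefficient whose sign is unknown, this produces a four-case analysis (one case per sign pattern of the two coefficients), and in every case the resulting rational expression is manifestly $\le 0$. If you want to salvage your approach, the viable route is to drop the MVT step, keep your logarithmic-derivative criterion on the actual numerator $g(t_4)-g(t_3)$ and denominator $g(t_2)-g(t_1)$, and then feed in the same log bounds with the same case split; you will end up reproducing the paper's computation.
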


	From \eqref{eq33a} and Lemma \ref{lemfunc2}, we have that for any $l \in \llbracket 1, L(\Lambda)-1 \rrbracket$,  $l' \in \llbracket 1, L(\Lambda)-2 \rrbracket$,  such that $l' >l$, $\Gamma_{l'} \neq \Gamma_{l'+1}$, and $\Gamma_{l} \neq \Gamma_{l+1}$, we have
	\begin{align} \label{eqinc}
	\frac{R_l^*(v^{(\omega)}) - R_{l+1}^*(v^{(\omega)})}{R_{l'}^*(v^{(\omega)}) - R_{l'+1}^*(v^{(\omega)})} = \frac{l'}{l} f_{h,h_{\Lambda},a,b,c,d}^{(5)}(\omega),
	\end{align}
	where $a \triangleq l'\Gamma_{l'+1} + \Gamma_{l'+1:L(\Lambda)}$, $b \triangleq l'(\Gamma_{l'} - \Gamma_{l'+1})$, $c \triangleq l(\Gamma_{l+1} - \Gamma_{l'}) - (l'-l)\Gamma_{l'} + \Gamma_{l+1:l'}$, $d \triangleq l(\Gamma_{l} - \Gamma_{l+1}) $. Observe that $a,b,c,d \in \mathbb{R}^*_+$.
	Next, from \eqref{eq33a} and \eqref{eqoptalloc} we have 
	\begin{align} \label{eqinc2}
		&\frac{R_{L(\Lambda)-1}^*(v^{(\omega)}) }{R_{L(\Lambda)}^*(v^{(\omega)}) } \nonumber\\& = \frac{1}{L(\Lambda)-1} \left[ -1 \right. \nonumber \\
  & \left. \phantom{--}+L(\Lambda) f^{(6)}_{h,h_{\Lambda},L(\Lambda)\Gamma_{L(\Lambda)},(L(\Lambda)-1)(\Gamma_{L(\Lambda)-1}-\Gamma_{L(\Lambda)})}(\omega) \right].
			\end{align}
	Hence, by induction, using \eqref{eqinc}, \eqref{eqinc2}, and Lemma \ref{lemfunc2} for any $l \in \llbracket 1, L(\Lambda)-1 \rrbracket$, $\frac{R_{l}^*(v^{(\omega)}) }{R_{l+1}^*(v^{(\omega)}) }$ is an increasing function of $\omega$.
	\subsection{Image}

	We now determine the image of  $\omega \mapsto \frac{R_{l}^*(v^{(\omega)}) }{R_{l+1}^*(v^{(\omega)}) }$. 
	 For any $l \in \llbracket 1, L(\Lambda)-1 \rrbracket$,  any $l' \in \llbracket 1, L(\Lambda)-1 \rrbracket$ such that $\Gamma_{l'+1} \neq \Gamma_{l'}$, we have
	 \begin{subequations}
	\begin{align}
		&\frac{R^*_{l}(v^{(\omega)}) - R^*_{l+1}(v^{(\omega)})}{R^*_{l'}(v^{(\omega)}) - R^*_{l'+1}(v^{(\omega)})} \nonumber \\
		& = \frac{l'}{l} \frac{f^{(1)}_{l,h,h_{\Lambda}, \omega^{-1}\Gamma_{l+1:L(\Lambda)}}\!(\frac{\Gamma_{l+1}}{\omega})\!-\!f^{(1)}_{l,h,h_{\Lambda}, \omega^{-1}\Gamma_{l+1:L(\Lambda)}}\!(\frac{\Gamma_{l}}{\omega})}{f^{(1)}_{l',h,h_{\Lambda}, \omega^{-1}\Gamma_{l'+1:L(\Lambda)}}\!(\frac{\Gamma_{l'+1}}{\omega})\!-\!f^{(1)}_{l',h,h_{\Lambda}, \omega^{-1}\Gamma_{l'+1:L\!(\Lambda)}}(\frac{\Gamma_{l'}}{\omega})}  \label{eq47a} \\
		& \xrightarrow{\omega \to +\infty} \frac{\Gamma_{l+1} - \Gamma_{l}}{\Gamma_{l'+1} - \Gamma_{l'}}, \label{eq47b} 
	\end{align}
	\end{subequations}
	where \eqref{eq47a} holds as \eqref{eq36a}, and \eqref{eq47b} is obtained with Taylor series for $\log$.
	Similar to \eqref{eq47b}, we have
	\begin{align}
				\frac{R^*_{L(\Lambda)}(v^{(\omega)}) }{R^*_{L(\Lambda)-1}(v^{(\omega)}) } 
	\xrightarrow{\omega \to +\infty} \frac{\Gamma_{L(\Lambda)}}{\Gamma_{L(\Lambda)-1}}, \label{eq48} 
	\end{align}
	and by \eqref{eq47b} and \eqref{eq48}, we have by induction for any $l \in \llbracket 1, K-2 \rrbracket$\begin{align}
				\frac{R^*_{l+1}(v^{(\omega)}) }{R^*_{l}(v^{(\omega)}) } 
	\xrightarrow{\omega \to +\infty} \frac{\Gamma_{l+1}}{\Gamma_{l}}. 
	\end{align}
	Finally, we see from \eqref{eqoptalloc} that $R_{L(\Lambda)}^*(v^{(\omega)}) \xrightarrow{\omega \to 0} \frac{1}{2L(\Lambda)} \log \frac{h_{\Lambda}}{h}$, then by induction and using~\eqref{eqoptalloc} we get for any $l \in \mathcal{L}(\Lambda)$
	\begin{align}
	R_l^*(v^{(\omega)}) \xrightarrow{\omega \to 0} \frac{1}{2L(\Lambda)} \log \frac{h_{\Lambda}}{h},
	\end{align}
such that for any $l,l' \in \mathcal{L}(\Lambda)$	
\begin{align}
				\frac{R^*_{l+1}(v^{(\omega)}) }{R^*_{l}(v^{(\omega)}) } 
	\xrightarrow{\omega \to 0} 1. 
	\end{align}

\section{Proof of Lemma \ref{lemstar}} \label{App_lemstar}
If we define
\begin{align}
\Gamma_1^* &\triangleq \Gamma_2 \frac{ h_{\Lambda} - h_2}{ h_{\Lambda} - h_1 + 2 h_{\Lambda} \Gamma_2 (h_2 - h_1)},\label{47a} \\
\Gamma_2^* 
&\triangleq \Gamma_1 \frac{ h_{\Lambda} - h_1}{ h_{\Lambda} - h_2 + 2 h_{\Lambda} \Gamma_1 (h_1 - h_2)}, \label{47b}
\end{align}
then 
\begin{align}
v_{\Gamma_1 \leftarrow \Gamma_{1}^*, h_1 \leftarrow h_1}( \mathcal{L} ) 
&= \frac{1 + h_{\Lambda} (\Gamma_1^* +  \Gamma_2)}{1+ h_1\Gamma_1^* + h_2 \Gamma_2} \nonumber\\
& =  \frac{1 + 2h_{\Lambda}  \Gamma_2}{1+ 2 h_2 \Gamma_2} \nonumber\\
&=v_{\Gamma_1 \leftarrow \Gamma_2, h_1 \leftarrow h_2}( \mathcal{L} ),  \label{eqfndg}\\
v_{\Gamma_2 \leftarrow \Gamma_{2}^*, h_2 \leftarrow h_2}( \mathcal{L} ) 
& = \frac{1 + h_{\Lambda} (\Gamma_1 +  \Gamma_2^*)}{1+ h_1\Gamma_1 + h_2 \Gamma_2^*} \nonumber\\
& =  \frac{1 + 2h_{\Lambda}  \Gamma_1}{1+ 2 h_1 \Gamma_1} \nonumber\\
&=v_{\Gamma_2 \leftarrow \Gamma_1, h_2 \leftarrow h_1}( \mathcal{L} ).\end{align}

We now show by contradiction that $\Gamma_{1}^* \in ]0, \Gamma_1]$ or  $\Gamma_{2}^* \in ]0, \Gamma_2]$. Assume  that $\Gamma_{1}^* \notin ]0, \Gamma_1]$ and  $\Gamma_{2}^* \notin ]0, \Gamma_2]$. We consider three cases.

\textbf{Case 1}: $\Gamma_1^* > \Gamma_1$ and $\Gamma_2^* > \Gamma_2$. We obtain a contradiction as follows
\begin{subequations}
\begin{align}
&\Gamma_2 ( h_{\Lambda} - h_2) \nonumber \\
& >  \Gamma_1( h_{\Lambda} - h_1) + 2 h_{\Lambda} \Gamma_1 \Gamma_2 (h_2 - h_1) \label{51a} \\
& > \Gamma_2( h_{\Lambda} - h_2) + 2 h_{\Lambda} \Gamma_1 \Gamma_2 (h_1 - h_2) + 2 h_{\Lambda} \Gamma_1 \Gamma_2 (h_2 - h_1) \label{51b} \\
& = \Gamma_2 ( h_{\Lambda} - h_2),
\end{align}
\end{subequations}
where \eqref{51a} holds by \eqref{47a} and because $\Gamma_1^* > \Gamma_1$, \eqref{51b} holds by~\eqref{47b} and because $\Gamma_2^* > \Gamma_2$.

\textbf{Case 2}: $\Gamma_1^* > \Gamma_1$ and $\Gamma_2^* \leq 0$. (The case $\Gamma_2^* > \Gamma_2$ and $\Gamma_1^* \leq 0$ is obtained similarly by exchanging the role of Transmitter $1$ and Transmitter $2$.) We have $0 \geq  h_{\Lambda} - h_2 + 2 h_{\Lambda} \Gamma_1 (h_1 - h_2)$ by \eqref{47b} and because $\Gamma_2^* \leq 0$ and $h_{\Lambda} > h_1$. Hence,
\begin{subequations}
\begin{align}
&0 \nonumber\\
& \geq  \Gamma_2(h_{\Lambda} - h_2) + 2 h_{\Lambda} \Gamma_1 \Gamma_2 (h_1 - h_2)\label{55} \\
& > \Gamma_1( h_{\Lambda} - h_1) + 2 h_{\Lambda} \Gamma_1 \Gamma_2 (h_2 - h_1) + 2 h_{\Lambda} \Gamma_1 \Gamma_2 (h_1 - h_2)  \label{55a} \\
& =  \Gamma_1( h_{\Lambda} - h_1). 
\end{align}
\end{subequations}
where \eqref{55a} holds as \eqref{51a}. We thus obtain a contradiction since $h_{\Lambda} > h_1$.

\textbf{Case 3}: $\Gamma_1^* \leq 0$ and $\Gamma_2^* \leq 0$. Similar to \eqref{55}, we have
\begin{align}
0
& \geq  \Gamma_2(h_{\Lambda} - h_2) + 2 h_{\Lambda} \Gamma_1 \Gamma_2 (h_1 - h_2)  ,\\
0
& \geq  \Gamma_1(h_{\Lambda} - h_1) + 2 h_{\Lambda} \Gamma_1 \Gamma_2 (h_2 - h_1),  
\end{align}
which combined together gives
\begin{align}
0
& \geq  \Gamma_2(h_{\Lambda} - h_2) +  \Gamma_1(h_{\Lambda} - h_1) ,  
\end{align}
which in turn contradicts that $h_{\Lambda}> h_1,h_2$ and $\Gamma_1,\Gamma_2 >0$.

Finally, assume that $(\Gamma_{1}^*,\Gamma_{2}^*) \neq (\Gamma_{1},\Gamma_{2})$. We obtain that $\Gamma_{1}^* \in ]0, \Gamma_1]$ or (mutually exclusive or) $\Gamma_{2}^* \in ]0, \Gamma_2]$ by showing that we cannot have  $\Gamma_{1}^* \in ]0, \Gamma_1]$ and  $\Gamma_{2}^* \in ]0, \Gamma_2]$. Indeed, assume that 
$\Gamma_{1}^* \in ]0, \Gamma_1]$ and  $\Gamma_{2}^* \in ]0, \Gamma_2]$, then 
\begin{subequations}
\begin{align}
&\Gamma_2 ( h_{\Lambda} - h_2) \nonumber \\
& \leq  \Gamma_1( h_{\Lambda} - h_1) + 2 h_{\Lambda} \Gamma_1 \Gamma_2 (h_2 - h_1) \label{60a}  \\
& \leq  \Gamma_2( h_{\Lambda} - h_2) + 2 h_{\Lambda} \Gamma_1 \Gamma_2 (h_1 - h_2) + 2 h_{\Lambda} \Gamma_1 \Gamma_2 (h_2 - h_1)   \label{60b} \\
& = \Gamma_2( h_{\Lambda} - h_2),
\end{align}
\end{subequations}
where $\eqref{60a}$ holds by \eqref{47a} and because $\Gamma_1^* \in ]0, \Gamma_1]$ and $h_{\Lambda} > h_2$, $\eqref{60b}$ holds by \eqref{47b} and because $\Gamma_2^* \in ]0, \Gamma_2]$ and $h_{\Lambda} > h_1$. Since $(\Gamma_{1}^*,\Gamma_{2}^*) \neq (\Gamma_{1},\Gamma_{2})$, either \eqref{60a} or \eqref{60b} is a strict inequality and we obtain a contradiction.

\section{Proof of Theorem \ref{propstar5}.\ref{propstar5i}} \label {App_propstar5}

Clearly $R_2^* \geq 0$ and we have 
\begin{subequations}
\begin{align}
&\frac{1}{2} \log \left( \frac{1+ h_{\Lambda} \Gamma_2}{1 + \Gamma_2 h_2 (1 + h_1 \Gamma_1)^{-1}}\right)\\
& \geq \frac{1}{2} \log \left( \frac{1+ h_{\Lambda} \Gamma_2}{1 + \Gamma_2 h_2 }\right) \\
& \geq \frac{1}{2} \left[ \frac{1}{2} \log \left( \frac{1 + 2  h_{\Lambda}\Gamma_2}{1 + 2 h_2 \Gamma_2} \right) \right] \label{eq81b} \\
& = R_2^*(v),
\end{align}
\end{subequations}
where we have used in \eqref{eq81b} concavity of $x \mapsto \log \left( \frac{1 +   h_{\Lambda}x}{1 + h_2 x} \right)$.

Next we show that $R_1^* \geq 0$.
\begin{subequations}
\begin{align}
R_1^* (v) 
& =  \frac{1}{2} \log \left( \frac{1 +   h_{\Lambda}(\Gamma_1+ \Gamma_2)}{1  + h_1 \Gamma_1 + h_2 \Gamma_2} \right) \nonumber \\
& \phantom{--}-   \frac{1}{2} \left[ \frac{1}{2} \log \left( \frac{1 + 2  h_{\Lambda}\Gamma_2}{1 + 2 h_2 \Gamma_2} \right) \right]\\
& =  \frac{1}{2} \log \left( \frac{1 +   h_{\Lambda}(\Gamma_1+ \Gamma_2)}{1 + h_1 \Gamma_1 + h_2 \Gamma_2} \right)\nonumber \\
& \phantom{--} -   \frac{1}{2} \left[ \frac{1}{2} \log \left( \frac{1 + h_{\Lambda} (\Gamma_1^* +  \Gamma_2)}{1+ h_1\Gamma_1^* + h_2 \Gamma_2}
 \right) \right] \label{eq82b} \\
 & =  \frac{1}{2} \log \left( \frac{1 +   h_{\Lambda}(\Gamma_1+ \Gamma_2)}{1 + h_1 \Gamma_1 + h_2 \Gamma_2} \frac{1+ h_1\Gamma_1^* + h_2 \Gamma_2}{1 + h_{\Lambda} (\Gamma_1^* +  \Gamma_2)} \right) \nonumber \\
& \phantom{--}+    \frac{1}{4} \log \left( \frac{1 + h_{\Lambda} (\Gamma_1^* +  \Gamma_2)}{1+ h_1\Gamma_1^* + h_2 \Gamma_2}
 \right) ,
\end{align}
\end{subequations}
where \eqref{eq82b} holds by \eqref{eqfndg}.
Next, we show that $\frac{1}{2} \log \left( \frac{1 +   h_{\Lambda}(\Gamma_1+ \Gamma_2)}{1 + h_1 \Gamma_1 + h_2 \Gamma_2} \frac{1+ h_1\Gamma_1^* + h_2 \Gamma_2}{1 + h_{\Lambda} (\Gamma_1^* +  \Gamma_2)} \right) \geq 0$. We have
\begin{subequations}
\begin{align}
&\frac{1 +   h_{\Lambda}(\Gamma_1+ \Gamma_2)}{1 + h_1 \Gamma_1 + h_2 \Gamma_2} \frac{1+ h_1\Gamma_1^* + h_2 \Gamma_2}{1 + h_{\Lambda} (\Gamma_1^* +  \Gamma_2)} \geq 1 \\
&\Leftrightarrow (1 +   h_{\Lambda}(\Gamma_1+ \Gamma_2)) (1+ h_1\Gamma_1^* + h_2 \Gamma_2) \nonumber \\
& \phantom{--}- (1 + h_1 \Gamma_1 + h_2 \Gamma_2)(1 + h_{\Lambda} (\Gamma_1^* +  \Gamma_2)) \geq 0\\
&\Leftrightarrow \left(\Gamma _1-\Gamma^*_1\right) \left( h_{\Lambda
   }-h_1 + \Gamma _2 
   h_{\Lambda } ( h_2 -h_1)\right) \geq 0.
\end{align}
\end{subequations}
We consider two cases.

\textbf{Case 1}: Assume $(h_2-h_1)\geq 0$. Then $\left(\Gamma _1-\Gamma^*_1\right) \left( h_{\Lambda
   }-h_1 + \Gamma _2 
   h_{\Lambda } ( h_2 -h_1)\right) \geq 0$ because $h_{\Lambda} > h_1$ and $\Gamma_1 \geq \Gamma_1^*$.
   
\textbf{Case 2}: Assume $(h_2-h_1)< 0$. Then 
\begin{subequations}
\begin{align}
&\left(\Gamma _1-\Gamma^*_1\right) \left( h_{\Lambda
   }-h_1 + \Gamma _2 
   h_{\Lambda } ( h_2 -h_1)\right) \nonumber \\
&   \geq \left(\Gamma _1-\Gamma^*_1\right) \left( h_{\Lambda
   }-h_1 + 2 \Gamma _2 
   h_{\Lambda } ( h_2 -h_1)\right)\\
   & \geq 0,
\end{align}
\end{subequations}
by using \eqref{47a} with the fact that $h_{\Lambda} > h_2$ and $ \Gamma_1^* \geq 0$ .

We thus have that $\frac{1}{2} \log \left( \frac{1 +   h_{\Lambda}(\Gamma_1+ \Gamma_2)}{1 + h_1 \Gamma_1 + h_2 \Gamma_2} \frac{1+ h_1\Gamma_1^* + h_2 \Gamma_2}{1 + h_{\Lambda} (\Gamma_1^* +  \Gamma_2)} \right) \geq 0$. We also have $\frac{1}{4} \log \left( \frac{1 + h_{\Lambda} (\Gamma_1^* +  \Gamma_2)}{1+ h_1\Gamma_1^* + h_2 \Gamma_2}
 \right) \geq 0$ by \eqref{eqfndg}.
We deduce that $R_1^* \geq 0$.

Next, we have
\begin{subequations}
\begin{align}
&R_1^* (v)  \nonumber \\
& =  \frac{1}{2} \log \left( \frac{1 +   h_{\Lambda}(\Gamma_1+ \Gamma_2)}{1  + h_1 \Gamma_1 + h_2 \Gamma_2} \right) -   \frac{1}{2} \left[ \frac{1}{2} \log \left( \frac{1 + 2  h_{\Lambda}\Gamma_2}{1 + 2 h_2 \Gamma_2} \right) \right]\\
& =  \frac{1}{4} \log \left[ \frac{ (1 +   h_{\Lambda}(\Gamma_1+ \Gamma_2))^2}{(1  + h_2 \Gamma_2)^2(1  + h_{\Lambda} \Gamma_1)^2} \frac{1 + 2 h_2 \Gamma_2}{1 + 2  h_{\Lambda}\Gamma_2} \right] \nonumber \\
& \phantom{--}+ \frac{1}{2} \log \left( \frac{1+ h_{\Lambda} \Gamma_1}{1 + \Gamma_1 h_1 (1 + h_2 \Gamma_2)^{-1}}\right). \label{eq85c}
\end{align}
\end{subequations}
Then,  we have
\begin{subequations}
\begin{align}
&\frac{ (1 +   h_{\Lambda}(\Gamma_1+ \Gamma_2))^2}{(1  + h_2 \Gamma_2)^2(1  + h_{\Lambda} \Gamma_1)^2} \frac{1 + 2 h_2 \Gamma_2}{1 + 2  h_{\Lambda}\Gamma_2} \leq 1  \nonumber\\
&\Leftrightarrow (1 +   h_{\Lambda}(\Gamma_1+ \Gamma_2))^2 (1 + 2 h_2 \Gamma_2) \nonumber\\
&\phantom{--}- (1  + h_2 \Gamma_2)^2(1  + h_{\Lambda} \Gamma_1)^2 (1 + 2  h_{\Lambda}\Gamma_2)
\leq 0\\
&\Leftrightarrow -\Gamma _2 \left(2 \Gamma _1^2 h_{\Lambda }^3+2 \Gamma _1^2 \Gamma _2^2 h_2^2 h_{\Lambda }^3+4 \Gamma _1^2 \Gamma _2 h_2 h_{\Lambda }^3\right. \nonumber\\
   & \phantom{--}\left.+4 \Gamma _1 \Gamma _2^2 h_2^2
   h_{\Lambda }^2+2 \Gamma _1 h_{\Lambda }^2 + \Gamma _1^2 \Gamma _2 h_2^2 h_{\Lambda }^2+4 \Gamma _1 \Gamma _2 h_2 h_{\Lambda }^2\right. \nonumber\\
   & \phantom{--}\left. +2 \Gamma _1 \Gamma _2 h_2^2
   h_{\Lambda }+\Gamma _2 \left(h_2-h_{\Lambda }\right) \left(2 \Gamma _2 h_2 h_{\Lambda }+h_{\Lambda }+h_2\right)\right)\nonumber\\
   & \phantom{--} \leq 0.
\end{align}
\end{subequations}

By definition of $\Gamma_1^*$ in \eqref{47a} and because $\Gamma_1^* \in ]0, \Gamma_1]$, we have
\begin{align}
  \Gamma_2(h_2 - h_{\Lambda} ) \geq -  \Gamma_1 ( h_{\Lambda} - h_1 + 2 h_{\Lambda} \Gamma_2 (h_2 - h_1)). \label{eq87ff}
\end{align}

Hence, we have
\begin{subequations}
\begin{align}
&-\Gamma _2 \left(2 \Gamma _1^2 h_{\Lambda }^3+2 \Gamma _1^2 \Gamma _2^2 h_2^2 h_{\Lambda }^3+4 \Gamma _1^2 \Gamma _2 h_2 h_{\Lambda }^3\right. \nonumber\\
   & \phantom{}\left.+4 \Gamma _1 \Gamma _2^2 h_2^2
   h_{\Lambda }^2+2 \Gamma _1 h_{\Lambda }^2 + \Gamma _1^2 \Gamma _2 h_2^2 h_{\Lambda }^2+4 \Gamma _1 \Gamma _2 h_2 h_{\Lambda }^2\right. \nonumber\\
   & \phantom{}\left. +2 \Gamma _1 \Gamma _2 h_2^2
   h_{\Lambda }+\Gamma _2 \left(h_2-h_{\Lambda }\right) \left(2 \Gamma _2 h_2 h_{\Lambda }+h_{\Lambda }+h_2\right)\right) \nonumber\\
 &  \leq -\Gamma _2 \left(2 \Gamma _1^2 h_{\Lambda }^3+2 \Gamma _1^2 \Gamma
   _2^2 h_2^2 h_{\Lambda }^3+4 \Gamma _1^2 \Gamma _2 h_2 h_{\Lambda
   }^3 \right. \nonumber \\
   &\left. \phantom{-}+4 \Gamma _1 \Gamma _2^2 h_1 h_2 h_{\Lambda }^2+\Gamma _1   h_{\Lambda } ( h_{\Lambda }  -  h_2 ) +\Gamma _1^2 \Gamma _2 h_2^2 h_{\Lambda }^2\right. \nonumber \\
   &\left. \phantom{-}+2 \Gamma
   _1 \Gamma _2 h_1 h_{\Lambda }^2+\Gamma _1 h_1 h_{\Lambda }+4 \Gamma _1 \Gamma _2 h_1 h_2 h_{\Lambda
   }+\Gamma _1 h_1 h_2\right) \nonumber\\
   & \leq 0,
\end{align}
since $h_{\Lambda }  >  h_2$.
\end{subequations}
We conclude with \eqref{eq85c} that 
\begin{align}
R_1^* (v)   \leq   \frac{1}{2} \log \left( \frac{1+ h_{\Lambda} \Gamma_1}{1 + \Gamma_1 h_1 (1 + h_2 \Gamma_2)^{-1}}\right).
\end{align}

\section{Proof of Theorem \ref{propstar5}.\ref{propstar5ii}} \label{App_propstar6}
We first show that $R_1^* \geq v(\{ 1\})$.
We have
\begin{subequations}
\begin{align}
&R_1^* (v) - v(\{1 \})  \nonumber \\
& =  \frac{1}{2} \log \left( \frac{1 +   h_{\Lambda}(\Gamma_1+ \Gamma_2)}{1  + h_1 \Gamma_1 + h_2 \Gamma_2} \right) -   \frac{1}{2} \left[ \frac{1}{2} \log \left( \frac{1 + 2  h_{\Lambda}\Gamma_2}{1 + 2 h_2 \Gamma_2} \right) \right] \nonumber \\
& \phantom{--}- \frac{1}{2} \log \left(\frac{1 +  \frac{\Gamma_1}{1+ (\sqrt{\Gamma_2}+\sqrt{\Lambda})^2} }{1+ h_1\Gamma_1} \right) \\
& \geq \frac{1}{2} \log \left( \frac{1 +   h_{\Lambda}(\Gamma_1+ \Gamma_2)}{1  + h_1 \Gamma_1 + h_2 \Gamma_2} \right) -   \frac{1}{2} \left[ \frac{1}{2} \log \left( \frac{1 + 2  h_{\Lambda}\Gamma_2}{1 + 2 h_2 \Gamma_2} \right) \right] \nonumber \\
& \phantom{--}- \frac{1}{2} \log \left(\frac{1 +  \frac{\Gamma_1}{1+ \Gamma_2+\Lambda} }{1+ h_1\Gamma_1} \right) \\
& = \frac{1}{4} \log \left[ \frac{(1 +   h_{\Lambda}(\Gamma_1+ \Gamma_2))^2}{(1  + h_1 \Gamma_1 + h_2 \Gamma_2)^2}  \frac{1 + 2 h_2 \Gamma_2}{1 + 2  h_{\Lambda}\Gamma_2}\right] \nonumber \\
&\phantom{--}+\frac{1}{4} \log \left[\frac{(1 +\Gamma_2+\Lambda)^2(1+ h_1\Gamma_1)^2}{(1 + \Gamma_1+ \Gamma_2+\Lambda )^2 }\right]  .
\end{align}
\end{subequations}
Next, we have
\begin{subequations}
\begin{align}
&\frac{(1 +   h_{\Lambda}(\Gamma_1+ \Gamma_2))^2}{(1  + h_1 \Gamma_1 + h_2 \Gamma_2)^2} \! \frac{1 + 2 h_2 \Gamma_2}{1 + 2  h_{\Lambda}\Gamma_2}\! \frac{(1 +\Gamma_2+\Lambda)^2(1+ h_1\Gamma_1)^2}{(1 + \Gamma_1+ \Gamma_2+\Lambda )^2 } \nonumber\\
&\geq 1 \\
&\Leftrightarrow \nonumber\\
&(1 +   h_{\Lambda}(\Gamma_1+ \Gamma_2))^2 (1 + 2 h_2 \Gamma_2) (1 +\Gamma_2+\Lambda)^2(1+ h_1\Gamma_1)^2 \nonumber \\
& -  (1  + h_1 \Gamma_1 + h_2 \Gamma_2)^2 (1 + 2  h_{\Lambda}\Gamma_2)(1 + \Gamma_1+ \Gamma_2+\Lambda )^2  \nonumber\\
&\geq 0\\
&\Leftrightarrow  \tfrac{1}{h_{\Lambda }^2}\Gamma _2 \left(\left(\Gamma _1+\Gamma
   _2\right) h_{\Lambda }+1\right){}^2 \nonumber \\
   & \phantom{l--}\times
   \left(\Gamma _1^2 h_1^2 \left(\Gamma _2
   h_{\Lambda }^2+2 h_2 \left(\Gamma _2
   h_{\Lambda }+1\right){}^2\right)\right. \nonumber \\
   & \left. \phantom{l--}+2 \Gamma
   _1 h_1 \left(\Gamma _2 h_{\Lambda }^2+h_2
   \left(2 \Gamma _2^2 h_{\Lambda }^2+2
   \Gamma _2 h_{\Lambda
   }+1\right)\right) \right. \nonumber\\
   & \left. \phantom{l--}+\Gamma _2
   \left(h_{\Lambda } - h_2\right) \left(h_2
   \left(2 \Gamma _2 h_{\Lambda
   }+1\right)+h_{\Lambda
   }\right)\right) \geq 0.
\end{align}
\end{subequations}
Since $h_{\Lambda } > h_2$, we deduce that $R_1^* \geq v(\{ 1\})$.

We now show that $R_2^* \geq v(\{ 2\})$.
We have
\begin{subequations}
\begin{align}
&R_2^* (v) - v(\{ 2\}) \nonumber\\ 
 & = \frac{1}{2} \left[ \frac{1}{2} \log \left( \frac{1 + 2  h_{\Lambda}\Gamma_2}{1 + 2 h_2 \Gamma_2} \right) \right] \nonumber \\
& \phantom{--}- \frac{1}{2} \log \left(\frac{1 +  \frac{\Gamma_2}{1+ (\sqrt{\Gamma_1}+\sqrt{\Lambda})^2} }{1+ h_2\Gamma_2} \right) \\
  & \geq \frac{1}{2} \left[ \frac{1}{2} \log \left( \frac{1 + 2  h_{\Lambda}\Gamma_2}{1 + 2 h_2 \Gamma_2} \right) \right] - \frac{1}{2} \log \left(\frac{1 +  \frac{\Gamma_2}{1+ \Gamma_1 + \Lambda} }{1+ h_2\Gamma_2} \right) \\
 & \geq \frac{1}{4}\log \left[ \frac{1 + 2  h_{\Lambda}\Gamma_2}{1 + 2 h_2 \Gamma_2}  \frac{(1 +\Gamma_1+\Lambda)^2(1+ h_2\Gamma_2)^2}{(1 + \Gamma_1+ \Gamma_2+\Lambda )^2 } \right].
\end{align}
\end{subequations}

Next, we have
\begin{subequations}
\begin{align}
&\frac{1 + 2  h_{\Lambda}\Gamma_2}{1 + 2 h_2 \Gamma_2}  \frac{(1 +\Gamma_1+\Lambda)^2(1+ h_2\Gamma_2)^2}{(1 + \Gamma_1+ \Gamma_2+\Lambda )^2 }  \geq 1\\
&\Leftrightarrow (1 + 2  h_{\Lambda}\Gamma_2)(1 +\Gamma_1+\Lambda)^2(1+ h_2\Gamma_2)^2 \nonumber\\
&\phantom{--}-(1 + 2 h_2 \Gamma_2)(1 + \Gamma_1+ \Gamma_2+\Lambda )^2  \geq 0\\
&\Leftrightarrow \tfrac{1}{h_{\Lambda }^2} \Gamma _2 \left(2 \Gamma _1^2
   h_{\Lambda }^3+2 \Gamma _1^2 \Gamma
   _2^2 h_2^2 h_{\Lambda }^3+4 \Gamma
   _1^2 \Gamma _2 h_2 h_{\Lambda }^3 \right. \nonumber \\
   & \left. \phantom{-}+4
   \Gamma _1 \Gamma _2^2 h_2^2 h_{\Lambda
   }^2+2 \Gamma _1 h_{\Lambda }^2+\Gamma
   _1^2 \Gamma _2 h_2^2 h_{\Lambda }^2 +4
   \Gamma _1 \Gamma _2 h_2 h_{\Lambda
   }^2\right.  \nonumber \\
  & \left. \phantom{-}+2 \Gamma _1 \Gamma _2 h_2^2
   h_{\Lambda }+\Gamma _2
   \left(h_2-h_{\Lambda }\right)
   \left(h_2 \left(2 \Gamma _2 h_{\Lambda
   }+1\right)+h_{\Lambda
   }\right)\right)\nonumber \\
  &\geq 0.
\end{align}
\end{subequations}
Similar to \eqref{eq87ff}, we have
\begin{align}
  \Gamma_2(h_2 - h_{\Lambda} ) \geq -  \Gamma_1 ( h_{\Lambda} - h_1 + 2 h_{\Lambda} \Gamma_2 (h_2 - h_1)).
\end{align}

We deduce
\begin{subequations}
\begin{align}
 & \tfrac{1}{h_{\Lambda }^2} \Gamma _2 \left(2 \Gamma _1^2
   h_{\Lambda }^3+2 \Gamma _1^2 \Gamma
   _2^2 h_2^2 h_{\Lambda }^3+4 \Gamma
   _1^2 \Gamma _2 h_2 h_{\Lambda }^3 \right. \nonumber \\
   & \left. \phantom{-}+4
   \Gamma _1 \Gamma _2^2 h_2^2 h_{\Lambda
   }^2+2 \Gamma _1 h_{\Lambda }^2+\Gamma
   _1^2 \Gamma _2 h_2^2 h_{\Lambda }^2+4
   \Gamma _1 \Gamma _2 h_2 h_{\Lambda
   }^2 \right. \nonumber \\
   & \left. \phantom{-}+2 \Gamma _1 \Gamma _2 h_2^2
   h_{\Lambda }+\Gamma _2
   \left(h_2-h_{\Lambda }\right)
   \left(h_2 \left(2 \Gamma _2 h_{\Lambda
   }+1\right)+h_{\Lambda
   }\right)\right)\nonumber\\
   & \geq \tfrac{1}{h_{\Lambda }^2} \Gamma _1 \Gamma _2 \left(2 \Gamma _1 \Gamma _2^2 h_2^2 h_{\Lambda }^3+2 \Gamma _1 h_{\Lambda }^3+4 \Gamma _1 \Gamma _2 h_2 h_{\Lambda }^3\right. \nonumber \\
   & \left. \phantom{-}+4 \Gamma _2^2 h_1 h_2 h_{\Lambda }^2+2 \Gamma _2 h_1 h_{\Lambda }^2+\Gamma _1 \Gamma _2 h_2^2 h_{\Lambda }^2\right. \nonumber \\
   & \left. \phantom{-}+4 \Gamma _2 h_1 h_2 
   h_{\Lambda }+h_{\Lambda }(h_{\Lambda }-h_2)+h_1 h_{\Lambda }+h_1 h_2\right) \\
   & \geq 0,
\end{align}
\end{subequations}
since $h_{\Lambda }>h_2$. We conclude that $R_2^* \geq v(\{ 2\})$.

\section{Proof of Lemma \ref{lemfunc}} \label{App_deriv}
For $ x\in\mathbb{R}_+$, we have \eqref{eqderiv1}, \eqref{eqderiv2}, and \eqref{eqderiv3}. For $ x\in [0,c[$, we have \eqref{eqderiv4}.
\begin{figure*}[b!]
\hrulefill
\begin{align} \label{eqderiv1}
 f^{(1)'}_{k,h_1,h_2,a} (x) &=  \frac{ - \left(h_2-h_1\right) k}{\left(h_1 (a+k x)+1\right) \left(h_2 (a+k x)+1\right)} <0,\\
f^{(2)'}_{k,h_1,h_2,a,b} (x) &= \frac{-\left(h_2-h_1\right) k (b-a) \left(h_1 \left(h_2 (a k+b k+2
   x)+1\right)+h_2\right)}{\left(h_1 (a k+x)+1\right) \left(h_2 (a k+x)+1\right) \left(h_1 (b
   k+x)+1\right) \left(h_2 (b k+x)+1\right)} \leq 0,\label{eqderiv2}\\
f^{(3)'}_{k,h_1,h_2,a}(x)  
&=  \frac{ - \left(h_2-h_1\right) k (k+1) x \left(h_1 \left(h_2 (2 a+2 k
   x+x)+1\right)+h_2\right)}{\left(h_1 (a+k x)+1\right) \left(h_1 (a+k x+x)+1\right) \left(h_2
   (a+k x)+1\right) \left(h_2 (a+k x+x)+1\right)}\leq 0,\label{eqderiv3}\\
   f^{(4)'}_{k,h_1,h_2,a,c} (x)  
& = \frac{- \left(h_2-h_1\right) k (k+1) (c-x) \left(h_1 \left(h_2 (2 a+c k+(k+2)
   x)+1\right)+h_2\right)}{\left(h_1 (a+k x+x)+1\right) \left(h_2 (a+k x+x)+1\right) \left(h_1
   (a+c k+x)+1\right) \left(h_2 (a+c k+x)+1\right)} \leq 0,\label{eqderiv4}\\
   f_{h_1,h_2,a,b}^{(6)'} (\omega) &= \frac{-\left(h_2-h_1\right) \left(\frac{(a+b) \log \left(\frac{a h_2+\omega }{a h_1+\omega }\right)}{\left(h_1
   (a+b)+\omega \right) \left(h_2 (a+b)+\omega \right)}-\frac{a \log \left(\frac{h_2 (a+b)+\omega }{h_1
   (a+b)+\omega }\right)}{\left(a h_1+\omega \right) \left(a h_2+\omega \right)}\right)}{\log ^2\left(\frac{a
   h_2+\omega }{a h_1+\omega }\right)},\label{eqderiv6a}\\
   f_{h_1,h_2,a,b}^{(6)'} (\omega) &= \frac{-\left(h_2-h_1\right)(a+b) a \left(g(a) - g(a+b)\right) }{\left(a h_1+\omega \right) \left(a h_2+\omega \right) \left(h_1
   (a+b)+\omega \right) \left(h_2 (a+b)+\omega \right)\log ^2\left(\frac{a
   h_2+\omega }{a h_1+\omega }\right)}.\label{eqderiv6b}
   \end{align}
\begin{align}
A(\omega) &
\triangleq  \left(\frac{b \left(a h_1
   h_2 (a+b)-\omega ^2\right) \log
   \left(\frac{\left(h_1 (a+b+c)+\omega \right)
   \left(h_2 (a+b+c+d)+\omega \right)}{\left(h_2
   (a+b+c)+\omega \right) \left(h_1 (a+b+c+d)+\omega
   \right)}\right)}{\left(a h_1+\omega \right)
   \left(a h_2+\omega \right) \left(h_1 (a+b)+\omega
   \right) \left(h_2 (a+b)+\omega \right)}  \right.  \nonumber \\
   &  \left. \phantom{-----} -\frac{d   \left(h_1 h_2 (a+b+c) (a+b+c+d)-\omega
   ^2\right)\log \left(\frac{\left(a h_1+\omega \right)
   \left(h_2 (a+b)+\omega \right)}{\left(a h_2+\omega
   \right) \left(h_1 (a+b)+\omega \right)}\right) }{\left(h_1 (a+b+c)+\omega \right)
   \left(h_2 (a+b+c)+\omega \right) \left(h_1
   (a+b+c+d)+\omega \right) \left(h_2
   (a+b+c+d)+\omega \right)}\right), \label{eqAomega} \\
   A(\omega) &\leq \frac{C(\omega) \left[h_1 \left(D(\omega)+a c h_2^2 (a+b+c+d)+c \omega ^2\right)+h_1^2 h_2 (a+b) (a+b+c) (b+c+d)+h_2 \omega ^2 (b+c+d)\right]}{B(\omega)} \leq 0 \label{eqcase1}, \\
      A(\omega) &\leq \frac{ C(\omega) \left[h_1 \left(D(\omega)+h_2^2 (a+b) (a+b+c) (b+c+d)+\omega^2 (b+c+d)\right)+a c h_1^2 h_2 (a+b+c+d)+c h_2 \omega ^2\right]}{B(\omega)} \leq 0 \label{eqcase2}, \\
      A(\omega) &\leq \frac{C(\omega) \! \left[h_1 \left(D(\omega)+h_2^2 (a+b) (b+c) (a+b+c+d)+\omega^2 (b+c)\right)+a h_1^2 h_2 (c+d) (a+b+c)+h_2 \omega^2 \! (c+d)\right]}{B(\omega)} \leq 0 \label{eqcase3}, \\
      A(\omega) &\leq \frac{ C(\omega) \! \left[h_1 \left(D(\omega)+a h_2^2 (c+d) (a+b+c)+\omega^2 \! (c+d)\right)+h_1^2 h_2 (a+b) (b+c) (a+b+c+d)+h_2 \omega^2 \! (b+c)\right]}{B(\omega)} \leq 0 \label{eqcase4} .
\end{align}
\end{figure*}

\section{Proof of Lemma \ref{lemfunc2}} \label{App_deriv2}

To prove that the two functions in Lemma \ref{lemfunc2} are non-decreasing, we show that their derivatives are non-negative. 

For $\omega \in \mathbb{R}_+$, we have
\begin{align}
&f_{h_1,h_2,a,b,c,d}^{(5)'}(\omega) = \frac{-\left(h_2-h_1\right)}{\log
   ^2\left(\frac{\left(a h_1+\omega \right) \left(h_2
   (a+b)+\omega \right)}{\left(a h_2+\omega \right)
   \left(h_1 (a+b)+\omega \right)}\right)} \times A(\omega),
    \end{align}
  where $A(\omega)$ is defined in \eqref{eqAomega}.

It is thus sufficient to show that $A(\omega)\leq 0$ to obtain $f_{h_1,h_2,a,b,c,d}^{(5)'}(\omega) \geq 0$. We  consider four cases and define for convenience 
\begin{align*}
B(\omega)&\triangleq \left(a
   h_1+\omega \right) \left(a h_2+\omega \right) \left(h_1 (a+b)+\omega \right)  \\
&\phantom{--}\times\left(h_2 (a+b)+\omega \right)\left(h_1 (a+b+c)+\omega \right) \\
&\phantom{--}\times   \left(h_2 (a+b+c)+\omega \right)  \left(h_1 (a+b+c+d)+\omega \right)
   \\
&\phantom{--}\times\left(h_2 (a+b+c+d)+\omega \right),\\
   C(\omega)&\triangleq -b d \left(h_2-h_1\right) \omega ^2,\\
   D(\omega)&\triangleq 2 h_2 \omega  (a (b+2 c+d)+(b+c) (b+c+d)).
   \end{align*}

\textbf{Case 1}: Assume that $d   (h_1 h_2 (a+b+c) (a+b+c+d)-\omega
   ^2) \geq 0$ and $b \left(a h_1
   h_2 (a+b)-\omega ^2\right) \geq 0$.
Using that $\forall x \in \mathbb{R}_+^*, 1-x^{-1}\leq \log(x) \leq x-1$, we have \eqref{eqcase1}.

\textbf{Case 2}: Assume that $d   (h_1 h_2 (a+b+c) (a+b+c+d)-\omega
   ^2) < 0$ and $b \left(a h_1
   h_2 (a+b)-\omega ^2\right) < 0$.
Using that $\forall x \in \mathbb{R}_+^*, 1-x^{-1}\leq \log(x) \leq x-1$, we have \eqref{eqcase2}.

\textbf{Case 3}: Assume that $d   (h_1 h_2 (a+b+c) (a+b+c+d)-\omega
   ^2) < 0$ and $b \left(a h_1
   h_2 (a+b)-\omega ^2\right) \geq 0$.
Using that $\forall x \in \mathbb{R}_+^*,  \log(x) \leq x-1$, we have \eqref{eqcase3}.

\textbf{Case 4}: Assume that $d   (h_1 h_2 (a+b+c) (a+b+c+d)-\omega
   ^2) \geq 0$ and $b \left(a h_1
   h_2 (a+b)-\omega ^2\right) < 0$.
Using that $\forall x \in \mathbb{R}_+^*, 1-x^{-1}\leq \log(x) $, we have \eqref{eqcase4}.

Next, for $\omega \in \mathbb{R}_+$, we have \eqref{eqderiv6a}, which we rewrite as \eqref{eqderiv6b}, 
where we have defined
\begin{equation}
g : \mathbb{R_+^*} \to \mathbb{R}, x \mapsto \frac{(x h_1 + \omega)(x h_2 + \omega) \log \left( \frac{x h_2 + \omega}{xh_1 + \omega} \right)}{x}.
\end{equation}
   
Hence, to show that $f_{h_1,h_2,a,b}^{(6)'} (\omega) \geq 0 $, it is sufficient to show that $g$ is non-decreasing. We have for $x \in \mathbb{R_+^*}$,
\begin{equation}
g'(x) =\frac{\left(h_1 h_2 x^2-\omega ^2\right) \log \left(\frac{h_2 x+\omega }{h_1 x+\omega }\right)+ x\omega (h_2 -h_1) }{x^2}.
\end{equation}

We now show that $g'(x) \geq 0$. We consider two cases.

\textbf{Case 1}: Assume $\left(h_1 h_2 x^2-\omega ^2\right) <0$. Using that $\forall x \in \mathbb{R}_+^*, \log(x) \leq x-1$, we lower bound $g'(x)$ by
\begin{equation}
\frac{h_1 \left(h_2-h_1\right) \left(h_2 x+\omega \right)}{h_1 x+\omega } \geq 0.
\end{equation}

\textbf{Case 2}: Assume $\left(h_1 h_2 x^2-\omega ^2\right) \geq 0$. Using that $\forall x \in \mathbb{R}_+^*, 1-x^{-1}\leq \log(x)$, we lower bound $g'(x)$ by
\begin{equation}
\frac{\left(h_2-h_1\right) h_2 \left(h_1 x+\omega \right)}{h_2 x+\omega } \geq 0.
\end{equation}

\bibliographystyle{IEEEtran}
\bibliography{bib}

\end{document}